\documentclass [11pt]{article}
\usepackage{amsmath,amsthm,amsfonts,amscd,eucal,latexsym,amssymb}
\usepackage{epsfig,xcolor,hyperref}  

\input xy
\xyoption{all}
\oddsidemargin 0cm      
\evensidemargin 0cm     
\headheight 0pt         
\headsep 0pt            
\textheight 20cm        
\textwidth 16cm         



\def\sp{\hskip -5pt} 
\def\spa{\hskip -3pt} 


\def\K{{\ca K}}
\def\l{{\lambda}}
 
\newsymbol\bt 1202           

\def\cB{{\ca B}}
\def\cI{{\ca I}}
\def\cF{{\ca F}}

\def\cH{{\ca H}}

\def\cS{{\ca S}}
\def\cA{{\ca A}}
\def\cF{{\ca F}}

\def\cC{{\ca C}}

\def\cW{{\ca W}}
\def\cK{{\ca K}}

\def\sA{{\mathsf A}}

\def\sO{{\mathsf O}}

\def\sR{{\mathsf R}}

\def\sV{{\mathsf V}}

\def\bC{{\mathbb C}}           

\def\bM{{\mathbb M}} 
\def\NN{{\mathbb N}} 
\def\bR{{\mathbb R}}

\def\bS{{\mathbb S}} 
\def\RR{{\mathbb R}}

\newsymbol \rest 1316         
 

\def\beq{\begin{eqnarray}}
\def\eeq{\end{eqnarray}}


\newcommand{\ca}[1]{{\cal #1}}         

\def\bx{{\boldsymbol x}}
\def\by{{\boldsymbol y}}
\def\bn{{\boldsymbol n}}

\def\supp{{\rm supp\,}}


\def\bSf{{\mathbb S}^2}  
\def\scri{\Im^+}         
\def\scrip{\Im^-}         
\def\tg{\widetilde{g}}
\def\tM{\widetilde{M}}
\def\tphi{\widetilde{\phi}}

\def\bms{G_{BMS}}

\def\Span{{\mathrm{Span}\,}}


\newtheorem{theorem}{Theorem}[section]

\newtheorem{corollary}[theorem]{Corollary}
\newtheorem{lemma}[theorem]{Lemma}
\newtheorem{definition}[theorem]{Definition}

\newtheorem{proposition}[theorem]{Proposition}
\theoremstyle{definition}

\newtheorem{remark}[theorem]{Remark}


\begin{document}

 
\par 

\title{The modular Hamiltonian in asymptotically flat spacetime conformal to Minkowski} 
 
 

\author{ Claudio Dappiaggi\thanks{CD:
		Dipartimento di Fisica,
		Universit\`a degli Studi di Pavia \& INFN, Sezione di Pavia, \& Indam, Sezione di Pavia
		Via Bassi 6,
		I-27100 Pavia,
		Italia;
		\mbox{claudio.dappiaggi@unipv.it}} \and
	Vincenzo Morinelli\thanks{VM:
		Dipartimento di Matematica, Universit\`a di Roma Tor Vergata, Via della Ricerca Scientifica, 1, 00133 Roma (RM), Italy;
		\mbox{morinell@mat.uniroma2.it}}	
	 \and Gerardo Morsella\thanks{GM:
	 	Dipartimento di Matematica, Universit\`a di Roma Tor Vergata, Via della Ricerca Scientifica, 1, 00133 Roma (RM), Italy;
	 	\mbox{morsella@mat.uniroma2.it}}
	 	\and Alessio Ranallo\thanks{AR:
	 		Section de math\'ematiques, Universit\'e de Gen\`eve,
	 		7-9, rue du Conseil-G\'en\'eral
	 		CH-1211 Gen\`eve 4,
	 		Switzerland;
\mbox{alessio.ranallo@unige.ch}}}
 
\maketitle

\small 
\noindent {\bf Abstract}. {We consider a four-dimensional globally hyperbolic spacetime $(M,g)$ conformal to Minkowski spacetime, together with a massless, conformally coupled scalar field. Using a bulk-to-boundary correspondence, one can establish the existence of an injective $*$-homomorphism $\Upsilon_M$ between $\mathcal{W}(M)$, the Weyl algebra of observables on $M$ and a counterpart which is defined intrinsically on future null infinity $\scri\simeq\mathbb{R}\times\mathbb{S}^2$, a component of the conformal boundary of $(M,g)$. Using invariance under the asymptotic symmetry group of $\scri$, we can individuate thereon a distinguished two-point correlation function whose pull-back to $M$ via $\Upsilon_M$ identifies a quasi-free Hadamard state for the bulk algebra of observables. In this setting, if we consider $\mathsf{V}^+_x$, a future light cone stemming from $x\in M$ as well as $\mathcal{W}(\mathsf{V}^+_x)=\mathcal{W}(M)|_{\mathsf{V}^+_x}$, its counterpart at the boundary is the Weyl subalgebra generated by suitable functions localized in $\mathsf{K}_x$, a positive half strip on $\Im^+$. To each such cone, we associate a standard subspace of the boundary one-particle Hilbert space, which coincides with the one associated naturally to $\mathsf{K}_x$. We extend such correspondence replacing $\mathsf{K}_x$ and $\mathsf{V}^+_x$ with deformed counterparts, denoted by $\mathsf{S}_C$ and $\mathsf{V}_C$. In addition, since the one particle Hilbert space at the boundary decomposes as a direct integral on the sphere of $U(1)$-currents defined on the real line, we prove that also the generator of the modular group associated to the standard subspace of $\mathsf{V}_C$ decomposes as a suitable direct integral. This result allows us to study the relative entropy between coherent states of the algebras associated to the deformed cones $\mathsf{V}_C$ establishing the quantum null energy condition.}

\bigskip

\noindent{\bf MSC Classification:} 81T05, 81T20, 81P45

\normalsize
 \bigskip

\tableofcontents

\section{Introduction}
Algebraic Quantum Field Theory (AQFT) aims at describing quantum and relativistic models with an infinite number of degrees of freedom using the language and methods of von Neumann algebras. This provides in particular several techniques for the mathematically rigorous analysis of physical quantities such as entropy and energy inequalities in QFT. 

These have become the subject of increasing interest in the last decades, stemming mainly from the fact that relations involving entropic quantities, such as the Bekenstein bound, the generalized second law of thermodynamics, and several energy inequalities, play a key role both in QFT in presence of black holes and in the quest to develop a quantum theory of gravity (see, e.g.,~\cite{Cas08, Wal12, FLPW16} and references therein). Yet, a precise mathematical formulation of these concepts turned out to require operator algebraic methods. 

Among various entropic quantities in quantum theory, relative entropy plays a fundamental role. It measures the ``additional'' average information to be gained from the outcomes of an experiment performed on a system assumed to be in a state $\psi$, but which is actually in a different state $\phi$. For quantum systems which can be described by a type I von Neumann algebra it can be expressed in terms of the density matrices $\rho_\phi$, $\rho_\psi$ as
\[
S(\phi \| \psi)=\operatorname{Tr}(\rho_\phi (\ln\rho_\phi-\ln\rho_\psi)).
\]

However, typically the local von Neumann algebras of QFT are type III and normal states can not be represented as density matrices, which entails that the above formula becomes meaningless. Notwithstanding, a more general definition in terms of the relative modular operator of the two states is possible, as shown by Araki \cite{Ar}. 

Moreover, for coherent states of free field theories, relative entropy is a  first quantization quantity. 
Indeed, it is possible to define the modular group of a given suitable real subspace of the one particle space (a standard subspace). The associated second quantization turns out to be the  modular group of the underlying von Neumann algebra with respect to the vacuum~\cite{LRT78}. This leads to the concept of entropy of a vector with respect to a standard subspace, which directly connects with the relative entropy between the corresponding coherent state and the vacuum~\cite{CLR20}.

At the same time,  the Bisognano-Wichmann property claims that the modular groups 
of the algebras of suitable regions 
have a geometrical interpretation, namely they are implemented by suitable one parameter groups of symmetries of the underlying spacetime. The prominent example is the implementation of the modular group of the Rindler wedge by Lorentz boosts. Thus, the interplay between the Tomita--Takesaki modular theory and the Bisognano--Wichmann property allowed in recent years the study and the computation of several entropic quantities in QFT \cite{CTT17, Wit18, LX18, Llocst, CLR20, MTW,  CLRR22, LM23, LM24, LPM25}. 

Most of these examples refer to models where there is a rich structure of  geometrical symmetries of the spacetime, that allows to use the Bisognano-Wichmann property. This fails in a general curved spacetime, where the lack of symmetries prevents a natural definition of a unique vacuum state. Yet the algebraic approach to quantum field theory has proven to be the natural framework to bypass this hurdle, see \cite{Brunetti:2015} for a review. More precisely, whenever one considers a free field theory on a globally hyperbolic spacetime, it is possible to associate to it unambiguously a C$^*$-algebra of observables which encompasses structural properties, such as dynamics, causality and canonical commutation/anti-commutation relations. The standard probabilistic interpretation proper of quantum theories is recovered subsequently from the GNS theorem once an algebraic state has been chosen. As mentioned above, the lack of isometries in a general scenario entails that one cannot hope to individuate a counterpart of the Poincar\'e vacuum on every background. However, one can single out a distinguished class of states whose elements are characterized by the property of abiding by the Hadamard condition, which is a constraint on the singular structure of the underlying two-point correlation function, see \cite{Radzikowski:1996pa,Radzikowski:1996ei}. 

This feature entails several far-reaching properties among which noteworthy are the existence of a local and covariant notion of Wick polynomials and the guarantee that the quantum fluctuations of all observables are finite. Despite the structural advantages of considering Hadamard states, for several years a weak point has been the lack of explicit examples unless one considers highly symmetric spacetimes. Yet many interesting scenarios are not falling in this category, a notable example being the Schwarzschild solution of the vacuum Einstein's equations which describes the exterior, static region of a spherically symmetric black hole. This represents one specific instance of a vast class of backgrounds, which are characterized heuristically speaking by the property of tending along all lightlike directions to the Minkowski metric at infinity. These spacetimes are called asymptotically flat and, when $(M,g)$ is also globally hyperbolic, it can be embedded as an open submanifold of a larger, globally hyperbolic spacetime $(\widetilde{M},\widetilde{g})$ such that $\widetilde g$ is conformally related to $g$ on $M$, see {\it e.g.} \cite{Friedrich, Wald}. The (conformal) boundary of $M$ into $\widetilde M$ includes a distinguished component, $\scri$, the {\em future null infinity} of $M$. Such asymptotic structure enjoys several notable properties which will be outlined in the main body of this work. Among these, we highlight that $\Im^+$ is universal, namely its geometric structure is exactly the same on every asymptotically flat spacetime. 
On the one hand, this implies the existence of a distinguished subgroup of the diffeomorphism group of $\Im^+$, which preserves the relevant, geometric data: the Bondi-Metzner-Sachs (BMS) group. On the other hand, if we consider a massless, conformally coupled real scalar field on any asymptotically flat spacetime $(M,g)$ and its associated Weyl algebra of observables $\mathcal{W}(M)$, we can use the underlying causal propagation to find an injective $*$-homomorphism from $\mathcal{W}(M)$ into $\mathcal{W}(\Im^+)$. This is a Weyl algebra built out of a symplectic vector space of functions living intrinsically on $\Im^+$.

In \cite{Dappiaggi:2005ci, Dappiaggi:2009fx, Dappiaggi:2017kka} the BMS group has been exploited to single out on $\Im^+$ a BMS-invariant bi-distribution with a prescribed singular structure. On the one hand, this identifies a quasi-free state for $\mathcal{W}(\Im^+)$. On the other hand, by means of the injective map from the bulk to the boundary algebra, we can induce a unique, quasi-free, Hadamard state on $\mathcal{W}(M)$ which is, in turn, invariant under the action of all background isometries. Consequently both a reference state and then the associated net of von  Neumann algebras in the bulk are obtained from boundary data. {Similar ideas have also appeared in~\cite{Ho00}.}

On account of the universal structure both of future null infinity and of the field theoretic data thereon, it is natural to try to combine modular theory techniques and this holographic approach to study entropy quantities and energy inequalities for QFTs on any asymptotically flat spacetime starting from an analysis on $\scri$. As a first step in this project, in the present paper we focus on spacetimes which are conformal to Minkowksi since, on this class of backgrounds, we can obtain explicit formulae and test the effectiveness of our approach. A prominent and physically relevant example in this family is the cosmological de Sitter spacetime. More precisely we start with a globally hyperbolic spacetime $(M,g) =(\RR^4,\chi^2\eta)$, with $\chi$ a strictly positive function on $M$, while $\eta$ is the Minkowksi metric. 
Due to the structure of the BMS-invariant state on $\scri$, we describe the boundary QFT one particle space as a direct integral on the 2-sphere of  $U(1)$-currents. Subsequently, making use of the Huygens' principle, we can identify the bulk and the asymptotic theory of one particle states in forward light cones of $M$. On account of the Bisognano-Wichmann property of the $U(1)$-current we can give an explicit first quantization formula for the modular Hamiltonian of deformed forward light-cones, a counterpart of Equation (3.26) in \cite{CTT17}. Moreover, we compute the relative entropy for the von Neumann algebra associated to these regions between coherent states. This allows us to check the quantum null energy condition (QNEC) and  the strong super-additivity of the relative entropy in analogy with \cite{MTW}. Finally, we give an expression for the relative entropy in terms of the classical traceless stress-energy tensor associated to the corresponding solution of the wave equation on $M$.

{The paper is structured as follows: First of all we focus on recalling those geometric and analytic definitions that play a key role in the algebraic approach to relative entropy and energy inequalities. In Section 2 we recall the standard subspace structure, the $U(1)$-current model as well as the entropy of a wave, a first quantization quantity strongly related to the relative entropy between coherent states. Subsequently we introduce the geometric data at the heart of our work. In Section 3 we discuss the quantum fields on $\scri$ and  on $M$ which we are interested in comparing. In Section 4 we analyse the structure of the net of real subspaces of the one particle space on $\scri$, and discuss its relation with the corresponding net of the one particle space on $M$ for suitable regions, via the holographic map. This is the used to compute the one-particle modular Hamiltonian and the relative entropy quantities for such regions. Lastly, in an outlook, we discuss how these techniques could be extended to general asymptotic flat spacetimes, highlighting difficulties and perspectives.

Huzihiro Araki made fundamental contributions to the theory of real subspaces of complex Hilbert spaces and of the associated second quantization von Neumann algebras \cite{Ara63}, as well as to the concept of relative entropy between normal states on arbitrary von Neumann algebras \cite{Ar}. Both ideas have been highly influential and underpin the central themes of this work.

}

\section{Preliminaries}

In this section we give a succinct survey of the key structures at the heart of this paper.

\subsection{Standard subspaces}\label{Sec: Standard Subspaces}
{We recall here the main results of the modular theory of standard subspaces of a complex Hilbert space, cf.~\cite{LRT78} and \cite[Ch.\ 2]{Lo08} for further details and proofs.} 

Let $\cH$ be a complex Hilbert space {with scalar product denoted by $\langle\cdot,\cdot\rangle$}, whereas $H\subset\cH$ is a closed real subspace. We say that $H$ is \textit{cyclic} if $H+iH$ is dense in $\cH$, \textit{separating} if $H\cap iH=\{0\}$ and \textit{standard} if it is cyclic and separating. 
Real subspaces come endowed with a symplectic complement 
$$H'=\{h\in\cH:\Im\langle h,k\rangle=0,\forall k\in H \},$$
and $H\subset \cH$ is called \textit{factorial} if $H\cap H'=\{0\}$.
Given a standard subspace $H\subset\cH$ the Tomita operator associated to $H$ is defined as
$$\cS_H:H+iH\ni h+ik\longmapsto h-ik\in H+iH.$$ 
This is a closed antilinear involution whose polar decomposition reads 
$$\cS_H=J_H\Delta^{1/2}_H.$$
Here $J_H$ is an anti-unitary operator, called modular conjugation, while $\Delta_H$, called modular operator, is {positive} and self-adjoint on $\cH$. In addition they abide by the following properties:
$$J_H H=H',\qquad \Delta^{it}_H H=H,\qquad J_H\Delta_H J_H=\Delta_H^{-1}.$$
{The operator $\log \Delta_H$ is called the \textit{modular Hamiltonian} associated to $H$.}

We say that a pair of standard subspaces $K\subset H$ is a {\em half-sided modular inclusion} if
\begin{align}\label{Eq: Half Sided Modular Inclusion}
\Delta_{H}^{-it}K\subset K \ \text{ for } t\geq 0.
\end{align}
These are noteworthy since they {entail the existence of} a positive energy representation of translations and dilations constructed as follows. We denote by $\mathbf{P}$ the group of affine transformations of $\RR$, that is dilations act as
$\delta(s) x=e^{s}x$,
while translations as
$\tau(s) x = x + s$, $x{, t, s} \in \RR$. In this framework the point $x=1$ is fixed with respect to the combined action
$\delta_1(s)x=e^{s}(x-1)+1$. {A unitary, strongly continuous representation of $\mathbf{P}$ on a complex Hilbert space is said to be of positive energy if the self-adjoint generator of the translations is a positive operator. One has the following key result~\cite[Thm.\ 2.4.1]{Lo08}.}
\begin{theorem}{\label{theorem-1-standard}
Given $K\subset H$, a half-sided modular inclusion of standard subspaces of a complex Hilbert space $\cH$, there exists a positive energy representation {U} of $\mathbf{P}$ {on $\cH$ such that}
\[
 U(\delta(2\pi t))=\Delta_H^{-it},\qquad U(\delta_1(2\pi t))=\Delta_K^{-it}{,\qquad t \in \RR}.
\]
Furthermore, the generator $P$ of the translation group is $\frac{1}{2\pi}\left(\log(\Delta_K)- \log(\Delta_H) \right)$ and it holds that $\log(\Delta_{U(\tau(s))H}) = \operatorname{Ad} U(\tau(s))(\log(\Delta_H)) = \log(\Delta_H) + 2\pi sP$.
}
\end{theorem}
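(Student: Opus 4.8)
The plan is to recognize Theorem~\ref{theorem-1-standard} as the one-particle (standard subspace) form of Wiesbrock's theorem on half-sided modular inclusions, \cite[Thm.\ 2.4.1]{Lo08}, and to reconstruct the representation of $\mathbf{P}$ out of the modular data $\Delta_H,\Delta_K,J_H,J_K$ attached to the pair $K\subset H$. The argument splits into four steps: (i) from the inclusion $\Delta_H^{-it}K\subset K$ ($t\ge0$) produce a one-parameter unitary group $s\mapsto U(\tau(s))$ of ``translations'' with $U(\tau(1))H=K$ satisfying $\Delta_H^{-it}U(\tau(s))\Delta_H^{it}=U(\tau(e^{2\pi t}s))$; (ii) show that the self-adjoint generator $P$ of $U(\tau(\cdot))$ is a positive operator; (iii) glue $U(\tau(\cdot))$ together with the modular flow $\Delta_H^{-it}=U(\delta(2\pi t))$ into a unitary representation of $\mathbf{P}$; (iv) extract the remaining identities by elementary computations in the affine Lie algebra, using modular covariance of standard subspaces.

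For step (i), the starting remark is that for any unitary $W$ one has $\cS_{WH}=W\cS_H W^{*}$, hence $\Delta_{WH}=W\Delta_H W^{*}$ and $J_{WH}=WJ_H W^{*}$; thus a candidate translation $U(\tau(1))$ implementing the passage from $H$ to $K$ must satisfy $\Delta_K=U(\tau(1))\Delta_H U(\tau(1))^{*}$. Such a unitary is built from the relative modular object $t\mapsto\Delta_H^{it}\Delta_K^{-it}$ (equivalently, from $J_HJ_K$ conjugated by the modular flow of $H$), and the inclusion hypothesis, together with the analyticity it carries, is precisely what forces these unitaries to compose into a genuine one-parameter group with $U(\tau(1))H=K$ and the stated covariance with $\Delta_H^{it}$. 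Step (iii) is then automatic: the covariance relation of step (i) says exactly that $U(\delta(2\pi t)):=\Delta_H^{-it}$ and $U(\tau(s)):=e^{isP}$ assemble to a representation of the affine group $\mathbf{P}$, which is of positive energy once step (ii) is settled.

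The crux, and the step I expect to be the main obstacle, is step (ii): the positivity of $P$, which is ``morally'' $\tfrac1{2\pi}(\log\Delta_K-\log\Delta_H)$ but is not literally the difference of these two unbounded operators and must be handled with care about domains. This is the Borchers--Wiesbrock analyticity input: from $\Delta_H^{-it}K\subset K$ for $t\ge0$ one deduces that suitable matrix elements of $t\mapsto\Delta_H^{it}\Delta_K^{-it}$ (and of the translation unitaries derived from them) extend to bounded analytic functions on a horizontal strip, and a Phragm\'en--Lindel\"of / Paley--Wiener type estimate on these continuations then confines the spectrum of $P$ to $[0,\infty)$. This is exactly the mechanism behind~\cite[Thm.\ 2.4.1]{Lo08} (and Wiesbrock's original argument, with the simplifications of Araki--Zsid\'o), and the place where all the real work sits.

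Granted (i)--(iii), step (iv) is a short computation. Differentiating $\Delta_H^{-it}U(\tau(s))\Delta_H^{it}=U(\tau(e^{2\pi t}s))$ first in $s$ and then in $t$ at the origin gives the affine commutator $[\log\Delta_H,P]=2\pi iP$, whence $[P,[P,\log\Delta_H]]=0$ and the Baker--Campbell--Hausdorff series truncates after one term: $\operatorname{Ad}U(\tau(s))(\log\Delta_H)=e^{isP}\log\Delta_H e^{-isP}=\log\Delta_H+2\pi sP$. Combining this at $s=1$ with modular covariance $\log\Delta_{U(\tau(1))H}=\operatorname{Ad}U(\tau(1))(\log\Delta_H)$ and with $U(\tau(1))H=K$ yields $\log\Delta_K=\log\Delta_H+2\pi P$, i.e.\ $P=\tfrac1{2\pi}(\log\Delta_K-\log\Delta_H)$, as well as $\log\Delta_{U(\tau(s))H}=\log\Delta_H+2\pi sP$. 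Finally, from $\delta_1(r)=\tau(1)\delta(r)\tau(-1)$ one gets $U(\delta_1(2\pi t))=U(\tau(1))\Delta_H^{-it}U(\tau(1))^{*}=\Delta_{U(\tau(1))H}^{-it}=\Delta_K^{-it}$, which is the last asserted identity.
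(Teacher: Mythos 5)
The paper offers no proof of this statement at all: it is quoted verbatim from the literature, with the attribution ``\cite[Thm.~2.4.1]{Lo08}''. Measured against that, your proposal is a faithful roadmap of the standard Wiesbrock--Longo argument, but it is not itself a proof, because the two steps that constitute the actual content of the theorem are asserted rather than established. In step (i) you never actually define $U(\tau(s))$ for general $s$ nor prove the group law: the standard construction sets $U\bigl(\tau(1-e^{-2\pi t})\bigr):=\Delta_H^{it}\Delta_K^{-it}$ and then must show that these unitaries satisfy the correct composition rule, extend to a strongly continuous one-parameter group over all of $\RR$, and obey the dilation covariance $\Delta_H^{-it}U(\tau(s))\Delta_H^{it}=U(\tau(e^{2\pi t}s))$ together with $U(\tau(1))H=K$; your text says the inclusion hypothesis ``forces'' this, which is precisely what has to be proved. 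In step (ii) you correctly identify the Borchers--Wiesbrock analytic-continuation mechanism as the crux, but you do not carry out any of it (no specification of which matrix elements are continued, on which strip, or how the Phragm\'en--Lindel\"of bound yields $\mathrm{sp}(P)\subset[0,\infty)$); instead you defer to ``the mechanism behind \cite[Thm.~2.4.1]{Lo08}'', i.e.\ to the very result being proved. As it stands the proposal is therefore circular at its core: it reduces the theorem to the citation the paper already gives.

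Steps (iii) and (iv) are essentially fine once (i)--(ii) are granted, with one caveat: the passage from the covariance relation to $[\log\Delta_H,P]=2\pi iP$ and the truncated Baker--Campbell--Hausdorff identity involves commutators of unbounded operators and cannot be justified by formal differentiation alone. The clean route is to avoid BCH altogether: from $\Delta_H^{-it}U(\tau(s))\Delta_H^{it}=U(\tau(e^{2\pi t}s))$ one gets $\operatorname{Ad}U(\tau(s))(\Delta_H^{-it})=\Delta_{U(\tau(s))H}^{-it}$ by modular covariance of standard subspaces, and the identity $\log\Delta_{U(\tau(s))H}=\log\Delta_H+2\pi sP$ then follows from the representation theory of the affine group with positive energy (or a Trotter-type argument on the generators), after which $P=\frac{1}{2\pi}(\log\Delta_K-\log\Delta_H)$ and $U(\delta_1(2\pi t))=\Delta_K^{-it}$ follow exactly as you indicate. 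So the final computations are acceptable in outline, but the theorem's analytic heart is missing from your write-up.
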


\subsection{Entropy of a vector} 
Given a factorial standard  subspace $H$ of $\cH$, the \emph{cutting projection} associated to $H$ is
\[
P_H(\phi+\phi')= \phi \ , \qquad \phi\in H, \ \phi'\in H' \, .
\]
It turns out {\cite{CLR20}} that $P_H$ is a densely defined, closed, real linear operator satisfying  
\[
P^2_H = P_H \  \ , \ \ P_H^* =P_{iH}  = -iP_H i\ \ , \ \  P_H\Delta^{is}_H=
\Delta^{is}_H P_H\;\forall s\in\bR \ .
\]
Whenever $H$ is a factorial standard subspace, the \emph{entropy} of a vector $\phi\in\cH$ with respect to $H$ is 
\begin{equation}\label{Sf}
S_H(\phi)= -\Im\langle\phi, P_H i\log\Delta_H \phi\rangle{,}
\end{equation}
{or, more precisely, $S_H$ is the quadratic form associated to the real self-adjoint operator $P_Hi \log\Delta_H$, while $S_H(\phi) := +\infty$ if $\phi$ is not in the form domain of this operator.} If $H$ does not fall in the class considered here, refer to \cite{CLRR22}, although, in this paper, we shall not {need} these cases. Given, $\phi, \phi_n, \psi\in\cH$ as well as $H,H_n$, {$n \in \NN$, }closed, real linear subspaces of $\cH$, the entropy enjoys the following properties \cite{CLR20, LM23}:
\begin{itemize}
\item \textit{Positivity:} $S_H(\phi)\geq 0$ or $S_H(\phi) = +\infty$;
\item \textit{Monotonicity:} If $K \subset H$, then $S_K(\phi)\leq S_H(\phi)$;
\item \textit{Lower semicontinuity:} If $\phi_n  \to \phi$, then $S_H(\phi) \leq \liminf_n S_H(\phi_n)$;
\item \textit{Monotone continuity: }If $H_n\subset H$ is an increasing sequence with $\overline{\bigcup_n H_n} = H$, then $S_{H_n}(\phi) \to S_H(\phi)$;
\item \textit{Locality:} $S_H(\phi+\psi) = S_H(\phi)$ if $\psi\in H'$, moreover $S_H(\phi)= 0$ if and only if ${\phi}\in H'$;
\item \textit{Unitary invariance:} $S_{UH}(U \phi) = S_H(\phi)$ for any unitary operator $U$ on $\cH$.
\end{itemize}

{The entropy of a vector }with respect to a standard subspace is directly connected to Araki's relative entropy between coherent states on the corresponding second quantization von Neumann algebra, as we now briefly explain. We refer the reader to App.~\ref{app:secondquant} for our notations and conventions on second quantization.

Given a von Neumann algebra $\cA$ on $\cH$ and two cyclic and separating vectors $\xi,\eta\in\cH$, they identify two normal and faithful states {$\omega$} and {$\varphi$} on $\cA$. In turn one can construct the relative modular data out of $\cS_{{\xi,\eta}}$, that is the closure of the anti-linear map {$\cA\eta\ni a\eta\mapsto a^*\xi\in\cA\xi$}. Its polar decomposition {$\cS_{\xi,\eta}=J_{\xi,\eta}\Delta_{\xi,\eta}^{1/2}$} identifies the relative modular conjugation $J_{{\xi,\eta}}$ and the relative modular operator $\Delta_{{\xi,\eta}}$. In view of \cite{Ar}, Araki's relative entropy reads
$${S_\cA(\varphi \| \omega):=-\langle\eta,\log\Delta_{\xi,\eta} \eta\rangle.}$$

Let us consider {the second quantization} von Neumann algebra $R(H) \subset \mathcal B(\cF(\cH))$ as per Equation~\eqref{Eq: VN algebras} {in App.\ \ref{app:secondquant}}, associated to $H$, a standard subspace of $\cH$. We call {\em coherent} a normal state thereon of the form $\omega_\psi:=\omega\circ\operatorname{Ad}W(\psi)$,
where $\psi\in\cH$ while $\omega$ is the Fock vacuum. As shown in \cite[Prop.\ 4.2]{CLR20}, it holds that
\begin{equation}\label{eq:Ent2}
 S_{R(H)}(\omega_\psi \|\omega)=S_H(\psi).
\end{equation}
Observe that Equation~\eqref{eq:Ent2} has a much wider range of applicability than one could expect at first glance since, for all $\phi,\psi\in \cH$, see \cite[Sec. 3]{Llocst},

\begin{equation}\label{eq:cost}
S_{R(H)}(\omega_\phi \| \omega_\psi)=S_{R(H)}(\omega_{\phi-\psi} \| \omega).
\end{equation} 

\subsection{The $U(1)$-current}\label{sect:U1}
{We denote by $\mathcal{S}(\RR)\equiv\mathcal{S}(\RR;\RR)$ the space of real valued Schwartz functions on $\RR$, and by $\mathcal{S}'(\RR)$ its topological dual, the real valued tempered distributions.} Let us consider the following locally convex, topological vector space 
\begin{equation}\label{Eq: X Space}
{X:=\{f\in\cS(\RR)\;|\;\hat f\in  L^2(\RR_+,E\,dE)\},}
\end{equation}
where $\RR_+:=(0,\infty)$ {and $\hat f(E) := \frac1{\sqrt{2\pi}}\int_\RR f(u) e^{iEu}\, du$ is the Fourier transform}. {We consider the quotient space  $(X+\RR)/\RR$, which can be turned into a real Hilbert space, $\cH^{(1)}$ upon completion with respect to the inner product}
\begin{equation}\label{Eq: Real Scalar Product}
\langle f,h\rangle_\bR= {\frac12}\int_{\RR} {|E|}\hat f(-E)\hat h(E)\,dE.
\end{equation}
Observe that, with a slight abuse of notation, we denote with $f$ an equivalence class of functions $[f]\in\cH^{(1)}$. The space
$\cH^{(1)}$ is equipped with a complex structure induced by $\iota: X\to X$ which is completely identified by its action on $\mathcal{S}(\bR)$, namely $\mathcal{S}(\bR)\ni f\mapsto\widehat{\iota f} = i \,{\rm sign}(E)\widehat f(E)$.
In addition we consider a symplectic form $\beta$ on $X$ which is also defined starting from its action on $\mathcal{S}(\bR)$, namely
\begin{equation}\label{symp}
{\mathcal{S}(\bR)\times}\mathcal{S}(\bR)\ni (f,h)\mapsto\beta(f,h)=\frac{1}{2}\int_\RR f'(u)h(u)du.
\end{equation}

\noindent We exploit Equations \eqref{Eq: Real Scalar Product} and \eqref{symp} to identify the following complex scalar product on $\cH^{(1)}$:
\[
\langle f,h\rangle=\langle f,h\rangle_\bR+i \beta(f,h) ={ \int_0^{+\infty} \hat f(-E) \hat h(E) E\,dE = \lim\limits_{\epsilon\to 0^+}\frac 1 {2\pi} \int_{\RR\times\RR} \frac{f(u) h(u')}{(u-u'-i\epsilon)^2}du\,du' \,.}
\]
The space $\cH^{(1)}$ carries the irreducible, positive energy, unitary representation $U^{(1)}$ of $\mathrm{PSL}(2,\RR)$ with lowest weight $1$, whose action {is defined on real smooth functions on the one-point compactification of $\RR$ as}
\begin{equation}\label{Eq: U representation of PSL(2,R)}
(U^{(1)}(g)f)(u)=f(g^{-1}u)\, ,
\end{equation}
{and extended by density to $\cH^{(1)}$. In the above formula, the action on $\RR$ of }$g=\left(\begin{array}{cc} a&b\\c&d
\end{array}\right)\in\mathrm{SL}(2,\RR)$ {is defined by }$gu=\frac{au+b}{cu+d}$. In particular, we will make use of the 1-parameter subgroups of $\mathrm{SL}(2,\RR)$ of translations and dilations, defined respectively by
\begin{equation}\label{eq:transdil}
\tau_{\RR_+}(s)u := u+s, \qquad \delta_{\RR_+}(s)u := e^s u, \qquad u, s \in \RR.
\end{equation}

Denoting by $\cI_0$ the set of open intervals of the real line, the M\"obius covariant net $H^{(1)}$ of standard subspaces of $\cH^{(1)}$ associated with {the $U(1)$-current} is given by
\[
\cI_0\ni I\mapsto H^{(1)}(I)=\big\{[f] \in \cH^{(1)}:   f \in \cC^\infty_0(\RR), \,\mathrm{supp}\,f\subset I\big\}^-\,. 
\]
Here we have reintroduced equivalence classes to avoid a source of confusion.
\begin{remark}\label{rem:propertiescurrent}
Let $\cI$ be the set of {non-dense} open intervals of the compactified line $\overline{\RR}\equiv\bS^1$ and {for $I \in \cI$ let $I'$ be the interior of} $\bS^1\setminus I$.  We observe that the net
$$\cI_0\ni I\mapsto H^{(1)}(I){\subset \cH^{(1)}},$$
extends to a counterpart on $\bar \RR$ and it satisfies the following properties:
 \begin{enumerate}
\item \emph{Isotony:} $I_1 \subset  I_2 \implies H^{(1)}(I_1) \subset H^{(1)}(I_2)$ ; 
\item \emph{Locality:} $I_1 \subset I'_2 \implies H^{(1)}(I_1) \subset H^{(1)}(I_2)'$;
\item \emph{Möbius covariance:} There exists a unitary, positive energy representation $U^{(1)}$ of the M\"obius group on $\cH{^{(1)}}$ such that $U^{(1)}(g)H^{(1)}(I) = H^{(1)}(gI)$, $I\in\cI$;
\item \emph{Standardness:} $H^{(1)}(I) $ is standard for every $I\in\cI$;
\item \emph{Haag duality:} $H^{(1)}(I') = H^{(1)}(I)'$ for all $I\in \cI$; 
\item \emph{Bisognano-Wichmann property:} $U{^{(1)}}\big(\delta_I(-2\pi s)\big)=\Delta_{H^{(1)}(I)}^{is}$ where $\delta_I$ is the one-parameter group of ``dilations'' acting on $I${, i.e., $\delta_I(s) = g_I^{-1} \delta_{\RR_+}(s) g_I$, with $g_I \in \mathrm{SL}(2,\RR)$ such that $g_I I = \RR_+$}.
\end{enumerate}
It follows from the Bisognano-Wichmann property that, if $I_1\subset I_2$ is an inclusion of intervals on the circle with the same (anti-clockwise) endpoint,  then $H^{(1)}(I_1)\subset H^{(1)}(I_2)$ is an half-sided modular inclusion, see Equation \eqref{Eq: Half Sided Modular Inclusion}. Most notably, this result also applies in the real-line picture when $I_1\subset I_2\subset\RR$ where both $I_1$ and $I_2$ are half-lines.
\end{remark}

{\noindent For future reference, we highlight the following standard subnets of $H^{(1)}$: 
$$\cI_0 \ni I \mapsto H^{(k)}(I):=\{[f^{(k-1)}]\in\cH^{(1)}: f\in\cC_0^\infty(\RR),\supp f\subset I\}^-\subseteq H^{(1)}(I), \qquad 1\leq k\in\NN,$$
called the one particle net of the $(k-1)$-th derivative of the $U(1)$ current.  We recall that, if $I\subset\RR$ is an unbounded interval,
\begin{equation}\label{eq:k1}
H^{(k)}(I)=H^{(1)}(I),
\end{equation}
see \cite[Sect.~2.5]{GLW98}
\medskip
}

{\em Entropy of a vector for a $U(1)$-current.} In the framework above, we can construct a net of von Neumann algebras on $\bS^1$ associated to a $U(1)$-current. As in the previous discussion we can consider the relative entropy between a coherent state and the vacuum. This has been first computed in \cite{Llocst}. More precisely, consider $I_t=(t,+\infty)\subset\RR$, $H^{(1)}(I_t)$ its standard subspace and $\cA(I_t)=R(H^{(1)}(I_t))$ the associated von Neumann algebra. Given $h\in\cS(\bR){\subset \cH^{(1)}}$ 
\begin{equation}\label{eq:U1ent}
  S_{\cA(I_t)}(\omega_{h}\|\omega)=S_{H^{(1)}(I_t)}(h)=\pi\int_t^{+\infty}(u-t) (h'(u))^2du.
\end{equation} 
Using a density argument, in \cite{CLRR22}, Equation \eqref{eq:U1ent} has been extended to every $h\in\cH$. An analogous formula on bounded intervals has been derived in \cite{LM24}.

\subsection{Spacetimes conformal to Minkowski}\label{sec:geometry} 

For a smooth manifold $M$, we will denote by $\cC^k(M)$, $k =0,1,2,\dots,\infty$, the space of real valued $k$ times continuously differentiable functions on $M$, and by $\cC_0^k(M)$ the subspace of the compactly supported ones. As customary, unless confusion can arise, we will always identify points of $M$ with their representation in any one of its coordinate system to simplify the notation. Consequently, we will denote by the same symbol a function in different coordinate systems.

In this paper we shall consider four-dimensional, globally hyperbolic manifolds which are \textit{conformal to Minkowski spacetime}, namely $(M,g)\equiv(\mathbb{R}^4,\chi^2\eta)$ where $\eta$ is the flat metric, here considered with signature $(-,+,+,+)$ while $\chi\in \cC^\infty(\mathbb{R}^4,(0,\infty))$. This is a specific example of an \textit{asymptotically flat spacetime} \cite[Chap. 11]{Wald}, namely there exists a smooth background $(\widetilde{M},\widetilde{g})$ such that $M$ turns out to be an open submanifold of $\widetilde{M}$ with boundary $\Im \subset \widetilde{M}$. Without loss of generality we shall only consider those cases for which there exists a subset $\scri\subset\Im$, called {\em future null infinity}. 
This is the maximal component of $\Im$ satisfying $\scri \cap J^-_{\tM}(M) = \emptyset$\footnote{Given a globally hyperbolic spacetime $(M,g)$ with an open subset $\sA\subset M$, we denote by $I^\pm_{M}(\sA) = \bigcup_{x\in \sA}I^\pm_{M}(x)$ and by $J^\pm_{M}(\sA)=\bigcup_{x\in \sA}J^\pm_{M}(x)$, where $I^\pm_{M}(x)$ and $J^\pm_{M}(x)$ are respectively the chronological and the causal future ($+$) and past ($-$) of $x\in M$, see \cite[Chap.\ 8]{Wald}.} and it is an embedded submanifold of $\tM$ diffeomorphic to $\mathbb{R}\times\mathbb{S}^2$. For the sake of completeness we outline succinctly the construction of $\widetilde{M}$ { for $(M,g)$} in App. \ref{app:a}. 

Furthermore, as one can infer from Appendix \ref{app:a}, in our scenario, $(\tM,\tg)$ is globally hyperbolic and 
$\widetilde{g}|_M= \Omega^2|_M {g}$ where $\Omega \in \cC^\infty(\tM)$ and $\Omega>0$ on $M$. On $\scri$, $\Omega =0$ and $d\Omega \neq 0$. 
Moreover, defining $n^a := \tg^{ab} \partial_b \Omega$, 
there exists a smooth function, $\omega$, defined in $\tM$ with $\omega >0$ on $M\cup \scri$, such that 
$\widetilde{\nabla}_a (\omega^4 n^a)=0$ on $\Im$ and the integral lines of $\omega^{-1} n$ are complete on $\scri$. Observe that it is possible to introduce the notion of {\em past} null infinity $\scrip$ defined, mutatis mutandis, as $\scri$ \cite[Chap. 11]{Wald}.

{
	\begin{remark}\label{rem:range}
		Observe that, although we consider spacetimes $(M,g)$ such that $g=\chi^2\eta$ with $\chi\in \cC^\infty(\mathbb{R}^4,(0,\infty))$, most of our analysis and results require only the existence of future null infinity. Therefore we could enlarge the class of admissible manifolds so that $(M,g)$ is a globally hyperbolic spacetime conformal only to a subset of $(\bR^4,\eta)$ and admitting $\Im^+$ as part of its conformal boundary. Yet, since this class of background is rather implicit in its definition, we prefer to focus on a restricted, but more explicit collection of spacetimes.
	\end{remark}
}

\begin{remark}\label{rem:embedded}
Henceforth we shall always consider {$M$} realized as {an open subset} in $\widetilde{M}$. Consequently all causal structures will be defined with reference to $\widetilde{M}$ and for simplicity of the  notation we shall write $I^\pm(x)$ and $J^\pm(x)$ in place of $I^\pm_{\widetilde{M}}(x)$ and of $J^\pm_{\widetilde{M}}(x)$ for all $x\in\widetilde{M}$. {We will instead keep the subscript and write $I^{\pm}_M(x), J^{\pm}_M(x)$, $x \in M$, for the corresponding subsets of $M$. Moreover, given a subset $\sA\subset \tM$, we denote by $\sA^\prime$ the causal complement of $\sA$ in $\tM$, namely the interior of the set of all points of $\tM$ that cannot be connected to any point of $A$ by a causal curve. One can then define $\sA^{\prime\prime}$, the causal completion of $\sA$ in $\tM$,  by taking twice the causal complement, $\sA^{\prime\prime}:=(\sA^\prime)^\prime$.}
\end{remark}

Considering any asymptotically flat spacetime, therefore those conformal to Minkowski in particular,  the metric structures of $\scri$ are affected by a {\em gauge freedom} due to the possibility of changing the metric $\tg$ in a neighborhood of $\scri$
with a factor $\omega$ smooth and strictly positive. It corresponds to the map $\Omega \to \omega \Omega$ which does not affect the differentiable structure of $\Im^+$. More precisely, fixing $\Omega$, $\scri$  turns out to be
the union of all future-oriented  integral lines of the field 
$n^a :=\widetilde{g}^{ab}\widetilde{\nabla}_b\Omega$.
This property is, in fact, invariant under gauge transformation, but the field $n$
depends on the gauge. 
For a fixed background $(M,g)$ in the class that we consider,
the manifold $\scri$ together with its degenerate metric $\widetilde{h}$ induced by $\tg$ and
 the field $n$ on $\scri$
form a triple $(\scri,\widetilde{h},n)$ which, under gauge transformations $\Omega \to \omega \Omega$, transforms as
\beq
\scri \to \scri \:,\:\:\:\:\: \widetilde{h} \to \omega^2 \widetilde{h} \:,\:\:\:\:\: n \to \omega^{-1} n \label{gauge}\:.
\eeq
We denote by $C$ the equivalence class of all triples under the action of this map. Observe that $C$ is {\em universal} for all asymptotically flat spacetimes within those considered in this work. More precisely, if $C_1$ and $C_2$ are two classes of triples associated respectively to $(M_1,g_1)$
and $(M_2,g_2)$ there exists a diffeomorphism $\gamma: \scri_1 \to \scri_2$ such that for suitable representatives $(\scri_1,\widetilde{h}_1, n_1)\in C_1$
and $(\scri_2,\widetilde{h}_2, n_2)\in C_2$, 
\beq
\gamma(\scri_1) = \scri_2 \:,\:\:\:\:\: \gamma^* \widetilde{h}_1=\widetilde{h}_2 \:,\:\:\:\:\:\gamma^* n_1=n_2 \nonumber\:.
\eeq
The proof of this statement  relies on the following nontrivial result \cite{Wald}. 
For any asymptotically flat spacetime $(M,g)$ (either $(M_1,g_1)$ or $(M_2,g_2)$ in particular) and any initial 
choice for $\Omega_0$,
varying the latter with a judicious choice of $\omega$, 
one can always fix $\Omega := \omega \Omega_0$ in order that the metric $\tg$ associated with $\Omega$ satisfies
\beq
\tg\spa\rest_{\scri}  = -2du \:d\Omega +  d\Sigma_{\bS^2}(\theta,\varphi)\:. \label{met}
\eeq
This formula uses the fact that in a neighborhood of $\scri$, $(u, \Omega,
\theta,\varphi)$ define a meaningful coordinate system.
Here $ d\Sigma_{\bS^2}(\theta,\varphi)$ is the standard metric on a unit $2$-sphere and $u \in \bR$ is nothing but an affine parameter along
the {\em complete} null geodesics forming $\scri$ itself with tangent vector $n= \partial/\partial u$. In these coordinates $\scri$ is just the set of the points with
$u \in \bR$, {$\Omega = 0$,} $(\theta,\varphi) \in \bS^2$ and, no-matter the initial spacetime $(M,g)$ 
(either $(M_1,g_1)$ or $(M_2,g_2)$ in particular), one has the triple 
$(\scri,\widetilde{h}_B, n_B) := (\bR\times \bS^2, d\Sigma_{\bSf}, \partial/\partial u)$. {Such a coordinate system $(u,\theta,\varphi)$ on $\scri$ is called a \textit{Bondi frame}.}

\begin{remark}\label{Rem: Trick Limit}
With reference to the compactification procedure outlined in Appendix \ref{app:a}, observe that the r\^{o}le of $u$ in Equation \eqref{met} is played by the null coordinate $u=t-r$, while {$\Omega=\chi^{-1}\cos(\arctan(v))$} where $v=t+r$. Consequently, evaluating a quantity at $\scri$ can be done equivalently either on $M$ taking the limit $v\to\infty$ or on $\widetilde{M}$ taking the limit $\Omega\to 0^+$. In the following we will implicitly exploit this feature.
\end{remark}

\begin{remark}\label{Rem: BMS}
\noindent Future null infinity is particularly noteworthy for the existence of a distinguished subgroup of $\mathrm{Diff}(\Im^+)$. This is the {\bf Bondi-Metzner-Sachs (BMS) group}, $G_{BMS}$ \cite{Penrose, Penrose2, Geroch, AS}, which consists of all
$\gamma\in\mathrm{Diff}(\scri)$ which preserve the  universal structure of $\scri$, i.e. $(\gamma(\scri),\gamma^*\widetilde{h}, \gamma^*n)$
differs from $(\scri,\widetilde{h},n)$ at most by a gauge transformation as in Equation \eqref{gauge}. It turns out that $G_{BMS}$ can be viewed as the semidirect product between $SO(3,1)^{\uparrow}$ and $\cC^\infty(\bS^2)$ seen as an Abelian group under addition. The elements of this subgroup are called {\bf supertranslations}. We do not delve into a detailed analysis of $G_{BMS}$ and of its structural properties referring an interested reader to \cite{Dappiaggi:2005ci}.
\end{remark}

\section{Quantum fields on $\scri$ and on $M$}

In this section we work with $(M,g)\equiv(\mathbb{R}^4,\chi^2\eta)$ with signature $(-,+,+,+)$ as per Section \ref{sec:geometry} with $\chi\in \cC^\infty(M;(0,\infty))$, see Remark \ref{rem:range} for a possible generalisation. We endow $M$ with the standard Cartesian coordinates $(t,\bx)$, $\bx\equiv (x_1,x_2,x_3)$. 
We will also use on $M$ and $\tM$ the coordinate systems $(t,r,\theta,\varphi)$, $(u,v,\theta, \varphi)$ and $(U,V,\theta,\varphi)$ introduced in App.~\ref{app:a}.

\subsection{Scalar field on $\scri$}\label{QFT1} 
We consider a quantum field theory on $\scri$ following a construction first outlined in \cite{Dappiaggi:2005ci}. 
We construct a field theory on $\scri$ based on smooth scalar fields $\psi$ and assuming $G_{BMS}$ as the natural symmetry group, see Remark \ref{Rem: BMS}. This should be read, more appropriately, as a QFT on the class
of all triples $(\scri,\widetilde{h}, n)$ connected with $(\scri, \widetilde{h}_B, n_B)$ by the transformations of $\bms$. In the following we report only the key ingredients and results, which will play a significant r\^{o}le in our discussion. A more detailed exposition can be found in \cite{Dappiaggi:2017kka}.

Our starting point is 
\begin{equation}\label{Eq: Functions on Scri}
	\cS(\scri)=\{\psi\in \cC^\infty(\Im^+)\;|\;\lim\limits_{|u|\to\infty}|u|^k\partial^\alpha_u\psi=0,\;\forall\alpha,k\in\mathbb{N}\cup\{0\}\},
\end{equation}
where we consider a fixed Bondi frame $(u,\theta,\varphi)$ and the limits hold uniformly in the angular coordinates
 $\cS(\scri)$. It can be proven, see \cite{Dappiaggi:2005ci}, that $\cS(\scri)$ does not depend on the choice of Bondi frame and it can be equipped with a symplectic form invariant under the action of the BMS group.

\begin{theorem}\label{theo1}
Given $\cS(\scri)$ as per Equation \eqref{Eq: Functions on Scri}, the map $\sigma: \cS(\scri)\times \cS(\scri) \to \bR$
\beq
\sigma(\psi_1,\psi_2) := {\frac12}\int_{\bR\times \bS^2} 
\left(\psi_2 \frac{\partial\psi_1}{\partial u}  - 
\psi_1 \frac{\partial\psi_2}{\partial u}\right) 
du \wedge \epsilon_{\bS^2}(\theta,\varphi)\:, \label{sigma}
\eeq
$\epsilon_{\bS^2}$ being the standard volume form of the unit $2$-sphere, 
 is a nondegenerate symplectic form on $\cS(\scri)$ independently from the chosen Bondi frame $(u,{\theta, \varphi})$.
\end{theorem}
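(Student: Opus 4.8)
The plan is to verify two claims: first, that $\sigma$ is a well-defined bilinear form on $\cS(\scri) \times \cS(\scri)$; second, that it is antisymmetric and nondegenerate; and third, that it does not depend on the choice of Bondi frame. The first point is essentially a matter of convergence of the integral in Equation \eqref{sigma}. Both $\psi_1$ and $\psi_2$ lie in $\cS(\scri)$, so by the defining rapid-decay condition $\lim_{|u|\to\infty}|u|^k\partial_u^\alpha\psi = 0$ uniformly in $(\theta,\varphi)$; in particular the integrand $\psi_2\partial_u\psi_1 - \psi_1\partial_u\psi_2$ decays faster than any inverse power of $|u|$, uniformly on the compact factor $\bS^2$, hence is absolutely integrable against $du\wedge\epsilon_{\bS^2}$. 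Bilinearity is immediate from the form of the integrand, and antisymmetry $\sigma(\psi_1,\psi_2) = -\sigma(\psi_2,\psi_1)$ is visible by inspection.

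Next I would address nondegeneracy. Suppose $\psi_1\in\cS(\scri)$ satisfies $\sigma(\psi_1,\psi_2)=0$ for all $\psi_2\in\cS(\scri)$. Integrating by parts in $u$ (the boundary terms vanish by the decay condition), one has
\[
\sigma(\psi_1,\psi_2) = \int_{\bR\times\bS^2} \psi_2\,\frac{\partial\psi_1}{\partial u}\, du\wedge\epsilon_{\bS^2}(\theta,\varphi).
\]
Since $\psi_2$ ranges over all of $\cS(\scri)$, which is dense in a suitable sense (it contains, e.g., all products of a Schwartz function of $u$ with a smooth function on $\bS^2$), this forces $\partial_u\psi_1 = 0$ identically on $\scri$. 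Combined with $\lim_{|u|\to\infty}\psi_1 = 0$, this yields $\psi_1 = 0$, proving nondegeneracy.

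For the frame-independence, I would use the description of the change of Bondi frame as an element of $\bms$, recalled in Remark \ref{Rem: BMS}: passing from one Bondi frame to another is implemented by a transformation $\gamma\in G_{BMS}$ acting on $\scri\simeq\bR\times\bS^2$, under which $\widetilde h\to\omega^2\widetilde h$ and $n\to\omega^{-1}n$ for a positive $\omega\in\cC^\infty(\bS^2)$. Writing $\sigma$ invariantly as a pairing of $\psi_1 d\psi_2 - \psi_2 d\psi_1$ (a one-form along the null generators) integrated against the transverse volume element, one checks that the conformal weights conspire so that the $\omega$-factors cancel: the extra $\omega^2$ from the area form on $\bS^2$ is compensated by the $\omega^{-2}$ coming from rescaling $\partial_u = n$ twice (once in $\partial\psi_1/\partial u$ — here one must recall how scalar fields on $\scri$ transform under $\bms$ — and once in the measure $du$). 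Concretely, under a supertranslation $u\mapsto u + f(\theta,\varphi)$ the integrand is manifestly unchanged and the region of integration $\bR$ is merely shifted fiberwise; under the $SO(3,1)^\uparrow$ part one tracks the Jacobians of the conformal transformation of $\bS^2$ against the rescaling of $u$. The main obstacle is precisely this last bookkeeping: one must be careful about the transformation rule for the fields $\psi\in\cS(\scri)$ under $\bms$ (whether they carry a conformal weight) and confirm that with the correct convention the total weight of the integrand-plus-measure is zero, so that $\sigma$ descends to a $\bms$-invariant, hence Bondi-frame-independent, symplectic form. This is carried out in detail in \cite{Dappiaggi:2005ci, Dappiaggi:2017kka}, to which I would refer for the remaining computations.
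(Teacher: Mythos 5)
Your treatment of the ``nondegenerate symplectic form'' part is correct and complete: rapid decay in $u$, uniform on the compact sphere, gives absolute convergence; antisymmetry and bilinearity are immediate; and after one integration by parts (boundary terms killed by the decay condition) the hypothesis $\sigma(\psi_1,\cdot)=0$ tested against, say, all compactly supported $\psi_2$ forces $\partial_u\psi_1\equiv 0$, whence $\psi_1=0$ by the fall-off at $|u|\to\infty$. (Even more directly, one may take $\psi_2=\partial_u\psi_1\in\cS(\scri)$.) Note that the paper itself offers no proof of Theorem \ref{theo1}: it simply quotes \cite{Dappiaggi:2005ci}, so on this portion your argument actually supplies more detail than the text.

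The genuine gap is in the frame-independence part, which is the only nontrivial claim beyond routine analysis, and your conformal-weight bookkeeping as stated is wrong. Under a change of representative $\Omega\to\omega\Omega$ one has $n\to\omega^{-1}n$, hence the affine parameter rescales as $u\to\omega u$: the measure $du$ therefore picks up a factor $\omega^{+1}$, not $\omega^{-1}$ as you claim. So the cancellation is not ``area form $\omega^{2}$ against $\partial_u$ appearing twice''; rather $\partial_u$ and $du$ cancel \emph{each other} ($\omega^{-1}\cdot\omega=1$), and the residual $\omega^{2}$ from $\epsilon_{\bS^2}$ must be absorbed by the fields, each of which carries conformal weight $-1$, i.e.\ $\psi\to\omega^{-1}\psi$. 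This is not an optional convention one may leave open: it is forced by the bulk-to-boundary construction, cf.\ Proposition \ref{prop2}(b) of the paper, where the boundary datum associated with the rescaled factor is $(\omega\Omega)^{-1}\phi$, precisely the weight $-1$ behaviour. If the $\psi$'s were weight-zero scalars, $\sigma$ would scale by $\omega^{2}$ under the Lorentz part of a Bondi-frame change and the statement would be false; under supertranslations alone your observation that the integrand is unchanged is fine. Since you explicitly left the weight question undecided and deferred the computation to \cite{Dappiaggi:2005ci, Dappiaggi:2017kka} (the same references the paper relies on), the omission is localized, but as written your sketched mechanism would not survive the check you propose to carry out, and the invariance argument needs to be rebuilt around the weight $-1$ transformation law of the fields.
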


\begin{definition}\label{Def: C*-alg_bulk} 
We call \emph{algebra of observables on $\scri$}, $\mathcal{W}(\scri)$, the unique (up to $*$-isomorphisms) C$^*$-algebra associated to $(\cS(\scri),\sigma)$, whose generators $\cS(\scri)\ni \psi\mapsto W(\psi)$ abide by 
\begin{equation}\label{Eq: Weyl Generators}
W(\psi)^*=W(-\psi),\qquad W(\psi)W(\psi^\prime)=e^{\frac{i}{2}\mathcal{\sigma}(\psi,\psi^\prime)}W(\psi+\psi^\prime),\quad\forall \psi,\psi^\prime\in \cS(\scri).
\end{equation}
\end{definition}

Following \cite{Dappiaggi:2005ci}, we can associate to $\mathcal{W}(\Im^+)$ a unique, quasi-free, BMS invariant algebraic state
$$\omega_{\Im^+}:\mathcal{W}(\Im^+)\to\mathbb{C},$$
which is completely determined by its action on the Weyl generators, {\it i.e.}, for all $\psi,\psi^\prime\in\mathcal{S}(\Im^+)$
\begin{gather}
\omega_{\Im^+}(W(\psi))=e^{-\frac{\omega_2(\psi,\psi)}{4}},\notag\\
\omega_2(\psi,\psi^\prime)=\lim\limits_{\epsilon\to 0^+}{\frac1 {2\pi}}\int\limits_{\mathbb{R}^2\times\mathbb{S}^2}dS^2(\theta,\varphi)du \,du^\prime\,\frac{\bar{\psi}(u,\theta,\varphi)\psi^\prime(u^\prime,\theta,\varphi)}{(u-u^\prime-i\epsilon)^2}, \label{eq:2point}
\end{gather}
where $(u,\theta,\varphi)$ is a Bondi frame on $\scri$. It is worth noticing that, as proven in \cite{Dappiaggi:2005ci}, $\omega_{\Im^+}$ is the unique BMS-invariant ground state on $\mathcal{W}(\Im^+)$.

Focusing on the pair $(\mathcal{W}(\Im^+),\omega_{\Im^+})$, it turns out, see \cite[Thm.\ 2.2]{Dappiaggi:2005ci}, that the associated GNS triple is $(\mathcal{F}(\mathcal{H}),\Pi,\Omega_{\Im^+})$, where $\mathcal{F}(\mathcal{H})$ is the Bosonic Fock space built out of the one particle Hilbert space $\mathcal{H}$ {defined as the completion of $\cS(\scri)$ with respect to the scalar product defined by the two-point function $\omega_2$ in~\eqref{eq:2point}. We note that we have the unitary equivalences}
\begin{equation}\label{eq:hilbsp}
\cH \simeq L^2(\RR_+\times\mathbb{S}^2,E\,dE d\mathbb{S}^2)\simeq\int^{{\oplus}}_{\mathbb{S}^2}L^2(\RR_+,E\,dE) d\mathbb{S}^2,
\end{equation}
{where the first} equivalence is given by the unitary map
\begin{eqnarray}
\nonumber
{\cH} &\longrightarrow &L^2(\RR_+\times\mathbb{S}^2,E\,dEd\mathbb{S}^2)\\
\nonumber
f(u,\theta,\varphi)& \longmapsto &\hat f(E,\theta,\varphi),
\end{eqnarray}
where $$\hat f(E,\theta,\varphi)=\frac1{\sqrt{2\pi}}\int_\RR f(u,\theta,\varphi)e^{iuE}du$$
is the one-dimensional Fourier transform of $f(u,\theta,\varphi)$ in the $E$-variable restricted to
 $\RR_+$. The second unitary equivalence is well known, see Equation \eqref{integral=tensor-prod} {in App.~\ref{app:directintegral}}.
 The representation $\Pi:\mathcal{W}(\Im^+)\to\mathcal{B}(\mathcal{F}(\mathcal{H}))$ is completely determined by its action on the Weyl generators, namely 
$$\Pi(W(\psi))=e^{i\Psi(\psi)},\quad\forall\psi\in\mathcal{S}(\Im^+),$$
where $\Psi(\psi)=ia(\psi_+)-ia^\dagger(\psi_+)$, $a,a^\dagger$ being the creation and annihilation operators acting on $\mathcal{F}(\mathcal{H})$, while
{}$$\psi_+(u,\theta,\varphi):=\int\limits_0^\infty\frac{dE}{\sqrt{2\pi}} e^{-iE u}\widehat{\psi}(E,\theta,\varphi),$$
For any connected, open subset $\mathcal{O}\subset\Im^+$, we can consider the pair $(\mathcal{W}(\mathcal{O}),\left.\omega_{\Im^+}\right|_{\mathcal{O}})$, where $\mathcal{W}(\mathcal{O})$ is the C$^*$-subalgebra of $\mathcal{W}(\Im^+)$ generated by all $\psi\in\mathcal{S}(\Im^+)$ such that $\textrm{supp}(\psi)\subset\mathcal{O}$. 

For the sake of notation simplicity, in what follows, we will refer to one of the previous pictures for the Hilbert space by stressing the configuration $(u,\theta,\varphi)$ or momentum variables $(E,\theta,\varphi)$ of the function considered. 

\subsection{Conformally coupled scalar field on $M$}
On top of $M$ taken as in Section \ref{sec:geometry}, we consider a real, massless and conformally coupled scalar field $\Phi:M\to\mathbb{R}$ whose dynamics is ruled by the equation

\begin{equation}\label{eq:wave}
P\Phi\doteq\left(\Box_g-\frac R6\right)\Phi=0,
\end{equation}
where $\Box_g$ is the D'Alembert wave operator built out of the metric $g$, while $R$ is the associated scalar curvature. Since the operator $P$ is normally hyperbolic it admits unique advanced and retarded fundamental solutions \cite{BGP}, namely there exists continuous maps $G^\pm:\cC_0^\infty(M)\to\cC^\infty(M)$ such that 
$$P\circ G^\pm=\mathbb{I}\quad\textrm{and}\quad G^\pm\circ P=\mathbb{I}|_{\cC_0^\infty(M)},$$
and $\textrm{supp}(G^\pm(f))\subseteq J^\mp(\textrm{supp}(f))$ for all $f\in\cC_0^\infty(M)$. On the one hand, as a byproduct of the Schwartz kernel theorem, we can associate to $G^\pm$ unique bi-distributions $\mathcal{G}^\pm\in\mathcal{D}^\prime(M\times M)$, while, on the other hand, we can define the {\em causal (or Pauli-Jordan) propagator} $\mathcal{G}=\mathcal{G}^--\mathcal{G}^+$. The latter entails that, for any $f\in\cC_0^\infty(M)$, there exists an associated solution of Equation \eqref{eq:wave}
$$\Phi_f\doteq\mathcal{G}(f)\in\cC^\infty(M),$$
where the right-hand side is the partial evaluation of $\mathcal{G}$. All these data can be recollected in a convenient setting -- see \cite[Prop 3.3]{Benini:2013fia} for a proof in a more general scenario.

\begin{proposition}\label{prop:sympl_sol}Let $(M,g)$ be as in Section \ref{sec:geometry} and let $P$ be as per Equation \eqref{eq:wave}. Denoting by $\mathcal{G}$ the associated causal propagator, then the pair $(X,\mathcal{G})$ where $X\doteq\frac{\cC_0^\infty(M)}{P[\cC^\infty_0(M)]}$ identifies a symplectic vector space.
\end{proposition}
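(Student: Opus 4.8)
The plan is to verify the three ingredients that make $(X,\mathcal{G})$ a symplectic vector space: that $X$ is a well-defined real vector space, that $\mathcal{G}$ descends to a well-defined bilinear form on the quotient, and that this form is antisymmetric and nondegenerate. First I would observe that, since $P$ is a linear differential operator, $P[\cC_0^\infty(M)]$ is a linear subspace of $\cC_0^\infty(M)$, so the quotient $X=\cC_0^\infty(M)/P[\cC_0^\infty(M)]$ is automatically a real vector space; no analysis is needed here. (Implicitly one should also note that $\cC_0^\infty(M)\neq P[\cC_0^\infty(M)]$, so $X$ is nontrivial, which follows since, e.g., no nonzero positive function can lie in the range of $P$ — its causal propagator image would have to vanish identically, contradicting what is shown below.)

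Next I would define $\widetilde{\mathcal{G}}:\cC_0^\infty(M)\times\cC_0^\infty(M)\to\RR$ by $\widetilde{\mathcal{G}}(f,h):=\mathcal{G}(f,h)=\int_M f\,\mathcal{G}(h)\,d\mu_g$, which is a well-defined real bilinear map because $\mathcal{G}(h)\in\cC^\infty(M)$ has spacelike-compact support (being contained in $J^+(\supp h)\cup J^-(\supp h)$), so the integral against the compactly supported $f$ converges. Antisymmetry, $\widetilde{\mathcal{G}}(f,h)=-\widetilde{\mathcal{G}}(h,f)$, follows from the standard fact that $G^+$ and $G^-$ are formal adjoints of each other with respect to the $L^2(M,d\mu_g)$ pairing — i.e. $\int f\,G^\pm(h)=\int h\,G^\mp(f)$ — which in turn is a consequence of $P$ being formally self-adjoint and of the support properties of $G^\pm$ that allow the usual Green's identity / integration-by-parts argument with no boundary terms, global hyperbolicity guaranteeing the relevant intersections of supports are compact.

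Then I would check that $\widetilde{\mathcal{G}}$ passes to the quotient, i.e. $\widetilde{\mathcal{G}}(Pf',h)=0=\widetilde{\mathcal{G}}(f,Ph')$ for all $f',h'\in\cC_0^\infty(M)$. This is immediate from $G^\pm\circ P=\mathbb{I}$ on $\cC_0^\infty(M)$: one gets $\mathcal{G}(Pf')=G^-(Pf')-G^+(Pf')=f'-f'=0$, hence $\widetilde{\mathcal{G}}(f, Ph')=\int f\,\mathcal{G}(Ph')=0$, and by antisymmetry the other variable too. Therefore $\mathcal{G}$ is well-defined on $X\times X$ and antisymmetric there. The main obstacle — really the only substantive point — is \emph{nondegeneracy}: one must show that if $[f]\in X$ satisfies $\mathcal{G}(f,h)=0$ for all $h\in\cC_0^\infty(M)$, then $\mathcal{G}(f)=0$ as a distribution, and that this forces $f\in P[\cC_0^\infty(M)]$, i.e. $[f]=0$. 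The first implication is clear since $\cC_0^\infty(M)$ is dense in the test function space. For the second, I would invoke the exact sequence associated to the Cauchy problem for the normally hyperbolic operator $P$ on the globally hyperbolic $(M,g)$: $\mathcal{G}(f)=0$ means $f$ is in the kernel of the map $f\mapsto\Phi_f$, and by the well-known result (see \cite{BGP}, or \cite[Thm.~3.4.7]{BGP}-type statements) this kernel is exactly $P[\cC_0^\infty(M)]$. Concretely, if $\Phi_f=\mathcal{G}(f)=0$ then $G^-(f)=G^+(f)=:\psi$, a smooth function whose support lies in the compact-in-time region $J^+(\supp f)\cap J^-(\supp f)$, hence $\psi\in\cC_0^\infty(M)$, and $P\psi=P G^+(f)=f$, giving $f=P\psi\in P[\cC_0^\infty(M)]$. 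This closes the argument; everything else is bookkeeping with supports under global hyperbolicity, and the cited \cite[Prop.~3.3]{Benini:2013fia} covers the general case so the proof can be kept brief.
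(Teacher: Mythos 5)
Your proof is correct and follows essentially the same route as the paper, which does not spell out an argument but defers to \cite[Prop.\ 3.3]{Benini:2013fia}: well-definedness on the quotient from $\mathcal{G}\circ P=0$, antisymmetry from the formal self-adjointness of $P$ and the adjoint relation between $G^+$ and $G^-$, and nondegeneracy from the standard exact-sequence fact that $\mathcal{G}(f)=0$ forces $G^+(f)=G^-(f)=:\psi\in\cC_0^\infty(M)$ (compactness of $J^+(\supp f)\cap J^-(\supp f)$ by global hyperbolicity) with $P\psi=f$. No gaps worth flagging.
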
	

{We denote by $\mathcal{W}(M)$ the unique (up to $*$-isomorphism) C$^*$-algebra associated to the symplectic space $(X,\mathcal{G})$ introduced in Theorem \ref{theo1}, whose generators $W([f])$, $[f] \in X$, abide by the same relations as in Equation \eqref{Eq: Weyl Generators} with $\sigma$ replaced by $\mathcal{G}$  and $\cS(\scri)$ by $X$, i.e.
\begin{equation}\label{Eq: Weyl Generators}
W([f])^*=W([-f]),\qquad W([f])W([f^\prime])=e^{\frac{i}{2}\mathcal{G}(f,f^\prime)}W([f+f^\prime]),\quad\forall [f],[f^\prime]\in X.
\end{equation}
}

 Observe that the vector space $X$ is in one-to-one correspondence with the set of smooth, spacelike compact solutions of Equation \eqref{eq:wave}, namely those $\Phi\in\cC^\infty(M)$ such that $P\Phi=0$ and $\textrm{supp}(\Phi)\cap\Sigma$ is compact, where $\Sigma$ is any Cauchy surface of $M$. We can associate to $(X,\mathcal{G})$ an algebra of observables following \cite{BR2, BGP}.

In order to define an inclusion map from the algebra of observables in the bulk to the counterpart on $\Im^+$, the first step consists of rewriting Equation \eqref{eq:wave} on $(\tM,\tg)$, introduced in Section \ref{sec:geometry}. This is a standard construction, see in particular \cite[App. D]{Wald}, which we recall here succinctly. To start with, observe that, denoting by $\Omega$ the conformal factor introduced still in Section \ref{sec:geometry}, to any $\Phi_f\in \cC^\infty(M)$, solution of Equation \eqref{eq:wave} generated by $f\in \cC^\infty_0(M)$, we can associate 
\begin{equation}\label{eq:conf_wave}
\Psi\doteq\Omega^{-1}\Phi_f\in \cC^\infty(M)\quad\textrm{such that}\quad\left(\Box_{\tg}-\frac{\widetilde{R}}{6}\right)\Psi=0,
\end{equation}
where $\Box_{\tg}$ and $\widetilde{R}$ are the D'Alembert wave operator and the scalar curvature built out of $\tg$. Equation \eqref{eq:conf_wave} stands on the following proposition, proven in \cite{Dappiaggi:2005ci}.

\begin{proposition}\label{prop2}\em Assume that $(M,g)$ and $(\tM,\tg)$ are as per Section \ref{sec:geometry} where $\tg\sp\rest_M = \Omega^2 g$ for a given $\Omega\in\cC^\infty(\widetilde{M})$ with $\Omega>0$. Consider any but fixed open set $\widetilde{\sV}\subset \tM$ with 
	$\overline{M\cap J^-(\scri)}\subset \widetilde{\sV}$ (the closure being referred to $\tM$) such that both $(\widetilde{\sV}, \tg)$ and $(M\cap \sV,g)$ are globally hyperbolic. If $\phi : M\cap\widetilde{\sV}  \to \bC$ 
	has compactly supported Cauchy data on some Cauchy surface of $M\cap \widetilde{\sV}$ and if it satisfies the
	massless conformal Klein-Gordon equation,
	\beq
	\Box \phi - \frac{1}{6} R \phi =0\label{cc}\:,
	\eeq
	{\bf(a)} the field $\tphi := \Omega^{-1}\phi$ can be  extended uniquely to a smooth solution 
	in $(\widetilde{\sV}, \tg)$
	of 
	\beq
	\widetilde{\Box} \tphi - \frac{1}{6} \widetilde{R} \tphi =0\label{cc2}\:;
	\eeq  
	{\bf(b)} for every smooth positive factor $\omega$ defined in a neighborhood of $\scri$ used to rescale 
	$\Omega \to \omega \Omega$ in such a neighborhood,
	$(\omega\Omega)^{-1} \phi$
	extends uniquely to a smooth field $\psi$ on $\scri$.
\end{proposition}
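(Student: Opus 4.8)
The plan is to reduce everything to two ingredients: the conformal covariance of the conformal wave operator in four dimensions, and the well-posedness (existence, uniqueness, finite propagation speed) of the Cauchy problem for normally hyperbolic operators on globally hyperbolic spacetimes. The algebraic core is the classical identity that the conformal d'Alembertian satisfies under a rescaling $\tg=\Omega^2 g$: a direct computation with the transformation laws of $\Box$ and of the scalar curvature gives, at every point where $\Omega>0$,
\[
\Bigl(\widetilde\Box-\tfrac16\widetilde R\Bigr)\bigl(\Omega^{-1}\phi\bigr)=\Omega^{-3}\Bigl(\Box-\tfrac16 R\Bigr)\phi .
\]
In particular, since $\phi$ solves \eqref{cc} on $M\cap\widetilde{\sV}$ and $\Omega>0$ there, the smooth function $\tphi:=\Omega^{-1}\phi$ solves \eqref{cc2} on $M\cap\widetilde{\sV}$. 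This is the only step where the dimension and the precise coupling $1/6$ matter; everything else is soft.

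For part \textbf{(a)} it remains to extend $\tphi$ from $M\cap\widetilde{\sV}$ to all of $\widetilde{\sV}$, and I would do this via Cauchy data. Fix a Cauchy surface $\Sigma$ of $M\cap\widetilde{\sV}$ on which the Cauchy data of $\phi$ — hence, since $\Omega>0$ on $M$, also those of $\tphi$ — are supported in the interior of a compact set $K\subset\Sigma$. Using global hyperbolicity of $\widetilde{\sV}$, together with the hypotheses $\overline{M\cap J^-(\scri)}\subset\widetilde{\sV}$ and $\scri\cap J^-_{\tM}(M)=\emptyset$ and standard splitting theorems for globally hyperbolic spacetimes, one arranges an acausal Cauchy surface $\widetilde\Sigma$ of $\widetilde{\sV}$ with $K\subset\widetilde\Sigma$, such that $\widetilde\Sigma\cap(M\cap\widetilde{\sV})$ is a Cauchy surface of $M\cap\widetilde{\sV}$ and such that $\phi$ vanishes together with its normal derivative on $\widetilde\Sigma\setminus K$ (forced by acausality and $\supp\phi\subset J_{M\cap\widetilde{\sV}}(K)$). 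Extending the Cauchy data of $\tphi$ by zero off $K$ yields smooth compactly supported data on $\widetilde\Sigma$; since \eqref{cc2} is normally hyperbolic and $(\widetilde{\sV},\tg)$ is globally hyperbolic, solving the associated Cauchy problem produces a unique $w\in\cC^\infty(\widetilde{\sV})$ solving \eqref{cc2}. Finally $w$ and $\tphi$ are two solutions of \eqref{cc2} on the globally hyperbolic $M\cap\widetilde{\sV}$ with identical data on the Cauchy surface $\widetilde\Sigma\cap(M\cap\widetilde{\sV})$ — they agree on $K$ by construction and both have vanishing data on the complement — so they coincide there; hence $w$ is the desired extension, and its uniqueness follows from uniqueness of the Cauchy problem in $\widetilde{\sV}$ (equivalently, from density of $M\cap\widetilde{\sV}$ in $\widetilde{\sV}$).

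Part \textbf{(b)} is then a short corollary. Writing $(\omega\Omega)^{-1}\phi=\omega^{-1}\tphi$, with $\omega>0$ smooth on a neighbourhood of $\scri$ and $\tphi\in\cC^\infty(\widetilde{\sV})$ the extension from (a), the function $\omega^{-1}\tphi$ is smooth on that neighbourhood, so its restriction $\psi$ to $\scri$ is a smooth field; uniqueness is inherited from (a), using $\scri\subset\overline M$ in $\tM$. If one wishes, applying the identity of the first paragraph once more to the further rescaling $\tg\mapsto\omega^2\tg$ additionally exhibits $\psi$ as the restriction to $\scri$ of a solution of the corresponding conformal wave equation.

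The step I expect to be the genuine obstacle is the geometric bookkeeping in part (a): producing a single acausal Cauchy surface $\widetilde\Sigma$ of $\widetilde{\sV}$ that simultaneously contains $K$, meets $M\cap\widetilde{\sV}$ in a Cauchy surface of the latter, and carries smooth data after extension by zero. This is where the causal hypotheses on $\widetilde{\sV}$ and $\scri$ must be used carefully, in combination with the standard structure theory of globally hyperbolic spacetimes; by comparison, the conformal-covariance identity and the appeal to Cauchy well-posedness are routine.
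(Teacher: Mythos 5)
The paper does not prove Proposition \ref{prop2} itself: it defers to \cite{Dappiaggi:2005ci}, and your argument follows the same strategy as that reference — the four-dimensional conformal covariance identity $\bigl(\widetilde\Box-\tfrac16\widetilde R\bigr)(\Omega^{-1}\phi)=\Omega^{-3}\bigl(\Box-\tfrac16 R\bigr)\phi$, then extension by zero of the compactly supported Cauchy data and well-posedness of the Cauchy problem for the normally hyperbolic operator on the globally hyperbolic $(\widetilde{\sV},\tg)$, with (b) obtained by restricting $\omega^{-1}\tphi$ to $\scri$, which is legitimate since $\scri\subset\overline{M\cap J^-(\scri)}\subset\widetilde{\sV}$.

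Two points need tightening. First, the existence of an acausal Cauchy surface $\widetilde\Sigma$ of $\widetilde{\sV}$ containing $K$ and meeting $M\cap\widetilde{\sV}$ in a Cauchy surface of the latter is asserted rather than proved; you correctly flag this as the crux, and it does hold in the conformal-to-Minkowski setting (for $\widetilde{\sV}=\tM$ the Einstein cylinder, the closure of a suitable Cauchy surface of $M$ misses only the point $i^0$), but this is exactly the geometric input supplied by the cited reference, so as written it remains a gap of your proposal. Second, your parenthetical justification of uniqueness via ``density of $M\cap\widetilde{\sV}$ in $\widetilde{\sV}$'' is false in general ($M$ is not dense in $\tM$), and uniqueness of the Cauchy problem alone does not close the argument as stated: a competing extension $w'$ agrees with $w$ on $M\cap\widetilde{\sV}$, but to conclude that $w'$ and $w$ share Cauchy data on $\widetilde\Sigma$ you must also know that $\widetilde\Sigma\setminus\overline{M}$ has empty interior in $\widetilde\Sigma$ (equivalently, choose $\widetilde\Sigma$ with $\widetilde\Sigma\cap M$ dense in it); otherwise nothing constrains the data of $w'$ on the portion of $\widetilde\Sigma$ outside $\overline{M}$. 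Once that choice is made explicit, continuity forces equal data on all of $\widetilde\Sigma$ and uniqueness follows, so the remaining steps of your argument are sound.
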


In this work, we always consider $\tM$ to be globally hyperbolic and therefore it plays the r\^{o}le of $\widetilde{V}$ in Proposition \ref{prop2}. On account of the above data, we can infer that, being $\Psi$ spacelike compact, there must exist $\widetilde{f}\in \cC^\infty_0(M)$ such that $\Psi=\widetilde{\mathcal{G}}(\widetilde{f})|_M$, where $\widetilde{\mathcal{G}}$ is the causal propagator of $\Box_{\tg}-\frac{\widetilde{R}}{6}$. Using \cite[Lemma 2.2]{Pinamonti:2009zqj}, it turns out that the propagators $\mathcal{G}$ and $\widetilde{\mathcal{G}}$ are related by the following expression
\begin{equation}\label{Eq:forseserveforseno}
	\left.\widetilde{\mathcal{G}}\right|_{\cC^\infty_0(M)}=\Omega^{-1}\circ\mathcal{G}\circ\Omega^3:\cC^\infty_0(M)\to\cC^\infty(M).
\end{equation}
In view of this result we have established a $1:1$ correspondence between solutions of Equation \eqref{eq:wave} with smooth and compactly supported initial data on $M$ and those of Equation \eqref{eq:conf_wave} with initial data lying in the same space. 

Since $\Im^+$ is a smooth, codimension 1 submanifold of $\tM$, we can also define the restriction thereon of an element of $\cC^\infty(\tM)$. In turn, putting together all the data gathered in the discussion above, we can define the following map
\begin{equation}\label{Eq: Upsilon}
\Upsilon_{M}: X\to \cC^\infty(\Im^+),\quad [f]\mapsto \left.\widetilde{\mathcal{G}}(\Omega^{-3}f)\right|_{\Im^+}.
\end{equation}
{Using this map, an algebra correspondence  stems from Huygens' principle, which holds true only since we are considering massless scalar fields on backgrounds which are conformal to the four-dimensional Minkowski spacetime \cite{Friedlander}. We can state the following proposition, whose proof can be found in \cite[Thm 4.5]{Dappiaggi:2005ci}.}

\begin{proposition}\label{Prop:b2b_correspondence}Let $(X,\mathcal{G})$ be the symplectic space as per Proposition \ref{prop:sympl_sol} and let $(\mathcal{S}(\Im^+),\sigma)$ be the symplectic space on $\Im^+$ as per Theorem \ref{theo1}. Then the map $\Upsilon_{M}$ is an injective symplectomorphism. As a consequence and in view of Definition \ref{Def: C*-alg_bulk}, there exists an injective $*$-homomorphism $\Gamma_{M}:\mathcal{W}(M)\to\mathcal{W}(\Im^+)$ which is completely characterized by its action on the generators, namely, for all $[f]\in X$,
	$$\Gamma_{M}(W[f])=W(\Upsilon_{M}([f])).$$
\end{proposition}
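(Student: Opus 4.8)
The plan is to verify the two claims of Proposition~\ref{Prop:b2b_correspondence} in sequence: first that $\Upsilon_M$ is a well-defined, injective linear map preserving the symplectic structures, and then to invoke the universal property of the Weyl algebra to lift it to the $*$-homomorphism $\Gamma_M$. For the first part, I would begin by checking that $\Upsilon_M$ is well-defined on the quotient $X = \cC_0^\infty(M)/P[\cC_0^\infty(M)]$, i.e., that $\widetilde{\mathcal{G}}(\Omega^{-3}Pg)|_{\Im^+} = 0$ for all $g \in \cC_0^\infty(M)$. This follows from the intertwining relation \eqref{Eq:forseserveforseno}, since $\widetilde{\mathcal{G}} \circ \Omega^3 \circ \Omega^{-3} \circ P = \Omega^{-1}\circ \mathcal{G} \circ P = 0$ on $\cC_0^\infty(M)$; the only subtlety is to make sure the restriction to $\Im^+$ makes sense, which uses Proposition~\ref{prop2}(b) guaranteeing that $\Omega^{-1}\Phi_f$ (equivalently $\widetilde{\mathcal{G}}(\Omega^{-3}f)|_M$) extends smoothly to $\scri$, and hence lies in $\cS(\scri)$ thanks to the decay estimates that are part of the asymptotically-flat construction (these are exactly the fall-off conditions defining $\cS(\scri)$ in \eqref{Eq: Functions on Scri}, and are established in \cite{Dappiaggi:2005ci}).

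Next I would prove the symplectomorphism property, namely $\sigma(\Upsilon_M([f]),\Upsilon_M([f'])) = \mathcal{G}(f,f')$ for all $[f],[f'] \in X$. The standard strategy is a Green-identity / Stokes-theorem argument on $\widetilde M$: one writes the Klein--Gordon current associated to the two conformally rescaled solutions $\Psi = \Omega^{-1}\Phi_f$ and $\Psi' = \Omega^{-1}\Phi_{f'}$, which is conserved for \eqref{eq:conf_wave}, and integrates its flux over a Cauchy surface of $\widetilde M$ lying to the past of $\scri$ versus the flux through $\scri$ itself. The bulk symplectic form $\mathcal{G}(f,f')$ is recovered as the Cauchy-surface flux (here one uses conformal invariance of the symplectic current for the conformally coupled equation to pass between $g$ and $\tg$, and the factor $\Omega^{-3}$ in the definition of $\Upsilon_M$ together with \eqref{Eq:forseserveforseno} is precisely what makes this work), while the flux through $\scri$ reproduces $\sigma$ as written in \eqref{sigma} — the coordinate expression \eqref{met} for $\tg|_{\scri}$ shows that $n = \partial/\partial u$ and the induced volume element is $du \wedge \epsilon_{\bS^2}$, so the boundary integral is literally \eqref{sigma}. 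Injectivity of $\Upsilon_M$ then follows: if $\Upsilon_M([f]) = 0$ then $\mathcal{G}(f,f') = \sigma(0,\Upsilon_M([f'])) = 0$ for all $f'$, so $f \in P[\cC_0^\infty(M)]$ by non-degeneracy of $\mathcal{G}$ on $X$ (Proposition~\ref{prop:sympl_sol}), i.e. $[f] = 0$.

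Having established that $\Upsilon_M : (X,\mathcal{G}) \to (\cS(\scri),\sigma)$ is an injective symplectic map, the passage to $\Gamma_M$ is formal: by the universal property characterizing the Weyl C$^*$-algebra over a symplectic space (uniqueness up to $*$-isomorphism, as in Definition~\ref{Def: C*-alg_bulk}), any symplectic map between symplectic spaces induces a unital $*$-homomorphism of the corresponding Weyl algebras sending $W([f]) \mapsto W(\Upsilon_M([f]))$ — one checks directly that the assignment $W([f]) \mapsto W(\Upsilon_M([f]))$ respects the Weyl relations \eqref{Eq: Weyl Generators} because $\mathcal{G}(f,f') = \sigma(\Upsilon_M([f]),\Upsilon_M([f']))$ — and this homomorphism is injective precisely because $\Upsilon_M$ is injective (an injective symplectomorphism induces an isomorphism onto the Weyl subalgebra generated by its image, since the Weyl algebra over a symplectic subspace embeds faithfully).

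The main obstacle is the middle step: proving the symplectic current computed in the bulk equals the one computed at $\scri$. This requires genuine control over the behaviour of the rescaled solution $\Psi$ and its derivatives near $\scri$ — enough regularity and decay to justify deforming the Cauchy surface out to null infinity and to discard any contribution from spatial infinity or timelike infinity. Concretely one needs that $\Psi$ extends smoothly across $\scri$ with the appropriate fall-off in the Bondi coordinate $u$, which is the content of the asymptotic analysis underlying Proposition~\ref{prop2} and the results of \cite{Dappiaggi:2005ci}; for solutions with spacelike-compact (indeed compactly supported) Cauchy data on $M$, Huygens' principle in the conformally-Minkowski setting \cite{Friedlander} ensures the support properties that make the null-infinity flux integral convergent and the deformation legitimate. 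Since all of these analytic facts are available from the cited works, in the write-up I would state the Green-identity computation, point to \cite[Thm.\ 4.5]{Dappiaggi:2005ci} for the detailed estimates, and then dispatch the algebraic consequence about $\Gamma_M$ in a sentence.
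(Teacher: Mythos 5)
Your proposal is correct and follows essentially the same route as the paper, which does not give an independent argument but defers the proof to \cite[Thm 4.5]{Dappiaggi:2005ci} — precisely the conserved-symplectic-current/Stokes argument combined with Huygens' principle and the smooth extension of $\Omega^{-1}\Phi_f$ to $\scri$ that you outline, followed by the standard lifting of an injective symplectic map to an injective $*$-homomorphism of Weyl algebras. Since you explicitly fall back on the same reference for the analytic estimates near $\scri$, there is nothing to flag beyond a cosmetic slip in the composition $\widetilde{\mathcal{G}}\circ\Omega^{-3}\circ P=\Omega^{-1}\circ\mathcal{G}\circ P=0$ used for well-definedness.
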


{The following lemma is a consequence of the previous proposition.}

\begin{lemma}\label{Lem:ci_crediamo_tanto}

Let $\mathcal{W}({I_M^+(x)})\subset\mathcal{W}(M)$ be the Weyl C$^*$-algebra generated by all $f\in\cC^\infty_0(M)$ such that $\textrm{supp}(f)\subset {I_M^+(x)}$. It holds that $\Gamma_{M}(\mathcal{W}({I_M^+(x)})\subset\mathcal{W}({ I^+(x)\cap\Im^+})$ where the right hand side is the Weyl $*$-subalgebra of $\mathcal{W}(\Im^+)$ generated by all $\psi\in\mathcal{S}(\Im^+)$ such that $\textrm{supp}(\psi)\subseteq {I^+(x)\cap\Im^+}$.
\end{lemma}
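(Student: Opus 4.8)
The plan is to track supports through the holographic map $\Upsilon_M$ and use the support property of the causal propagator together with the definition of $\scri$ as a component of the conformal boundary lying to the future. Concretely, it suffices to show that for $[f]\in X$ with $\supp f\subset I_M^+(x)$, the function $\Upsilon_M([f])=\widetilde{\mathcal G}(\Omega^{-3}f)|_{\scri}$ is supported in $I^+(x)\cap\scri$; the algebra statement then follows immediately from $\Gamma_M(W[f])=W(\Upsilon_M([f]))$ in Proposition \ref{Prop:b2b_correspondence} and the definition of the Weyl subalgebras. Since $\Omega>0$ on $M$, multiplication by $\Omega^{-3}$ does not change supports, so $\supp(\Omega^{-3}f)\subset I_M^+(x)$ as well.

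First I would recall that $\widetilde{\mathcal G}=\widetilde{\mathcal G}^- - \widetilde{\mathcal G}^+$ is the causal propagator for the normally hyperbolic operator $\widetilde\Box - \widetilde R/6$ on the globally hyperbolic spacetime $(\widetilde M,\widetilde g)$, hence its partial evaluation obeys $\supp\big(\widetilde{\mathcal G}(h)\big)\subseteq J^+_{\widetilde M}(\supp h)\cup J^-_{\widetilde M}(\supp h)=J(\supp h)$ for any $h\in\cC_0^\infty(M)$ (here $J$ refers to $\widetilde M$ as per Remark \ref{rem:embedded}). Applying this with $h=\Omega^{-3}f$ gives $\supp\big(\widetilde{\mathcal G}(\Omega^{-3}f)\big)\subseteq J(I_M^+(x))$. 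Now I would restrict to $\scri$: since $\scri\cap J^-_{\widetilde M}(M)=\emptyset$ by the defining property of future null infinity, and since $I_M^+(x)\subset M$, no point of $\scri$ lies in $J^-_{\widetilde M}(I_M^+(x))$. Therefore $J(I_M^+(x))\cap\scri = J^+_{\widetilde M}(I_M^+(x))\cap\scri$. Finally, $J^+_{\widetilde M}(I_M^+(x))\subseteq J^+_{\widetilde M}(I^+(x))=\overline{I^+(x)}$ because $I_M^+(x)\subseteq I^+_{\widetilde M}(x)=I^+(x)$; intersecting with $\scri$ and using that $I^+(x)\cap\scri$ is open in $\scri$, a small limiting/closure argument shows $\supp\big(\Upsilon_M([f])\big)\subseteq \overline{I^+(x)\cap\scri}$, which is what is needed for the Weyl subalgebra generated by functions supported in $I^+(x)\cap\scri$ (the support being a closed set, one works with the closure of the region; alternatively one may absorb this into the convention that $\mathcal W(\mathcal O)$ is generated by $\psi$ with $\supp\psi\subseteq\overline{\mathcal O}$ as already appears in the statement with ``$\subseteq$'').

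The one point that requires a little care — and which I expect to be the main obstacle — is making rigorous the passage ``$J^+_{\widetilde M}(I_M^+(x))\cap\scri\subseteq \overline{I^+(x)\cap\scri}$'', i.e.\ controlling the support exactly up to the boundary of the cone rather than merely up to $J^+(x)\cap\scri$. For this I would invoke causal convexity and the structure of $I^+(x)$: a point $y\in\scri$ reached by a future-directed causal curve from a point $z\in I_M^+(x)$ can be joined to $x$ by a causal curve passing through the open set $I^+(x)$ on an initial segment, so $y\in\overline{I^+(x)}\cap\scri$; combined with the openness of $I^+(x)\cap\scri$ inside $\scri$ this yields $y\in\overline{I^+(x)\cap\scri}$ (closure in $\scri$). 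All the geometric inputs — global hyperbolicity of $\widetilde M$, the support properties of Green operators, and $\scri\cap J^-_{\widetilde M}(M)=\emptyset$ — are available from Section \ref{sec:geometry} and Remark \ref{rem:embedded}, so the argument is essentially a bookkeeping of causal sets once these are in place.
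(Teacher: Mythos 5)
Your strategy is the intended one: the paper gives no separate argument for this lemma beyond ``it is a consequence of Proposition \ref{Prop:b2b_correspondence}'', and the content is exactly the support bookkeeping you carry out, using $\supp\widetilde{\mathcal G}(h)\subseteq J^+(\supp h)\cup J^-(\supp h)$, $\Omega^{-3}$ not altering supports, and $\scri\cap J^-_{\tM}(M)=\emptyset$ to kill the past contribution on $\scri$. Note also that Huygens' principle is not needed for this one-sided localization statement, consistently with your argument.

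The one step that is off is the endgame. You settle for $\supp\bigl(\Upsilon_M([f])\bigr)\subseteq\overline{I^+(x)\cap\scri}$ and then propose to ``absorb'' the closure into the definition of $\mathcal W(I^+(x)\cap\Im^+)$; but the lemma's generators are required to have support contained in the \emph{open} set $I^+(x)\cap\Im^+$ (the ``$\subseteq$'' in the statement refers to that open set, not to its closure), so as written you prove a weaker statement or silently modify the lemma. The detour through closures is in fact unnecessary, and your intermediate claim $J^+_{\tM}(I^+(x))=\overline{I^+(x)}$ is not an equality: by the standard push-up property of causal/chronological futures (valid in any time-oriented Lorentzian manifold, in particular in the globally hyperbolic $\tM$), $J^+_{\tM}(I^+(x))=I^+(x)$. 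Hence from $\supp f\subset I^+_M(x)\subseteq I^+(x)$ one gets directly
\[
\supp\bigl(\Upsilon_M([f])\bigr)\subseteq J^+_{\tM}(\supp f)\cap\scri\subseteq I^+(x)\cap\scri ,
\]
which is exactly the containment in the open region needed for $\Gamma_M(W[f])=W(\Upsilon_M([f]))$ to land among the stated generators. With this replacement your proof is complete and coincides with the paper's intended argument.
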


 In addition, as a consequence of Proposition \ref{Prop:b2b_correspondence}, the BMS-invariant state $\omega_{\Im^+}$ on $\cW(\scri)$ identifies via pull-back a state on the Weyl algebra $\mathcal{W}(M)$ as follows:
\begin{gather}
	\omega_M\doteq\Gamma^*_M\omega_{\Im^+}:\mathcal{W}(M)\to\bC\notag\\
	\omega_M(W([f]))\doteq\omega_{\Im^+}(\Gamma_{M}(W([f]))),\quad\forall [f]\in X. \label{eq:omegaM}
\end{gather}
It turns out that $\omega_{M}$ coincides with the Poincar\'e vacuum if $M$ is Minkowski spacetime. In addition, for any admissible bulk background $M$, $\omega_M$ is always a quasi-free, Hadamard state, invariant under the action of all isometries \cite{Moretti:2006ks}. 
We can restrict it to ${I^+(x)}$ and, in view of {Lemma \ref{Lem:ci_crediamo_tanto},}
$$\left.\omega_{M}\right|_{\mathcal{W}({I_M^+(x)})}=\Gamma^*(\left.\omega_{\Im^+}\right|_{I^+(x)\cap\Im^+}).$$

We can then use the state $\omega_M$ to define a net of von Neumann algebras indexed by open regions $\sO \subset M$:
\begin{equation}\label{eq:AO}
\cA(\sO) := \{\pi_M(W([f])\,:\, f  \in \cC^\infty_0(\sO)\}'' \subset B(\cH_M),
\end{equation}
where $(\pi_M, \cH_M,\Omega_M)$ is the GNS representation  to $\omega_M$.

\section{Modular Hamiltonian and QNEC for deformed cones}\label{sect:4}
\subsection{Regions in $M$ and $\Im^+$}
For future convenience, first of all we introduce some notable regions. As in the previous section, we shall always work with a Bondi frame $(u,\theta,\varphi)$ on $\scri$ with $u\in\bR$.
{Moreover, we will employ the standard parametrization $\bn = \bn(\theta,\varphi) := (\sin \theta \cos \varphi, \sin\theta \sin \varphi,\cos\theta)$ of $\mathbb{S}^2$.}

\begin{definition}\label{def:regions}
Given a point $x \in M$, we set the following notation
\begin{equation}\label{Eq: Kx}
\mathsf{V}_x^+:=I^+(x)\cap M\quad\textrm{and}\quad \mathsf{K}_x :=  I^+(x)\cap\Im^+.
\end{equation} 
Given $C\in \cC^0(\mathbb{S}^2)$ a positive \textit{half-strip region} in $\Im$  is a locus of the form
\begin{equation}\label{Eq: SC}
\mathsf{S}_{C}:=\{(u,\theta,\varphi)\in\scri: C(\theta,\varphi)<u\}.
\end{equation}
\end{definition}

\noindent In the following lemma we relate these two regions.

\begin{lemma}\label{lem:KIm}
Let $x{=(t,\bx)}\in M$, and let $C_x\in \cC^\infty(\mathbb{S}^2)$ be {defined by} $C_x(\bn) := t - \bx\cdot\bn$, $\bn \in \mathbb S^2$. Then $$\mathsf{K}_x=\left\{(u,\theta,\varphi)\;:\; C_x(\theta,\varphi)<u \right\}\subset \Im^+\,.$$ In particular, $\mathsf{K}_x =\mathsf{S}_{C_x}$ is a positive half-strip region.
\end{lemma}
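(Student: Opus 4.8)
The plan is to compute explicitly the set $\mathsf{K}_x = I^+(x) \cap \scri$ using the compactification data recalled in Appendix~\ref{app:a} and in Remark~\ref{Rem: Trick Limit}. Recall from that remark that a point of $\scri$ is labeled by the Bondi coordinate $u = t' - r'$, obtained as the limit $v = t' + r' \to +\infty$ of a point $(t', \bx')$ of $M$ (with $r' = |\bx'|$ and direction $\bn = \bx'/r'$). So a point of $\scri$ with coordinates $(u,\bn)$ is the future endpoint of the null geodesic in $M$ emanating toward infinity along the direction $\bn$ on the level set $t' - r' = u$. The key observation is that $I^+(x) \cap \scri$ is governed purely by the flat (Minkowski) causal structure, because $g = \chi^2 \eta$ is conformal to $\eta$ and conformal rescalings preserve $I^+$; equivalently one may work directly on $\tM$, where $\scri$ is a genuine null hypersurface and $I^+(x)$ is determined by $\tg$ near $\scri$, which agrees with the Minkowski conformal completion.

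First I would reduce to the following flat-spacetime claim: for $x = (t,\bx) \in \bR^{1,3}$, the intersection of the chronological future $I^+(x)$ with $\scri$ consists precisely of those $(u,\bn) \in \bR \times \bS^2$ with $u > t - \bx \cdot \bn$. To see this, parametrize a future-directed null ray from $\scri$ in direction $-\bn$ (incoming) passing through the affine-time slice; concretely, the point $(u,\bn) \in \scri$ lies in $I^+(x)$ iff there is a future-directed timelike curve from $x$ reaching it, and by taking limits along null geodesics one finds the boundary case is exactly the null geodesic from $x$ in the direction $\bn$, which arrives at $\scri$ at retarded time $u = t - \bx \cdot \bn$. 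Indeed, the future-directed null geodesic from $x = (t,\bx)$ in the spatial direction $\bn$ is $s \mapsto (t + s,\, \bx + s\bn)$; along it $r'(s) = |\bx + s\bn| = s + \bx\cdot\bn + O(1/s)$ as $s \to +\infty$, so $u(s) = t' - r' = (t+s) - (s + \bx\cdot\bn) + O(1/s) \to t - \bx\cdot\bn$. Points strictly to the future of $x$ reaching $\scri$ have strictly larger $u$ (one can move a little to the future before escaping), and points with smaller $u$ are spacelike or past related to $x$. This identifies $\mathsf{K}_x = \{(u,\bn) : u > t - \bx\cdot\bn\}$.

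Next I would simply observe that $C_x(\bn) := t - \bx \cdot \bn$ is a smooth function on $\bS^2$ (it is the restriction to the unit sphere of an affine function of $\bn \in \bR^3$, hence $C_x \in \cC^\infty(\bS^2)$), so the description just obtained is literally $\mathsf{K}_x = \{(u,\theta,\varphi) \in \scri : C_x(\theta,\varphi) < u\}$, which by Definition~\ref{def:regions}, Equation~\eqref{Eq: SC}, is exactly the positive half-strip region $\mathsf{S}_{C_x}$. This completes the proof.

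The main obstacle, and the only point requiring genuine care, is justifying the limit computation rigorously on the curved background rather than on flat Minkowski: one must check that the conformal factor $\chi^2$ and the choice of conformal completion $\tM$ do not alter $I^+(x) \cap \scri$. This is handled by invoking that $\scri$ and the causal structure in its neighborhood are, by construction (Appendix~\ref{app:a}), those of the conformal completion, and that $I^\pm$ is a conformal invariant; so the computation may be performed in the completion, where near $\scri$ the relevant causal relations reduce to the Minkowski ones in the coordinates $(u,v,\theta,\varphi)$ of Remark~\ref{Rem: Trick Limit}. A secondary subtlety is that $I^+(x)$ here means $I^+_{\tM}(x)$ (Remark~\ref{rem:embedded}); one should confirm that taking the chronological future in $\tM$ and intersecting with $\scri$ yields the open half-strip (an open condition $u > C_x(\bn)$), which matches the openness of $I^+$ and is consistent with the definition of $\scri$ as the boundary locus $\scri \cap J^-_{\tM}(M) = \emptyset$.
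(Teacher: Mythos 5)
Your strategy is the same as the paper's in spirit (conformal invariance of the causal structure, Bondi/null coordinates, and the limit of the null geodesic from $x$, which is exactly Lemma \ref{lem:limitcomponents} and could simply have been cited), but the two set inclusions that constitute the lemma are asserted rather than proved, and this is a genuine gap. For the inclusion $\mathsf{K}_x\subseteq\mathsf{S}_{C_x}$ you only say ``points with smaller $u$ are spacelike or past related to $x$'', which is precisely the statement to be shown. Computing where the single null geodesic from $x$ in direction $\bn$ lands does not control where arbitrary timelike curves emanating from $x$ can land on $\scri$: a curve may swing in angle while $r\to\infty$, and the naive monotonicity of the retarded time $t-r$ along future causal curves only gives $u\geq t-|\bx|$, not $u\geq C_x(\bn)=t-\bx\cdot\bn$ (the function $y\mapsto t_y-\by\cdot\bn$, which does increase along causal curves, does not extend continuously to $\scri$, since $t_y-\by\cdot\bn=u_y+r_y(1-\bn_y\cdot\bn)$ and the last term need not vanish in the limit). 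The paper handles this by taking an arbitrary timelike curve $\gamma$ from $x$ to $y=(u,\bn)$ whose interior lies in $M$, writing the Minkowskian condition $(\gamma(s)-x)^2<0$ in the coordinates $(u_s,v_s,\bn_s)$, solving the resulting inequality for $u_s$ and letting $v_s\to+\infty$ to obtain $u\geq C_x(\bn)$, and then excluding equality because, by Lemma \ref{lem:limitcomponents}, the point $(C_x(\bn),\bn)$ is lightlike to $x$. None of this (or any substitute for it) appears in your argument.

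For the converse inclusion, your parenthetical ``one can move a little to the future before escaping'' is the right idea and can be turned into a proof that is arguably cleaner than the paper's: translate $x$ in time by $u-C_x(\bn)>0$ to get $x'\in I^+(x)$, observe via Lemma \ref{lem:limitcomponents} that the null geodesic from $x'$ in direction $\bn$, extended to $\tM$, is a causal curve ending at $(u,\bn)\in\scri$, and conclude by the push-up property $J^+(I^+(x))\subset I^+(x)$ in $\tM$. But as written you supply none of these ingredients (that the geodesic extends to $\scri$ as a causal curve of $(\tM,\tg)$ with that endpoint, and the push-up step), and your reduction ``work in flat spacetime'' is itself delicate here, since the boundary point $y$ does not belong to Minkowski space, so chronological relations involving it must be argued in $\tM$; the paper does this by explicitly constructing an intermediate point $z\in I^+(x)\cap M$ with $u>u_z>C_x(\bn)$ and a $\tg$-timelike curve from $z$ to $y$ in the Einstein-static-universe coordinates \eqref{eq:einsteinstatic}. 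In short: correct plan and correct boundary value, but both halves of the equality $\mathsf{K}_x=\mathsf{S}_{C_x}$ still need actual proofs.
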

\begin{proof}
{
Given $y =(u,\bn) \in \mathsf{K}_x$, there is a timelike curve $\gamma : [0,1] \to M \cup \scri$ such that $\gamma(0) = x$, $\gamma(1) = y$ and $\gamma(s) =(u_s,v_s,\bn_s) \in I_M^+(x)$ for all $s \in [0,1)$. Since the causal structure is invariant under conformal rescaling, it follows that the Minkowskian square
\[
(\gamma(s) -x)^2 = -u_s v_s-u_x v_x+u_x v_s+u_s v_x -\frac12(v_s-u_s)(v_x-u_x)(\bn_s \cdot \bn_x-1) < 0,
\]
where $x = (u_x,v_x,\bn_x)$. This entails, eventually for $s \to 1^-$ (since in this limit $v_s \to +\infty$),
\[
u_s > \frac{v_s[u_x-\frac12(v_x-u_x)(\bn_s\cdot \bn_x-1)]-u_x v_x}{v_s-v_x -\frac12(v_x-u_x)(\bn_s\cdot\bn_x-1)},
\]
which for $s \to 1^-$ provides $u \geq u_x-\frac12(v_x-u_x)(\bn\cdot \bn_x-1)= C_x(\bn)$. At the same time, on account of Lemma \ref{lem:limitcomponents}, the point $(C_x(\bn), \bn) \in \scri$ is lightlike to $x$, so we conclude that $u > C_x(\bn)$, and then $\mathsf{K}_x \subset \left\{(u,\theta,\varphi)\;:\; C_x(\theta,\varphi)<u \right\}$.

To prove the converse inclusion, consider $y = (u, \bn) = (T, R,\bn) \in \scri$ with $u > C_x(\bn)$, and fix $u_z \in \RR, v_z > 0$ such that $u > u_z > C_x(\bn)$ and
\[
u_z > \frac{v_z C_x(\bn)-u_x v_x}{v_z-u_x-v_x+C_x(\bn)},
\]
which is possible since the right hand side converges to $C_x(\bn)$ as $v_z \to +\infty$. As seen above, this implies $z := (u_z,v_z,\bn) = (T_z, R_z, \bn) \in I_+(x) \cap M$. Moreover, $u > u_z$ is equivalent to $T-T_z > R-R_z$, and $T_z+R_z = V_z < \frac\pi 2 = T+R$, so that $|R-R_z| < T-T_z$. As a consequence, the curve $\gamma(s) := (T_z+s(T-T_z), R_z+s(R-R_z), \bn)$, $s \in [0,1]$, connects $z$ to $y$ and is such that $\|\dot\gamma(s)\|^2 = -(T-T_z)^2+(R-R_z)^2 < 0$ thanks to Equation~\eqref{eq:einsteinstatic}, so that $y \in \mathsf{K}_x$.
}

\end{proof}

\begin{remark}
{With the notations of Lemma \ref{lem:KIm}, observe} that, given $x,y\in M$, the sets $\mathsf{K}_x, \mathsf{K}_y \subset \Im^+$ as per Equation \eqref{Eq: Kx} are such that, if $y\in J^+(x)$, then $\mathsf{K}_y \subset \mathsf{K}_x$. Furthermore $\mathsf{K}_x=\mathsf{K}_y$ if and only if $x=y$. Indeed, from $t-\bx\cdot\bn=t_y-\by\cdot\bn$ for all $\bn\in \mathbb{S}^2$, we can infer that $(\bx-\by)\cdot\bn=t-t_y=\mathrm{const.}$ for all $\bn\in\mathbb{S}^2$. Since $\bn$ is arbitrary this entails that the constant must be vanishing which implies in turn that $\bx=\by$ and therefore $t=t_y$.
\end{remark}

\subsection{One particle net on $\Im^+$}
Recalling that on $\Im^+$ we are considering the Hilbert space $\mathcal{H}$ identified in Equation \eqref{eq:hilbsp}, we can define the real one particle subspace associated to an open region $\sO\subset\Im^+$ as
\begin{equation}\label{def:H}
\sO\longmapsto H(\sO)=\overline{\{\psi\in\cC_0^\infty(\Im^+)\;\textrm{such that}\;\textrm{supp} (\psi)\subset \sO\}}^{\|\cdot\|_\cH}\,.
\end{equation}
In the following we prove a result on the structure of the one-particle Hilbert space on $\scri$ in a setting which is slightly more general that the one needed in this work.

\begin{lemma}\label{lem:dirint} Consider $c_1,c_2\in \mathbb{R}\cup\{\pm\infty\}$ with $c_1< c_2$ as well as {an open set} $A\subset\bS^2$. If we denote by
$$\sR=\{(u,\theta,\varphi)\;|\; c_1<u<c_2\;\textrm{and}\;(\theta,\varphi)\in A\}\subset\scri,$$
then the local subspace $H(R)$ decomposes as
     \begin{align}\label{eq:disrect}
       H(\sR) = \int^{\oplus_\RR}_{A} H^{(1)}(c_1,c_2)\, d\bS^2\subseteq \int^{\oplus_\RR}_{\bS^2}\cH^{(1)} d\bS^2.
          \end{align}
 Then $H(\sR)$ is separating for any $\sR = (c_1,c_2)\times A$ where $(c_1,c_2)\subset\mathbb R$ is proper,
  and standard if {also $\bS^2 \setminus A$ is a null set}. 
\end{lemma}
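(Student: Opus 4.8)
The strategy is to reduce everything to the direct-integral decomposition in~\eqref{eq:disrect} and then invoke the known properties of the one-particle net $H^{(1)}$ of the $U(1)$-current fibrewise. First I would establish the decomposition itself. Given $\psi \in \cC_0^\infty(\scri)$ with $\supp \psi \subset \sR = (c_1,c_2) \times A$, the Fourier transform $\hat\psi(E,\theta,\varphi)$ restricted to $E > 0$ is, for each fixed $(\theta,\varphi) \in A$, exactly the image of $\psi(\cdot,\theta,\varphi) \in \cC_0^\infty(c_1,c_2)$ under the map defining $H^{(1)}(c_1,c_2) \subset \cH^{(1)}$; the scalar product~\eqref{eq:2point} on $\cH$ is, under the identification~\eqref{eq:hilbsp}, literally $\int_{\bS^2} \langle \hat\psi(\cdot,\theta,\varphi),\hat\psi'(\cdot,\theta,\varphi)\rangle_{\cH^{(1)}}\, d\bS^2$, so the assignment $\psi \mapsto (\theta,\varphi)\mapsto [\psi(\cdot,\theta,\varphi)]$ is isometric into $\int^{\oplus_\RR}_A H^{(1)}(c_1,c_2)\,d\bS^2$. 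Taking closures gives ``$\subseteq$''; for ``$\supseteq$'' I would approximate an arbitrary measurable field $(\theta,\varphi)\mapsto f_{\theta,\varphi} \in H^{(1)}(c_1,c_2)$ by simple fields $\sum_j \mathbf 1_{A_j} f_j$ with $f_j \in \cC_0^\infty(c_1,c_2)$, and then mollify in the angular variables to land inside $\cC_0^\infty(\sR)$, using that $A$ is open so each $A_j$ can be taken with closure in $A$. This is the one genuinely technical point, but it is a routine density argument.

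Once~\eqref{eq:disrect} is in place, both the separating and the standard assertions follow from the standard-subspace analogue of the direct-integral structure results (see App.~\ref{app:directintegral}). For \emph{separating}: a field $f = \int^{\oplus_\RR}_A f_{\theta,\varphi}\,d\bS^2$ lies in $H(\sR) \cap iH(\sR)$ iff $f_{\theta,\varphi} \in H^{(1)}(c_1,c_2) \cap iH^{(1)}(c_1,c_2)$ for a.e.\ $(\theta,\varphi) \in A$; since $(c_1,c_2)$ is a proper interval, $H^{(1)}(c_1,c_2)$ is separating by the Standardness property in Remark~\ref{rem:propertiescurrent} (item 4), so each fibre is $\{0\}$, hence $f = 0$. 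No hypothesis on $A$ is needed here beyond measurability.

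For \emph{standard} we additionally need cyclicity, i.e.\ $H(\sR) + iH(\sR)$ dense in $\cH \simeq \int^{\oplus_\RR}_{\bS^2}\cH^{(1)}\,d\bS^2$. Fibrewise, $H^{(1)}(c_1,c_2) + iH^{(1)}(c_1,c_2)$ is dense in $\cH^{(1)}$ by Standardness again; the direct-integral version then gives that $\int^{\oplus_\RR}_A \big(H^{(1)}(c_1,c_2)+iH^{(1)}(c_1,c_2)\big)\,d\bS^2$ is dense in $\int^{\oplus_\RR}_A \cH^{(1)}\,d\bS^2$. Finally, if $\bS^2 \setminus A$ is a null set then $\int^{\oplus_\RR}_A \cH^{(1)}\,d\bS^2 = \int^{\oplus_\RR}_{\bS^2}\cH^{(1)}\,d\bS^2 = \cH$, so $H(\sR)$ is cyclic; together with separating this is standardness. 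The only obstacle worth flagging is making the approximation in the first paragraph respect \emph{both} the support condition in $u$ and smoothness/decay in the angular variables simultaneously, which is why openness of $A$ (allowing exhaustion by relatively compact open subsets) is used there; everything after that is a fibrewise citation.
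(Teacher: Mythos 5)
Your argument is correct and lands on the same skeleton as the paper's proof (direct-integral decomposition of $\cH$, fibrewise properties of the $U(1)$-current net, and Lemma \ref{lem:propertiesstandard} for the separating/cyclicity claims), but the way you establish the identity \eqref{eq:disrect} is the mirror image of the paper's. You obtain $H(\sR)\subseteq \int^{\oplus_\RR}_A H^{(1)}(c_1,c_2)\,d\bS^2$ essentially for free, by noting that the $u$-slices of any $\psi\in\cC^\infty_0(\sR)$ lie in $H^{(1)}(c_1,c_2)$ (and vanish off $A$) and that the direct integral of closed fibres is a closed real subspace of $\cH$; your work goes into the reverse inclusion, via density of simple fields with fibres in $\cC^\infty_0(c_1,c_2)$ and mollification of the angular indicators inside the open set $A$. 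The paper does the opposite: the inclusion $\int^{\oplus_\RR}_A H^{(1)}(c_1,c_2)\,d\bS^2\subseteq H(\sR)$ is immediate there, because the product functions $f(u)h(\theta,\varphi)$ lie in $H(\sR)$ and are total in the (real) tensor product, while the effort is spent showing these products are total in $H(\sR)$: one first shows $\psi(\cdot,\bn)\in H^{(1)}(c_1,c_2)$ for a.e.\ $\bn$ by passing to an a.e.-convergent subsequence of an approximating sequence in $\cC^\infty_0(\sR)$, and then runs an orthogonality argument that needs a countable dense family in $H^{(1)}(c_1,c_2)$ and in $L^2(A)$ to control the null sets. Your route avoids those measure-theoretic manipulations at the price of the (routine, as you say) simple-field plus mollification step; just make sure the angular factor is taken in real $L^2(A)$ (the paper's $\cC^k$ spaces are real-valued, and the complex structure on $\cH^{(1)}$ is not pointwise multiplication by $i$ on configuration space), and state explicitly that a direct integral of closed real subspaces is closed, since that is what makes your first inclusion immediate. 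For the last part, you verify separating and cyclic directly through $H(\sR)\cap iH(\sR)$ and $\overline{H(\sR)+iH(\sR)}$ using parts (b) and (c) of Lemma \ref{lem:propertiesstandard}, whereas the paper computes $H(\sR)'$ via part (a) and uses that $H(\sR)$ is cyclic iff $H(\sR)'$ is separating; these are equivalent in substance and both deliver the stated conclusions.
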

\begin{proof}
On account of Equation~\eqref{eq:hilbsp}, the  Hilbert space $\cH$ can be identified with $\int_{\mathbb{S}^2}^{\oplus} \cH^{(1)}d\mathbb{S}^2$. Furthermore, setting $I := (c_1,c_2)$, there holds the following unitary equivalence: 
    $$H^{(1)}(I)\otimes_\RR L^2(A;d\mathbb{S}^2)\simeq\int^{{\oplus_\RR}}_{A}  H^{(1)}(I)d\mathbb{S}^2 \subset\cH.$$
However, notice that { the elements $f \otimes h$ with $f \in \cC^\infty_0(I)$, $h \in \cC_0^\infty(A)$, form a total set in $ H^{(1)}(I)\otimes_\RR L^2(A,d\mathbb{S}^2)$, and that} $H(\sR)$ contains all functions of the form {$(u,\theta,\varphi) \mapsto f(u)h(\theta,\varphi)$ for all such $f$ and $h$. It is then sufficient to show that these functions form a total set in $H(\sR)$ too. To this end, observe first that if $\psi \in H(\sR) \subset L^2(\RR_+\times \mathbb{S}^2, E\,dEd\mathbb{S}^2)$, there exists a sequence $\{\psi_k\} \subset \cC^\infty_0(\scri)$ with $\supp \psi_k \subset \sR$ such that
\[
\|\psi-\psi_k\|^2 = \int_{\bS^2} d\bS^2(\bn) \int_0^{+\infty} dE\,E |\hat\psi(E,\bn)-\hat\psi_k(E,\bn)|^2 \to 0.
\]
This entails, passing to a subsequence if necessary, that $\psi(\cdot,\bn)$ is the limit in $\cH^{(1)}$ of $\{\psi_k(\cdot,\bn)\} \subset \cC^\infty_0(I)$ for almost all $\bn \in \bS^2$. Therefore $\psi(\cdot,\bn) \in H^{(1)}(I)$ for almost all $\bn \in \bS^2$, and $\psi(\cdot,\bn) = 0$ for almost all $\bn \not\in A$.
Assume now that $\psi \in H(R)$ is such that, for all $f \in \cC^\infty_0(I)$, $h \in \cC_0^\infty(A)$,
\[
\langle \psi, f h \rangle_{\cH} = \int_{A}d\bS^2(\bn) h(\bn) \int_0^{+\infty}dE\,E \hat \psi(-E,\bn) \hat f(E) = 0.
\]
By the density of $\cC_0^\infty(A)$ in $L^2(A,\bS^2)$, it follows that for all $f \in\cC^\infty_0(I)$ there is a null set $N_f \subset \bS^2$ such that
\[
\int_0^{+\infty}dE \,E\hat\psi(-E,\bn) \hat f(E) = 0, \qquad \forall \bn \in \bS^2\setminus N_f.
\] 
In turn, since $\cC^\infty_0(I)$ is dense in $H^{(1)}(I)$ which is separable, we can find a sequence $\{f_k\} \subset \cC^\infty_0(I)$ dense in $H^{(1)}(I)$, so that the last equation together with the fact that $\psi(\cdot, \bn) \in H^{(1)}(I)$ finally implies that $\psi = 0$. Consequently the functions $(u,\theta,\varphi) \mapsto f(u)h(\theta,\varphi)$, $f \in \cC^\infty_0(I)$, $g \in \cC_0^\infty(A)$, form a total set in $H(\sR)$ and Equation~\eqref{eq:disrect} is proven.} 

The statements about the separating and cyclicity properties of $H(R)$ follow from the above direct integral decomposition and from Lemma \ref{lem:propertiesstandard}{, observing that
\[
H(\sR) = \int_{\bS^2}^{\oplus_\RR} H^{(1)}(I) \chi_A(\bn) \,d\bS^2(\bn) \quad \Rightarrow \quad H(\sR)' =  \int_{\bS^2}^{\oplus_\RR}[ H^{(1)}(I)' \chi_A(\bn)+\cH^{(1)}\chi_{\bS^2\setminus A}(\bn)] \,d\bS^2(\bn),
\]
and recalling that $H^{(1)}(I)$ is cyclic and separating for all proper intervals $I \subset \RR$, and that $H(\sR)$ is cyclic if and only if $H(\sR)'$ is separating.}
\end{proof}

Given $C\in\cC^0(\bS^2)$, we define a distorted lightlike translation as the following unitary operator on the direct integral picture of $\cH$:

   \begin{align}\label{eq:non-const-translation}
    (T_C \xi)(u,\theta,\varphi) =  \xi(u-C(\theta,\varphi), \theta,\varphi).
   \end{align}

\noindent Similarly, we define the distorted lightlike dilations as
   \begin{align}\label{eq:non-const-dilations}
    (D_C\xi)(u,\theta,\varphi) = \xi(e^{-C(\theta,\varphi)}u, \theta,\varphi).
   \end{align}
If $C$ is constant, these maps coincide with the usual translations and the dilations.

\begin{proposition}\label{pr:null-covariance}
The family of real subspaces $H(\sO)$, see Definition \ref{def:H}, indexed by open connected regions $\sO\subset\Im^+$, is covariant with respect to $T_C, D_C$, $C\in\cC^0(\bS^2)$, namely 
$$T_C H(\sO) = H(\sO+C)\quad\textrm{and}\quad D_{C} H(\sO) = H(e^{C}\cdot \sO),$$
where $\sO + C = \{(u + C(\theta,\varphi), \theta,\varphi): (u,\theta,\varphi) \in \sO\}$
and $e^C\cdot \sO = \{(e^{C(\theta,\varphi)}u, \theta,\varphi): (u, \theta,\varphi) \in \sO\}$ {are open subsets of $\scri$}.
\end{proposition}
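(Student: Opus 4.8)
The plan is to verify the two covariance identities by checking them first on the natural dense set of smooth compactly supported functions, and then extending by continuity using that $T_C$ and $D_C$ are unitaries. I would treat the translation identity $T_C H(\sO) = H(\sO + C)$ first, and then the dilation identity $D_C H(\sO) = H(e^C \cdot \sO)$ by an entirely analogous argument (or by conjugating the former).

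\medskip

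\textbf{Step 1: set-theoretic preliminaries.} First I would record that the map $\Phi_C : (u,\theta,\varphi) \mapsto (u+C(\theta,\varphi),\theta,\varphi)$ is a homeomorphism of $\scri$ (with inverse $\Phi_{-C}$), and likewise $\Psi_C:(u,\theta,\varphi)\mapsto (e^{C(\theta,\varphi)}u,\theta,\varphi)$ is a homeomorphism; hence $\sO + C = \Phi_C(\sO)$ and $e^C\cdot\sO = \Psi_C(\sO)$ are indeed open and connected whenever $\sO$ is, which justifies the parenthetical claim in the statement. I would also note that since $C$ is only assumed continuous, $\Phi_C$ need not be smooth, so it does \emph{not} directly map $\cC^\infty_0(\scri)$ to itself; this is the point that needs care.

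\medskip

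\textbf{Step 2: the core mapping claim.} The heart of the proof is the identity, for the \emph{generating} (pre-closure) real subspaces,
\[
T_C\,\overline{\{\psi\in\cC_0^\infty(\scri):\supp\psi\subset\sO\}} = \overline{\{\psi\in\cC_0^\infty(\scri):\supp\psi\subset\sO+C\}}.
\]
For the inclusion $\subseteq$: given $\psi\in\cC^\infty_0(\scri)$ with $\supp\psi\subset\sO$, the function $T_C\psi = \psi\circ\Phi_{-C}$ is continuous with compact support contained in $\Phi_C(\supp\psi)\subset\sO+C$, but it need not be smooth. The fix is a standard mollification: approximate $T_C\psi$ in $\cH$ by a sequence $\{\eta_k\}\subset\cC^\infty_0(\scri)$ with $\supp\eta_k$ contained in a slightly enlarged, still compact, subset of the open set $\sO+C$ (possible because $\supp(T_C\psi)$ is compact and $\sO+C$ is open, so there is room to convolve with a small smooth bump in the $u$-variable and, if necessary, in the angular variables). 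Since $\cH \simeq L^2(\RR_+\times\bS^2,E\,dEd\bS^2)$ and $T_C\psi$ is bounded with compact support, one checks that such mollifiers converge to $T_C\psi$ in the $\cH$-norm; hence $T_C\psi\in H(\sO+C)$. The reverse inclusion follows by applying the same argument with $C$ replaced by $-C$ and $\sO$ by $\sO+C$, using $T_{-C} = T_C^{-1}$ and $(\sO+C)+(-C) = \sO$. Passing to closures and using that $T_C$ is unitary (hence a homeomorphism of $\cH$ commuting with closure) gives $T_C H(\sO) = H(\sO+C)$.

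\medskip

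\textbf{Step 3: the dilation case and conclusion.} For $D_C$ the argument is identical: $\Psi_C$ is a homeomorphism of $\scri$ with $\Psi_C^{-1} = \Psi_{-C}$, and $D_C\psi = \psi\circ\Psi_{-C}$ is again continuous with compact support in $e^C\cdot\sO$ but not smooth, so the same mollification-in-$u$ (together with the fact that $D_C$ is unitary on $\cH$, which one sees directly from the change of variables $E\mapsto e^{-C}E$ in the inner product \eqref{Eq: Real Scalar Product} applied fibrewise) yields $D_C H(\sO) = H(e^C\cdot\sO)$. \textbf{The main obstacle} is precisely the non-smoothness issue in Step 2: one must ensure that translating/dilating a smooth compactly supported function by a merely continuous profile $C$ can be corrected back into $\cC^\infty_0$ without leaving the target open region and without losing $\cH$-convergence. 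This is handled by the compactness of supports plus mollification, but it is the only non-formal point; everything else is a change of variables and the unitarity of $T_C, D_C$.
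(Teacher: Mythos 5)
Your proof is correct and rests on the same two pillars as the paper's: unitarity of $T_C,D_C$ and density of smooth compactly supported functions, with support tracking under the homeomorphisms $\Phi_C,\Psi_C$. The difference is one of emphasis: the paper establishes unitarity by writing $T_C=\int_{\bS^2}^\oplus U^{(1)}(\tau_{\RR_+}(C(\theta,\varphi)))\,d\bS^2$ and $D_C=\int_{\bS^2}^\oplus U^{(1)}(\delta_{\RR_+}(C(\theta,\varphi)))\,d\bS^2$, and then simply asserts that for $\xi\in\cC^\infty_0(\scri)$ supported in $\sO$ the image $T_C\xi$ is ``smooth'' and supported in $\sO+C$, concluding by density; strictly speaking smoothness only holds fibrewise in $u$ for fixed angles (since $C$ is merely continuous), so your Step 2, which repairs $T_C\psi$ into a genuine element of $\cC^\infty_0(\scri)$ supported in $\sO+C$ by mollification, addresses explicitly the one point the paper glosses over, and your use of $T_{-C}=T_C^{-1}$ for the reverse inclusion matches the paper's ``density and unitarity.'' Two small repairs you should make: (i) justify unitarity of $T_C$ as you did for $D_C$ (fibrewise it is multiplication by the phase $e^{iEC(\theta,\varphi)}$ in the variable $E$, or invoke the direct integral of the unitaries $U^{(1)}$ as the paper does); (ii) the reason the mollifiers converge in $\cH$ is \emph{not} that $T_C\psi$ is bounded with compact support --- the norm $\int_0^{+\infty}E|\hat f(E,\bn)|^2\,dE\,d\bS^2$ is of fractional Sobolev type in $u$, and a bounded compactly supported function (e.g.\ a characteristic function in $u$) need not even lie in $\cH$. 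The correct argument is that $T_C\psi\in\cH$ because $T_C$ is unitary, and mollification in $u$ acts in Fourier space as multiplication by $\hat\rho(\epsilon E)$, bounded by $1$ and tending to $1$ pointwise, so dominated convergence gives convergence in $\cH$; angular mollification is handled fibrewise in $E$ by Young's inequality on $L^2(\bS^2)$ plus dominated convergence in $E$. Alternatively, you can avoid mollifying altogether by approximating $C$ uniformly by smooth $C_k$: then $T_{C_k}\psi\in\cC^\infty_0(\scri)$, $\|T_{C_k}\psi-T_C\psi\|_\cH\to0$ again by dominated convergence on the phases, and for $\|C_k-C\|_\infty$ small the support of $T_{C_k}\psi$ stays inside $\sO+C$ because $\supp\psi$ is compact in the open set $\sO$. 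With either repair your argument is complete.
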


\begin{proof}
On the Hilbert space $\cH=\int_{\mathbb{S}^2}^\oplus \cH^{(1)}\;d\mathbb{S}^2$ distorted translations and dilations act as
$$T_C= \int_{\mathbb{S}^2}^\oplus U^{(1)}(\tau_{\RR_+}(C(\theta,\varphi)))\;d\mathbb{S}^2 \;\;\text{and}\;\;D_C= \int_{\mathbb{S}^2}^\oplus U^{(1)}(\delta_{\RR_{+}}(C(\theta,\varphi)))\;d\mathbb{S}^2,$$
where $U^{(1)}$ {and $\tau_{\RR_+}$, $\delta_{\RR_+}$ are} defined in Equations\ \eqref{Eq: U representation of PSL(2,R)}, {\eqref{eq:transdil}.}
The operators $T_C$ and $D_C$ are unitaries on $\cH$ since they are defined as direct integrals of unitary operators. Considering  $\xi\in\cC_0^\infty(\Im^+)$ supported in $\sO$, in the direct integral picture we have that $(T_C\xi)(u,\theta,\varphi)=\xi(u-C(\theta,\varphi),\theta,\varphi)$ and $(D_C\xi)(u,\theta,\varphi)=\xi(e^{-C(\theta,\varphi)}u,\theta,\varphi)$  are smooth functions compactly supported in $\sO + C $ and $e^C\cdot \sO$ respectively. By a density argument {and unitarity} we can infer the covariance property.
\end{proof}
	
\noindent The net $\Im^+\supset\sO\mapsto H(\sO)$ satisfies the following properties:
\begin{enumerate}
\item \textit{Isotony:} $$\sO_1\subseteq \sO_2,\; \text{then }\;H(\sO_1)\subseteq H(\sO_2)\,;$$
\item \textit{Locality:} $$\sO_1\cap\sO_2=\emptyset\; \text{then }\;H(\sO_1)\subseteq H(\sO_2)'\,;$$
\item \textit{Standardness:} The subspaces  $H(\sO)$ are standard/cyclic/separating according to Lemma \ref{lem:dirint}; 
\item \textit{Covariance:} $$T_CH(\sO)=H(\sO+C)\,, \qquad D_C H(\sO)=H(e^C\cdot \sO)\,.$$
\end{enumerate}

\noindent As a direct consequence of the properties that $T_C$ is decomposable and $H(\sR+C)=T_C H(\sR)$, the following corollary holds true.

\begin{corollary}\label{cor:stripdis}
Consider $c_1,c_2\in \mathbb{R}\cup\{\pm\infty\}$ with $c_1< c_2$ as well as $A\subset\bS^2$ {open, and set $\sR := (c_1,c_2)\times A$}. Given $C\in\cC^0(\mathbb{S}^2)$, then 
$$H(\sR+C)=\int_A^{\oplus_\RR} H^{(1)}(c_1+C(\theta,\varphi),c_2+C(\theta,\varphi)) d\mathbb{S}^2\subseteq \int_{\mathbb{S}^2}^\oplus \cH^{(1)}d\mathbb{S}^2\,.$$
As in Lemma \ref{lem:dirint}, $H(\sR+C)$ is separating for any $\sR=(c_1,c_2)\times A$ where $(c_1,c_2)\subset \RR$ is proper, and standard if {$\mathbb{S}^2\setminus A$ is a null set}. In particular, for a continuous function $C:\bS^2\rightarrow \RR$,
\begin{gather}\label{eq:HSC}
H(\mathsf{S}_C)={T_C H((0,+\infty)\times\bS^2)=}\int_{\bS^2}^{\oplus_\RR} H^{(1)}(C(\theta,\varphi),+\infty) d\mathbb{S}^2(\theta,\varphi)
\end{gather}
is a standard subspace.
\end{corollary}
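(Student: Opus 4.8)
The plan is to derive the corollary from Lemma~\ref{lem:dirint} and Proposition~\ref{pr:null-covariance}, exploiting that the distorted translation $T_C$ is a \emph{decomposable} unitary on $\cH = \int_{\bS^2}^\oplus \cH^{(1)}\,d\bS^2$ which acts on each fibre by an ordinary translation of the real line, together with the M\"obius covariance of the $U(1)$-current net.

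First I would transport the decomposition of Lemma~\ref{lem:dirint} along $T_C$. Writing $I := (c_1,c_2)$, that lemma gives $H(\sR) = \int_{\bS^2}^{\oplus_\RR} H^{(1)}(I)\,\chi_A(\bn)\,d\bS^2(\bn)$, while from the proof of Proposition~\ref{pr:null-covariance} we have $T_C = \int_{\bS^2}^\oplus U^{(1)}(\tau_{\RR_+}(C(\bn)))\,d\bS^2(\bn)$; since $C$ is continuous and $s\mapsto U^{(1)}(\tau_{\RR_+}(s))$ is strongly continuous, $\bn\mapsto U^{(1)}(\tau_{\RR_+}(C(\bn)))$ is a measurable field of unitaries. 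The image of a measurable field of closed real subspaces under a measurable field of unitaries is again a measurable field of closed real subspaces, and its direct integral equals the image of the original direct integral (a routine fact on direct integrals, cf.\ App.~\ref{app:directintegral}); hence
\[
T_C H(\sR) = \int_{\bS^2}^{\oplus_\RR} U^{(1)}(\tau_{\RR_+}(C(\bn)))\,H^{(1)}(I)\,\chi_A(\bn)\,d\bS^2(\bn).
\]
By M\"obius covariance (Remark~\ref{rem:propertiescurrent}(3)) applied to the translation $\tau_{\RR_+}(C(\bn))$, which carries $I=(c_1,c_2)$ to $(c_1+C(\bn),c_2+C(\bn))$ — still valid if $c_1=-\infty$ or $c_2=+\infty$, since $H^{(1)}$ and its covariance extend to the family $\cI$ of non-dense open intervals of $\bS^1$, half-lines included — each fibre equals $H^{(1)}(c_1+C(\bn),c_2+C(\bn))$. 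Finally Proposition~\ref{pr:null-covariance}, whose proof does not use connectedness of the region, gives $T_C H(\sR)=H(\sR+C)$, yielding the first displayed identity of the corollary.

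For the separating and standard claims I would repeat the argument at the end of the proof of Lemma~\ref{lem:dirint}. Since $(c_1,c_2)$ is a proper interval, so is $(c_1+C(\bn),c_2+C(\bn))$ for every $\bn$, hence $H^{(1)}(c_1+C(\bn),c_2+C(\bn))$ is standard by Remark~\ref{rem:propertiescurrent}(4). Computing the symplectic complement fibrewise via Lemma~\ref{lem:propertiesstandard},
\[
H(\sR+C)' = \int_{\bS^2}^{\oplus_\RR}\big[\,H^{(1)}(c_1+C(\bn),c_2+C(\bn))'\,\chi_A(\bn) + \cH^{(1)}\,\chi_{\bS^2\setminus A}(\bn)\,\big]\,d\bS^2(\bn),
\]
and recalling that a direct integral of real subspaces is separating, resp.\ cyclic, iff almost every fibre is, while a closed real subspace is cyclic iff its symplectic complement is separating. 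The fibres of $H(\sR+C)$ are separating ($H^{(1)}(c_1+C(\bn),c_2+C(\bn))$ on $A$, $\{0\}$ off $A$), so $H(\sR+C)$ is always separating; the fibres of $H(\sR+C)'$ are separating on $A$ because $H^{(1)}(c_1+C(\bn),c_2+C(\bn))$ is cyclic, but equal $\cH^{(1)}\neq\{0\}$ off $A$, so $H(\sR+C)'$ is separating — equivalently $H(\sR+C)$ is cyclic, hence standard — exactly when $\bS^2\setminus A$ is a null set. The ``in particular'' statement is then the special case $\sR=(0,+\infty)\times\bS^2$, for which $\sR+C=\mathsf{S}_C$, $A=\bS^2$ is conull and $(0,+\infty)$ is a proper interval.

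The argument is essentially bookkeeping; the only point requiring genuine care is the direct-integral manipulation of the second paragraph — checking that $\bn\mapsto H^{(1)}(c_1+C(\bn),c_2+C(\bn))$ is a bona fide measurable field of real subspaces and that its image under the decomposable unitary $T_C$ may be computed fibrewise — together with quoting the covariance and standardness of $H^{(1)}$ in the form valid for unbounded intervals (Remark~\ref{rem:propertiescurrent}, which treats the family $\cI$ of non-dense open intervals of $\bS^1$). Beyond this I do not expect any obstacle.
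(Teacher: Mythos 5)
Your proposal is correct and follows essentially the same route as the paper, which derives the corollary in one stroke from the decomposability of $T_C$ (acting fibrewise as $U^{(1)}(\tau_{\RR_+}(C(\theta,\varphi)))$), the covariance $T_C H(\sR)=H(\sR+C)$ of Proposition~\ref{pr:null-covariance}, and the decomposition and standardness argument of Lemma~\ref{lem:dirint}. Your write-up merely makes explicit the fibrewise-image and measurability bookkeeping that the paper leaves implicit, so there is nothing to object to.
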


\subsection{One particle net on $M$}
{Recalling the definition of the state $\omega_M$ on $\cW(M)$ in terms of the state $\omega_{\scri}$ on $\cW(\scri)$ as per Equation ~\eqref{eq:omegaM}, by the uniqueness of the GNS construction, we obtain that the associated GNS representations $(\pi_M, \cH_M, \Omega_M)$ and $(\Pi,\cF(\cH),\Omega_{\scri})$ are related by 
\[
\cH_M \simeq \overline{\Pi(\Gamma_M(\cW(M))\Omega_{\scri}}^{\|\cdot\|_{\cF(\cH)}}\subset\cF(\cH), \qquad \pi_M \simeq \Pi\circ\Gamma_M(\cdot)|_{\cH_M}.
\]
This implies that the local von Neumann algebras of the theory on $M$ are linked to those of the theory on $\scri$ by
\[
\cA(\sO) = R(N(\sO))|_{\cH_M}, \qquad N(\sO) := \overline{\{ \Upsilon_M([f])\,:\, f \in \cC^\infty_0(\sO)\}}^{\|\cdot\|_\cH} \subset \cH,
\]
with the} map $\Upsilon_M$ defined in Equation \eqref{Eq: Upsilon}. {Thanks to its definition and to Huygens' principle, it is easy to see that $N(\sO) \subset H(\partial J^+(\sO)\cap\scri)$ for any open set $\sO \subset M$.} In addition the inclusion is proper in general, {\em cf.} Remark~\ref{rem:double} below. In particular, we {have}, for every set $\mathsf{V}_x^+$ as per Equation~\eqref{Eq: Kx}:
\begin{equation}\label{def:N}
N(\mathsf{V}_x^+)=\overline{\{\psi=\Upsilon_M([f])\in\cC_0^\infty(\Im^+)\;\textrm{with}\;\textrm{supp}(f)\subset \mathsf{V}_x^+\}}^{\|\cdot\|_\cH}\,,
\end{equation}
{and i}n the following we establish that {$N(\mathsf{V}_x^+)$ actually} recovers the whole standard space {$H(\mathsf{K}_x)$}. 

We start by proving an ancillary result.

\begin{lemma}
Given $(M,g)$ as in Section \ref{sec:geometry} and a conformal factor $\Omega = 2\chi^{-1} \cos V$ in a neighbourhood of $\Im^+$, 
where $V$ is defined as in Equation \eqref{Eq: U and V}, then, for $f \in \cC^\infty_0(M)$, it holds that
\[
\Upsilon_M([f])(u,\boldsymbol{n}) =  \int_{\RR^3} d\boldsymbol{x}\, (\chi^3 f)(u+\boldsymbol{n}\cdot \boldsymbol{x}, \boldsymbol{x}), \qquad (u,\boldsymbol{n}) \in \RR \times \mathbb{S}^2,
\]
where $\Upsilon_M$ is defined in Equation \eqref{Eq: Upsilon}.
\end{lemma}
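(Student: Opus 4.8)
The plan is to compute $\Upsilon_M([f]) = \widetilde{\mathcal{G}}(\Omega^{-3}f)|_{\Im^+}$ explicitly by exploiting the relation \eqref{Eq:forseserveforseno}, namely $\widetilde{\mathcal{G}}|_{\cC^\infty_0(M)} = \Omega^{-1}\circ \mathcal{G}\circ \Omega^3$, which reduces the problem to understanding the boundary limit of the Minkowskian causal propagator acting on $\chi^3 f$ (since $\mathcal{G}$ is conformally the flat propagator for the conformally coupled field). Concretely, I would first write $\Upsilon_M([f]) = \Omega^{-1}\big(\mathcal{G}(\chi^3 f)\big)\big|_{\Im^+}$ and recall that $\mathcal{G}(\chi^3 f) = \Phi_{\chi^3 f}$ is the unique spacelike-compact solution of the flat conformally coupled wave equation with source $\chi^3 f$; equivalently, on Minkowski, $\Phi_{\chi^3 f}(t,\bx) = \int_{\RR^4} \mathcal{G}(t,\bx;t',\bx')(\chi^3 f)(t',\bx')\,dt'd\bx'$, where $\mathcal{G}$ here is the flat Pauli--Jordan function. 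Then I would insert the classical closed form of the Minkowski causal propagator in four dimensions, $\mathcal{G}(x;x') = \frac{1}{2\pi}\,\mathrm{sign}(t-t')\,\delta\big((t-t')^2-|\bx-\bx'|^2\big)$, and perform the boundary limit using Remark \ref{Rem: Trick Limit}: evaluation at $\scri$ is the limit $v = t+r \to \infty$ along $u = t - r$ fixed, with the rescaling by $\Omega^{-1} = \tfrac12 \chi \sec V$, which near $\scri$ behaves like $\tfrac{\chi}{2}\cdot\tfrac{\text{something}}{r^{-1}}$, i.e. like $\sim r\chi$ up to the explicit constant. So the overall structure is: solve by the retarded-minus-advanced kernel, take $r \to \infty$ with $t - r = u$ fixed, multiply by the diverging conformal factor, and check that the $1/r$ falloff of the free solution is exactly compensated.

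The key computation is the large-$r$ asymptotics of $r\,\Phi_{\chi^3 f}(u+r, r\bn)$. Writing $g := \chi^3 f \in \cC^\infty_0(\RR^4)$, one has $\Phi_g(t,\bx) = \frac{1}{4\pi}\int_{\RR^3}\frac{g(t-|\bx-\bx'|,\bx') - g(t+|\bx-\bx'|,\bx')}{|\bx-\bx'|}\,d\bx'$ (retarded minus advanced). Set $t = u + r$, $\bx = r\bn$ and let $r \to \infty$ with $\bx'$ ranging over the (fixed, compact) support of $g$. Then $|\bx - \bx'| = r - \bn\cdot\bx' + O(1/r)$, so the retarded argument $t - |\bx - \bx'| \to u + \bn\cdot\bx'$, the advanced argument $t + |\bx-\bx'| \to +\infty$ (hence its contribution vanishes since $g$ is compactly supported in time), and $\frac{1}{|\bx-\bx'|} = \frac1r + O(1/r^2)$. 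Therefore $r\,\Phi_g(u+r, r\bn) \to \frac{1}{4\pi}\int_{\RR^3} g(u + \bn\cdot\bx', \bx')\,d\bx'$. Matching the constant: with $\Omega = 2\chi^{-1}\cos V$ and $V \to \pi/2$ at $\scri$, one has $\Omega^{-1} \sim \tfrac{\chi}{2}(\cos V)^{-1}$, and the relation between $\cos V$ and the radial coordinate $r$ near $\scri$ (from Appendix \ref{app:a}, $\tan V = v = t+r$, so $\cos V \sim 1/v \sim 1/(2r)$ as $v\to\infty$) gives $\Omega^{-1} \sim \chi r$, which cancels the $1/r$ and absorbs the $\frac{1}{4\pi}$ versus the normalization in \eqref{eq:2point}; I would track these constants carefully so that the stated clean formula $\Upsilon_M([f])(u,\bn) = \int_{\RR^3}(\chi^3 f)(u + \bn\cdot\bx,\bx)\,d\bx$ comes out with coefficient exactly $1$.

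The main obstacle I anticipate is the bookkeeping of conformal factors and coordinate normalizations — specifically, confirming that the three ingredients (the $\Omega^3$ premultiplication in \eqref{Eq:forseserveforseno}, the $\Omega^{-1}$ postmultiplication, and the precise asymptotic behaviour of $\Omega$ in terms of $r$ near $\scri$ as fixed in Appendix \ref{app:a} and Remark \ref{Rem: Trick Limit}) conspire to give coefficient $1$ and argument exactly $u + \bn\cdot\bx$ rather than, say, $u - \bn\cdot\bx$ or $2u + \cdots$. The sign in the argument in particular depends on the orientation conventions for $\bn$ and for the Bondi coordinate $u = t - r$; one must check consistency with Lemma \ref{lem:KIm}, where $C_x(\bn) = t - \bx\cdot\bn$, so that $f$ supported near $x = (t,\bx)$ should produce a boundary datum localized near $u = t - \bx\cdot\bn$, which forces the argument to be $u + \bn\cdot\bx'$ evaluated against $\bx'$ near $\bx$ — consistent, as a sanity check, with the claimed formula. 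A secondary, more technical point is justifying the interchange of the $r\to\infty$ limit with the $\bx'$-integral and the smoothness/decay of the result in $(u,\bn)$ so that $\Upsilon_M([f]) \in \cC^\infty(\Im^+)$ (indeed in $\cC^\infty_0(\Im^+)$ as used in \eqref{def:N}); this follows from dominated convergence using the compact support of $g$ together with uniform control of the error terms in the expansion of $|\bx - \bx'|$, and from differentiating under the integral sign.
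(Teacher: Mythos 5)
Your proposal follows essentially the same route as the paper's proof: use the conformal relations to reduce to the flat Pauli--Jordan propagator applied to $\chi^3 f$, write it as the retarded-minus-advanced kernel, expand $|\bx-\bx'| = r-\bn\cdot\bx' + O(1/r)$ at fixed $u$ so the advanced contribution drops out and the divergent conformal prefactor compensates the $1/r$ decay, and justify the interchange of limit and integral by dominated convergence. The only bookkeeping slip (the very point you flagged) is your intermediate formula: the correct chain is $\Upsilon_M([f]) = (\Omega\chi)^{-1}\,\mathcal{G}_{\bM}(\chi^3 f)\big|_{\Im^+}$ rather than $\Omega^{-1}\mathcal{G}(\chi^3 f)$, and since $\Omega\chi = 2\cos V \sim 2/v$ near $\scri$ no stray factor of $\chi$ survives in the limit, which is exactly the cancellation the paper exploits.
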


\begin{proof}
Considering the operator $P$ as in Equation \eqref{eq:wave}, we denote by $\mathcal G_{\bM}$ its causal propagator on Minkowski spacetime, by $\mathcal G$ that on $M$ and by $\widetilde {\mathcal G}$ that on $\widetilde M$. They are connected by Equation \eqref{Eq:forseserveforseno}, namely
\[
\mathcal G = \chi^{-1} \circ \mathcal{G}_{\bM} \circ \chi^3, \qquad \widetilde{\mathcal G}|_{C_0^\infty(M)} = \Omega^{-1} \circ \mathcal G \circ \Omega^{3}.
\]
Consequently, on account of Equation \eqref{Eq: Upsilon}, $\Upsilon_M([f]) = (\Omega\chi)^{-1} \mathcal{G}_{\bM}(\chi^3 f)|_{\Im^+}$. Working in light cone coordinates $v = y_0 + r$, $u = y_0-r$, $\boldsymbol{n} = \frac1r \boldsymbol{y}$ and for $v > u$,
\[\begin{split}
\mathcal G_{\bM}(\chi^3 f)&(y_0,\boldsymbol{y}) \\
&= \frac2{v-u}\int_{\RR^3} \frac{d\boldsymbol{x}}{|\boldsymbol{n}- \frac{2\boldsymbol{x}}{v-u}|} \left[(\chi^3f)\left(\frac{v}2\left(1-\left|\boldsymbol{n}-\frac{2\boldsymbol{x}}{v-u}\right|\right)+\frac{u}2\left(1+\left|\boldsymbol{n}-\frac{2\boldsymbol{x}}{v-u}\right|\right),\boldsymbol{x}\right)\right.\\
&\phantom{\frac2{v-u}\int_{\RR^3} \frac{d\boldsymbol{x}}{|\boldsymbol{n}- \frac{2\boldsymbol{x}}{v-u}|} \left[\right.}-\left.  (\chi^3f)\left(\frac{v}2\left(1+\left|\boldsymbol{n}-\frac{2\boldsymbol{x}}{v-u}\right|\right)+\frac{u}2\left(1-\left|\boldsymbol{n}-\frac{2\boldsymbol{x}}{v-u}\right|\right),\boldsymbol{x}\right)\right].
\end{split}\]
Taking into account that, in a neighbourhood of $\Im^+$, $\Omega \chi = 2 \cos V = 2(1+v^2)^{-1/2}$, see Remark \ref{Rem: Trick Limit}, { that $|\bn-\frac{2\bx}{v-u}| =1- \frac{2\bn\cdot\bx}v+o(v^{-1})$ as $v \to +\infty$,} and that $\chi^3 f$ has compact support, it descends that the integrand of the above expression, divided by $\Omega \chi$, converges, as $v \to +\infty$, for fixed $u \in \RR$, to
\[
(\chi^3 f)(u+\boldsymbol{n}\cdot \boldsymbol{x},\boldsymbol{x}).
\]
Moreover, if we denote by $K \subset \RR^3$ the compact set which is the projection on $\RR^3$ of the support of $\chi^3 f$, we can find $c > 0$ such that the bit of the integrand between square brackets in the above integral is bounded by $c\chi_K(\bx)$, $\chi_K$ being the restriction of $\chi$ to $K$. In addition,  if $R > 0$ is such that $\bx \in K$ implies $|\bx| < R$, we can find $v$ such that
\[
\frac{2R}{v-u} < \frac12 \quad \Rightarrow \quad \left|\boldsymbol{n}-\frac{2\bx}{v-u}\right| \geq \left|1 - \frac{2|\bx|}{v-u}\right| > 1-\frac{2R}{v-u} > \frac12. 
\]
Therefore the statement follows by applying the dominated convergence theorem.
\end{proof}

\begin{proposition}\label{prop:Vx}
Given $x\in M$, the sets $\mathsf{V}_x^+ \subset M$ and $\mathsf{K}_x\subset \Im^+$ {defined in~\eqref{Eq: Kx}} satisfy 
$$N(\mathsf{V}_x^+)=H(\mathsf{K}_x)\,,$$ 
where $N(\mathsf{V}_x^+)$, respectively $H(\mathsf{K}_x)$, are defined as in Equation~\eqref{def:N}, respectively \eqref{def:H}.
\end{proposition}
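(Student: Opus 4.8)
The plan is to establish the two inclusions $N(\mathsf{V}_x^+)\subseteq H(\mathsf{K}_x)$ and $H(\mathsf{K}_x)\subseteq N(\mathsf{V}_x^+)$ separately, the first being a short support argument and the second the substantial one.

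\emph{The inclusion $N(\mathsf{V}_x^+)\subseteq H(\mathsf{K}_x)$.} Write $x=(t,\bx)$. By the preceding lemma, for $f\in\cC^\infty_0(\mathsf{V}_x^+)$ one has $\Upsilon_M([f])(u,\bn)=\int_{\RR^3}(\chi^3f)(u+\bn\cdot\by,\by)\,d\by$. Since $g=\chi^3f$ ranges over all of $\cC^\infty_0(\mathsf{V}_x^+)$ as $f$ does, and since $\mathsf{V}_x^+$ coincides, as a subset of $\RR^4$, with the Minkowski chronological future $I^+_{(\RR^4,\eta)}(x)$ — the causal structure being conformally invariant — it suffices to note that the integrand above vanishes unless $(u+\bn\cdot\by,\by)\in I^+_{(\RR^4,\eta)}(x)$, which forces $u>t-\bn\cdot\by+|\by-\bx|\geq t-\bn\cdot\bx=C_x(\bn)$. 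Hence $\supp\Upsilon_M([f])\subseteq\mathsf{S}_{C_x}=\mathsf{K}_x$ by Lemma~\ref{lem:KIm}, so that $\Upsilon_M([f])\in H(\mathsf{K}_x)$, and taking the closure gives the claim. In passing, this shows that $N(\mathsf{V}_x^+)$ is the closure of $\{\cR g\;:\;g\in\cC^\infty_0(I^+_{(\RR^4,\eta)}(x))\}$, where $(\cR g)(u,\bn):=\int_{\RR^3}g(u+\bn\cdot\by,\by)\,d\by$, an expression independent of $\chi$.

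\emph{Reduction to $x=0$.} A change of variables $\by\mapsto\by+\bx$ shows that the distorted translation $T_{C_x}$ of Proposition~\ref{pr:null-covariance} satisfies $T_{C_x}(\cR g)=\cR g_x$ with $g_x(s,\by):=g(s-t,\by-\bx)$, and $g\mapsto g_x$ maps $\cC^\infty_0(I^+_{(\RR^4,\eta)}(0))$ onto $\cC^\infty_0(I^+_{(\RR^4,\eta)}(x))$; hence $T_{C_x}N(\mathsf{V}_0^+)=N(\mathsf{V}_x^+)$. On the other hand $C_0\equiv0$, so $\mathsf{K}_0=(0,+\infty)\times\bS^2$, and $T_{C_x}H(\mathsf{K}_0)=H(\mathsf{K}_0+C_x)=H(\mathsf{S}_{C_x})=H(\mathsf{K}_x)$ by Proposition~\ref{pr:null-covariance} and Lemma~\ref{lem:KIm}. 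It therefore suffices to prove $N(\mathsf{V}_0^+)=H(\mathsf{K}_0)$.

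\emph{The inclusion $H(\mathsf{K}_0)\subseteq N(\mathsf{V}_0^+)$.} As $N(\mathsf{V}_0^+)\subseteq H(\mathsf{K}_0)$ and both are closed real subspaces, it is enough to show $N(\mathsf{V}_0^+)'\subseteq H(\mathsf{K}_0)'$, which then yields $N(\mathsf{V}_0^+)=N(\mathsf{V}_0^+)''=H(\mathsf{K}_0)''=H(\mathsf{K}_0)$. By the direct-integral decomposition in Lemma~\ref{lem:dirint} and Haag duality of the $U(1)$-current (Remark~\ref{rem:propertiescurrent}), $H(\mathsf{K}_0)'=\int^{\oplus_\RR}_{\bS^2}H^{(1)}((-\infty,0))\,d\bS^2=H((-\infty,0)\times\bS^2)$. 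So let $\phi\in N(\mathsf{V}_0^+)'$, i.e.\ $\Im\langle\phi,\cR g\rangle_\cH=0$, equivalently $\sigma(\phi,\cR g)=0$, for every $g\in\cC^\infty_0(I^+_{(\RR^4,\eta)}(0))$. Writing $\sigma$ through \eqref{sigma}, integrating by parts and substituting $s=u+\bn\cdot\by$, one finds
\[
\sigma(\phi,\cR g)=\int_{\RR^4}g(s,\by)\,\partial_s\Xi(s,\by)\,ds\,d\by,\qquad \Xi(s,\by):=\int_{\bS^2}\phi(s-\bn\cdot\by,\bn)\,d\bS^2(\bn),
\]
where, since $|\bn|^2=1$, $\Xi$ is a distributional solution of $(\partial_s^2-\Delta_\by)\Xi=0$. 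Consequently $\partial_s\Xi$ vanishes on the open cone $\{s>|\by|\}$, which contains a neighbourhood of the ray $\{(s,0):s>0\}$; taking $\by$-derivatives at $\by=0$ shows that $\mu_\alpha:=\int_{\bS^2}\bn^\alpha\phi(\cdot,\bn)\,d\bS^2$, which defines an element $[\mu_\alpha]\in\cH^{(1)}$ of norm controlled by $\|\phi\|_\cH$, restricts on $(0,+\infty)$ to a polynomial, for every multi-index $\alpha$. Since an element of $\cH^{(1)}$ agreeing on a half-line with a polynomial has a representative supported in the complementary half-line — because $\cH^{(1)}$ quotients out all polynomials through its infrared structure — this gives $[\mu_\alpha]\in H^{(1)}((-\infty,0))$, i.e.\ $\int_{\bS^2}\bn^\alpha\,\partial_u\phi(\cdot,\bn)\,d\bS^2$ is supported in $(-\infty,0]$. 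By Stone--Weierstrass the same holds with $\bn^\alpha$ replaced by any continuous function of $\bn$, and testing in $u$ against $\cC^\infty_0((0,+\infty))$ yields $\partial_u\phi(\cdot,\bn)|_{(0,+\infty)}=0$, that is $[\phi(\cdot,\bn)]\in H^{(1)}((-\infty,0))$, for a.e.\ $\bn$; hence $\phi\in H((-\infty,0)\times\bS^2)=H(\mathsf{K}_0)'$.

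\emph{Where the work lies.} The substantive point is the last step: since $\phi\in\cH$ is only a distribution on $\scri$ modulo fibrewise constants, one must justify carefully the identity $\sigma(\phi,\cR g)=\int g\,\partial_s\Xi$, the regularity of the transform $\Xi$, and, above all, the assertion that an $\cH^{(1)}$-class with polynomial restriction to a half-line lies in the standard subspace of the complementary half-line. For the last assertion one writes $(iE)^{|\alpha|}\widehat{\mu_\alpha}(E)$ (for $E>0$) as the Fourier transform of a distribution supported in $(-\infty,0]$; the condition $[\mu_\alpha]\in\cH^{(1)}$ forces the first $|\alpha|$ moments of that distribution to vanish, so that its $|\alpha|$-fold antiderivative is again supported in $(-\infty,0]$. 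This is the mechanism by which the polynomial (harmonic) ambiguity of $\Xi$ — equivalently, the invariance of $\mathsf{K}_0$ under constant supertranslations — is absorbed.
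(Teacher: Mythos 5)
Your strategy is genuinely different from the paper's. The paper proves the hard inclusion by transporting to $\scri$ a known set of generators of the Minkowski massless-field cone algebra (the distributions $g(y_0)P(\boldsymbol\partial_y)\delta(\by-\bx)$ of \cite[Lemma 8.2]{BDL07}), computing their images $(-1)^kg^{(k)}(u+\bn\cdot\bx)P_k(\bn)$ and showing these are total in $H(\mathsf{K}_x)=\int^{\oplus_\RR}_{\bS^2}H^{(1)}(C_x(\bn),+\infty)\,d\bS^2$ via density of spherical polynomials and the identity \eqref{eq:k1}. You instead argue by duality: using the same explicit formula for $\Upsilon_M$, you reduce by covariance (Proposition \ref{pr:null-covariance}) to $x=0$ and show $N(\mathsf{V}_0^+)'\subseteq H(\mathsf{K}_0)'$ through a Radon-type analysis of the condition $\sigma(\phi,\cR g)=0$, differentiating the sphere average $\Xi$ at $\by=0$ to get vanishing of the moments $\int_{\bS^2}\bn^\alpha\partial_u^{|\alpha|+1}\phi\,d\bS^2$ on $(0,\infty)$, and then invoking Haag duality and Lemma \ref{lem:propertiesstandard}. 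This is an attractive route: it bypasses \cite{BDL07} entirely and works directly with the boundary data. The trade-off is that you must handle generic $\phi\in\cH$, which is only an $L^2(E\,dE\,d\bS^2)$ object, so the identities $\Im\langle\phi,\cR g\rangle=\int g\,\partial_s\Xi$, the smoothness of $\Xi$ in $\by$ with values in distributions in $s$, and the fibrewise conclusions all need the momentum-space justification you only gesture at; these points are fixable (e.g.\ $E\hat\phi$ is locally integrable near $E=0$ by Cauchy--Schwarz, and sphere integrals are controlled by $\|\phi\|_\cH$), but they are real work, as you acknowledge.

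The one step that is wrong as written is the absorption of the polynomial ambiguity: ``an element of $\cH^{(1)}$ agreeing on a half-line with a polynomial has a representative supported in the complementary half-line --- because $\cH^{(1)}$ quotients out all polynomials'' is not a correct justification. The space $(X+\RR)/\RR$ quotients out \emph{constants} only; nonconstant polynomials are not elements of $\cH^{(1)}$ at all, and indeed for a smooth $\mu$ with $\mu|_{(0,\infty)}=p$ one has $\Im\langle\mu,[f]\rangle=\tfrac12\int p'f$ for $f\in\cC^\infty_0((0,\infty))$, which is nonzero unless $p$ is constant --- so the statement you need is precisely that the $\cH^{(1)}$-finiteness kills the nonconstant part, and that requires an argument. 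Your closing sketch points to the right mechanism, but ``the first $|\alpha|$ moments of that distribution vanish'' is not meaningful for a distribution supported in the unbounded half-line $(-\infty,0]$; one would have to phrase it as a Hardy-space/Paley--Wiener statement about the boundary behaviour at $E=0$ of the holomorphic extension of $(iE)^{|\alpha|+1}\hat\mu_\alpha$ to a half-plane, with some care about growth. Fortunately the lemma you need is an immediate consequence of ingredients already in the paper: by Haag duality and \eqref{eq:k1}, $[\mu]\in H^{(1)}((-\infty,0))$ iff $\Im\langle\mu,[f^{(k)}]\rangle=0$ for all $f\in\cC^\infty_0((0,\infty))$, and $\Im\langle\mu,[f^{(k)}]\rangle=\tfrac{(-1)^k}{2}\int\mu^{(k+1)}f=0$ because $\mu^{(k+1)}$ vanishes on $(0,\infty)$ --- exactly the hypothesis you have established. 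With that replacement (and the distributional bookkeeping above) your proof closes; note that \eqref{eq:k1} is also the ingredient the paper itself uses at the corresponding point of its totality argument.
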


\begin{proof}
As already remarked, we have that $N(\mathsf{V}_{x}^+)\subseteq H(\mathsf{K}_{x})$, since by the definition of $\Upsilon_M([f])$ in Equation \eqref{Eq: Upsilon}, {if $f \in \cC^\infty_0(\mathsf{V}_x^+)$} then $\Upsilon_M([f])\in\cC^\infty_0(\Im^+)$ and $\supp(\Upsilon_M([f]))\subseteq\mathsf{K}_x$. 
To prove the converse inclusion, we start by observing that {thanks to the previous Lemma} the $u$-Fourier transform of $\Upsilon_M([\chi^{-3}f])$ is
\[\begin{split}
\int_{\RR} \frac{du}{\sqrt{2\pi}} \,e^{i E u} \Upsilon_M([\chi^{-3} f])(u,\bn) &= \int_{\RR} \frac{du}{\sqrt{2\pi}} \,e^{iE(u+\bn \cdot \bx)} \int_{\RR^3} d\bx \,e^{-i E \bn\cdot \bx}f(u+\bn \cdot \bx,\bx) \\
&= {(2\pi)^{3/2}}\hat f(E,E\bn),
\end{split}\]
{where $\hat f$ on the right hand side denotes the 4-dimensional Fourier transform of $f$, }so that
\[
\| \Upsilon_M([\chi^{-3}f])\|^2 ={(2\pi)^3}\int_{\mathbb S^2} d\mathbb S^2(\boldsymbol{n}) \int_0^{+\infty}  dE\, E |\hat f(E,E\bn)|^2 = {(2\pi)^3}\int_{\RR^3} \frac{d\boldsymbol{p}}{|\boldsymbol{p}|} |\hat f(|\boldsymbol{p}|,\boldsymbol{p})|^2
\]
is {proportional to} the squared norm of $\hat f|_{\partial \mathsf{V}_x^+}$ in the massless Klein-Gordon field one particle space $L^2\left(\partial\mathsf{V}^+,\theta(p_0)\delta(p^2)d^4p\right)$, where we can identify $p_0=E$. As a consequence, the map associating to $f \in \cC_0^\infty(M)$ with $\operatorname{supp} f \subset \mathsf{V}_x^+$ the element $\Upsilon_M([\chi^{-3}f]) \in N(\mathsf{V}_x^+)$, extends to a unitary between the massless Klein-Gordon field standard subspace of {$L^2(\partial \mathsf{V}^+)$} associated to $\mathsf{V}_x^+$ with $N(\mathsf{V}_x^+)$. Now according to~\cite[Lemma 8.2]{BDL07}, the von Neumann algebra of the massless Klein-Gordon field associated to $\mathsf{V}_x^+$ is generated by the Weyl unitaries of the distributions
\begin{equation}\label{eq:H}
G(y) = g(y_0) P(\boldsymbol{\partial}_y) \delta(\by-\bx), \qquad y \in \RR^4
\end{equation}
with $g$ a real smooth compactly supported function with support in $(x_0, +\infty)$ and $P$ a real polynomial. This entails, by projection on the one particle space, that the standard subspace associated to $\mathsf{V}_x^+$ is the norm closure of the set $\mathcal T_x$ of distributions~\eqref{eq:H}, so that, by the above remark, the elements $\Upsilon_M([\chi^{-3}G])$, $G \in \mathcal T_x$, are dense in $N(\mathsf{V}_x^+)$. 

Moreover, we note that for $P = P_k$ a real homogeneous polynomial of degree $k$,
 $$\Upsilon_M([\chi^{-3}G])(u,\bn)=(-1)^kg^{(k)}(u+\bn\cdot\bx)P_k(\bn).$$ 
To conclude the proof, it is then sufficient to show that the above functions form a total set in $H(\mathsf{K}_x)${. To this end, recall that, thanks to Lemma~\ref{lem:KIm} and Corollary~\ref{cor:stripdis}, $H(\mathsf{K}_x) =$}$\int^{\oplus_\RR}_{\mathbb S^2} H^{(1)}(x_0 - \bn\cdot \bx, +\infty)\,d\mathbb{S}^2(\bn)$ {and} assume that $\psi \in H(\mathsf{K}_x)$ is real orthogonal to all such functions. Then
\[
\Re \langle \psi, \Upsilon_M([\chi^{-3}G])\rangle = \int_{\mathbb S^2} (-1)^k\Re\langle \psi(\bn), g^{(k)}(\cdot + \bn\cdot\bx)\rangle P_k(\bn)\,d\mathbb S^2(\bn)= 0,
\]
and since the restrictions to $\mathbb S^2$ of the real homogeneous polynomial form a total set in $L_{\mathbb R}^2(\mathbb S^2)$, this entails that $\psi(\bn) \in H^{(1)}(x_0 - \bn\cdot \bx, +\infty)$ is real orthogonal to $g^{(k)}(\cdot + \bn\cdot\bx)$ for almost every $\bn \in \mathbb S^2$. But since $H^{(1)}(x_0 - \bn\cdot \bx, +\infty)$ is separable, there is a countable set of real smooth functions $g$ with compact support in $(x_0,+\infty)$ such that  the functions $g^{(k)}(\cdot + \bn\cdot\bx)$ are dense in $H^{(k+1)}(x_0 - \bn\cdot \bx, +\infty)= H^{(1)}(x_0 - \bn\cdot \bx, +\infty)$, so that finally $\psi = 0$.
\end{proof}

{The cyclicity of $H(\mathsf{K}_x)$, the strong continuity of Weyl operators and the previous Proposition imply in particular that
\[
\cF(\cH) = \overline{R(H(\mathsf{K}_x))\Omega_{\scri}}^{\|\cdot\|_{\cF(\cH)}} = \overline{\Pi\circ\Gamma_M(\cW(\mathsf{V}_x^+) )\Omega_{\scri}}^{\|\cdot\|_{\cF(\cH)}} \subset \cH_M,
\]
i.e., actually $\cH_M = \cF(\cH)$ and
\begin{equation}\label{eq:AOR}
\cA(\sO) = R(N(\sO)) \qquad \text{for all open }\sO \subset M.
\end{equation}}

\begin{remark}\label{rem:double} A double cone in {$M$} can be defined as follows: let $x,y\in M$ with $y$ in the future of $x$, then $\mathsf{D}_{x,y}:=\mathsf{V}^+_x\cap \mathsf{V}^-_y$. According to light propagation one can define $\mathsf{K}_{x,y} := \partial J^+(\mathsf{D}_{x,y})\cap\scri$ as the causal image of $\mathsf{D}_{x,y}$ on $\Im^+$.
Similarly to the proof of Lemma \ref{lem:KIm} one can  see that if $\mathsf{D}_{x,y}$ and $\mathsf{D}_{x^\prime,y^\prime}$ are two double cones in {$M$} which are spacelike separated, then $\mathsf{K}_{x,y}\cap \mathsf{K}_{x^\prime,y^\prime}\neq\emptyset$. In particular, taking $\psi=\Upsilon_M([f])$ with $\supp \,f \subseteq\mathsf{D}_{x,y}$ and $\psi'=\Upsilon_M([f'])$ with $\supp \,f' \subseteq\mathsf{D}_{x',y'}$, we see that $\sigma(\psi,\psi')=0$ ($\sigma$ as in \eqref{sigma}) even if {in general} $\supp\,\psi\cap\supp\,\psi'\neq\emptyset$. In particular one concludes that   $N(\mathsf{D}_{x,y})$ is not dense in $H(\mathsf{K}_{x,y})$. At the same time, if $\mathsf{D}_{x,y}$ and $\mathsf{D}_{x^\prime,y^\prime}$ are timelike separated, then $\mathsf{K}_{x,y}\cap\mathsf{K}_{x^\prime,y^\prime}=\emptyset$ and $\supp \psi\cap\psi'=\emptyset$ because of the Huygens {principle}.
\end{remark}

\subsection{Modular Hamiltonian for deformed cones}\label{Subec: Modular Hamiltonian}
In this short section, we investigate the structure of the modular group for a distinguished class of deformed cones in $M$ defined as follows.

{
\begin{definition}\label{Def: Deformed Cone}
Let $C\in\cC^0(\mathbb{S}^2)$ and let $\mathsf{S}_C$ be the positive half-strip on $\Im^+$. The deformed cone associated to $C$ is defined as the causal completion of $\mathsf{S}_C$, i.e., $\mathsf{V}_C:= \mathsf{S}_C''$.
\end{definition}}

We {now investigate the relation between the} real subspace
$$N(\mathsf{V}_C) = \overline{\{\psi=\Upsilon_M([f])\in\cC_0^\infty(\Im^+)\;\textrm{with}\;\textrm{supp}(f)\subset \mathsf{V}_C\}}^{\|\cdot\|_\cH}\,$$
and  the standard subspaces $H(\mathsf{S}_C)$ defined as per Equation \eqref{def:H}. {To this end, we need a couple of technical lemmas. The first one gives an alternative description of deformed cones.

\begin{lemma}
The deformed cone $\mathsf{V}_C\subset M$ associated to $C\in\cC^0(\mathbb{S}^2)$  is the set of points $x \in M$ such that $\overline{\mathsf{K}_x} \subset \mathsf{S}_C$, where $\mathsf{K}_x$ is defined as per Equation \eqref{Eq: Kx}.
\end{lemma}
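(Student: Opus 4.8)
\emph{Plan of proof.} I would first reduce the assertion to a pointwise inequality on $\bS^2$ via Lemma~\ref{lem:KIm}, and then establish the two inclusions directly in $\tM$, using that $\mathsf{S}_C\subset\scri^+$ sits on the future null boundary of $M$. By Lemma~\ref{lem:KIm}, $\mathsf{K}_x=\mathsf{S}_{C_x}$ with $C_x(\bn)=t-\bx\cdot\bn$ continuous, so $\overline{\mathsf{K}_x}=\{(u,\bn):u\ge C_x(\bn)\}$, and since $\mathsf{S}_C$ is open, $\overline{\mathsf{K}_x}\subset\mathsf{S}_C$ is equivalent to $C(\bn)<C_x(\bn)$ for all $\bn\in\bS^2$. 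Putting $\sW_C:=\{x\in M:C(\bn)<C_x(\bn)\ \forall\,\bn\in\bS^2\}$, an open subset of $M$, the claim is $\mathsf{V}_C=\mathsf{S}_C''=\sW_C$.

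\emph{Geometric input.} From the conformal construction of App.~\ref{app:a} and (the proof of) Lemma~\ref{lem:KIm} I would extract: (i) for $x\in M$, $J^+_{\tM}(x)\cap\scri^+=\overline{\mathsf{K}_x}=\{(u,\bn):u\ge C_x(\bn)\}$, the outgoing null geodesic from $x$ in the direction $\bn$ reaching $\scri^+$ at $(C_x(\bn),\bn)$; and, two points of $\scri^+$ on distinct generators being causally unrelated in $\tM$, $J^\pm_{\tM}(p)\cap\scri^+$ is the forward, resp.\ backward, generator through $p\in\scri^+$; (ii) a future-directed causal curve issued from $M$ can leave $\overline M=M\cup\Im$ only across $\scri^+$ (no other component of $\Im$ being future-reachable from $M$), and after crossing $\scri^+$ it never re-enters $\overline M$; (iii) $\scri^+$ is achronal in $\tM$, and the chronological future of any $p\in\scri^+$ enters $\tM\setminus\overline M$ and accumulates at $p$. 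Using (i) and that $C$ is bounded on the compact sphere gives $\mathsf{K}_x\cap\mathsf{S}_C\neq\emptyset$, hence $M\subset J^-_{\tM}(\mathsf{S}_C)$, so $J^-_{\tM}(M)\subset J^-_{\tM}(\mathsf{S}_C)$; together with $\scri^+\cap J^-_{\tM}(M)=\emptyset$ and the fact that every generator of $\scri^+$ meets $\mathsf{S}_C$ to the future, this forces $\mathsf{S}_C'$ to be disjoint from $J^-_{\tM}(M)$ and from $\scri^+$. Hence $\mathsf{S}_C'\subset\tM\setminus\overline M$ and $J^+_{\tM}(q)\cap M=\emptyset$ for every $q\in\mathsf{S}_C'$.

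\emph{The inclusion $\sW_C\subseteq\mathsf{V}_C$.} Let $x\in\sW_C$ and $q\in\mathsf{S}_C'$. Then $x\notin J^+_{\tM}(q)$ since $x\in M$. If $x\in J^-_{\tM}(q)$, a future causal curve from $x$ to $q$ leaves $\overline M$ across $\scri^+$ (it must, as $q\notin\overline M$) at some $y$ with, by (i) and $x\in\sW_C$, $u_y\ge C_x(\bn_y)>C(\bn_y)$; thus $y\in\mathsf{S}_C$ and $q\in J^+_{\tM}(\mathsf{S}_C)$, contradicting $q\in\mathsf{S}_C'$. So $x$ is causally unrelated to every point of $\mathsf{S}_C'$, i.e.\ $x\notin J^+_{\tM}(\mathsf{S}_C')\cup J^-_{\tM}(\mathsf{S}_C')$; as $\sW_C$ is open, $\sW_C\subset\operatorname{int}\!\big(\tM\setminus(J^+_{\tM}(\mathsf{S}_C')\cup J^-_{\tM}(\mathsf{S}_C'))\big)=\mathsf{S}_C''=\mathsf{V}_C$.

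\emph{The inclusion $\mathsf{V}_C\subseteq\sW_C$, and the main obstacle.} I would argue by contraposition: given $x\in M\setminus\sW_C$, so $C_x(\bn_0)\le C(\bn_0)$ for some $\bn_0$, I must find $q\in\mathsf{S}_C'$ with $x\in J^-_{\tM}(q)$. Assume first the strict inequality $C_x(\bn_0)<C(\bn_0)$. Then $p_0:=(C_x(\bn_0),\bn_0)\in\partial I^+(x)\cap\scri^+\subset J^+_{\tM}(x)$ lies outside $\overline{\mathsf{S}_C}=\{(u,\bn):u\ge C(\bn)\}$, and using global hyperbolicity of $\tM$ together with the boundedness from below of $C$ one shows $p_0\notin\overline{J^+_{\tM}(\mathsf{S}_C)}$, so there is an open $U\ni p_0$ with $U\cap J^+_{\tM}(\mathsf{S}_C)=\emptyset$. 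By (iii), $U\cap I^+(p_0)\cap(\tM\setminus\overline M)\neq\emptyset$; any $q$ in this set obeys $q\in I^+(p_0)\subset I^+(x)$ (as $p_0\in J^+_{\tM}(x)$), so $x\in J^-_{\tM}(q)$, while $q$ has a neighbourhood (inside $U\cap I^+(p_0)$) disjoint from $J^+_{\tM}(\mathsf{S}_C)$ (from $U$) and from $J^-_{\tM}(\mathsf{S}_C)$ (since $I^+(p_0)$ is open and contained in $I^+_{\tM}(\scri^+)$, which by achronality of $\scri^+$ is disjoint from $J^-_{\tM}(\scri^+)\supset J^-_{\tM}(\mathsf{S}_C)$); hence $q\in\mathsf{S}_C'$ and $x\notin\mathsf{S}_C''$. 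The non-strict case ($C_x\ge C$ on $\bS^2$ with $C_x(\bn_0)=C(\bn_0)$) reduces to the previous one, since $\mathsf{S}_C''$ is open and every neighbourhood of $x$ contains points $x'$ with $C_{x'}(\bn_0)<C(\bn_0)$ (the differential of $x'\mapsto t'-\bx'\cdot\bn_0$ being nonvanishing), which are excluded from $\mathsf{S}_C''$ by the strict case. The delicate step is precisely this construction of the witness $q$ in the unphysical region $\tM\setminus\overline M$, and the estimate $p_0\notin\overline{J^+_{\tM}(\mathsf{S}_C)}$; it is here that global hyperbolicity of $\tM$ and the explicit causal structure near $\scri^+$ from App.~\ref{app:a} are really used.
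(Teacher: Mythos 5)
Your proposal is correct and shares the paper's overall strategy: both translate $\overline{\mathsf{K}_x}\subset\mathsf{S}_C$ into the pointwise inequality $C_x>C$ via Lemma~\ref{lem:KIm}, and both prove the two inclusions by causal-complement arguments at the same (standard, partly implicit) level of rigour about the causal structure of $\scri$. The easy inclusion is essentially the paper's argument rearranged: the paper perturbs $x$ by openness and extends the causal curve reaching $\mathsf{S}_C'$ until it crosses $\scri$ inside $\overline{\mathsf{K}_x}\subset\mathsf{S}_C$, while you show directly that every point of $\sW_C$ is causally unrelated to every point of $\mathsf{S}_C'$ and then invoke openness of $\sW_C$. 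The genuine divergence is in the hard inclusion, i.e.\ in how the witness in $\mathsf{S}_C'$ causally connected to $x$ is produced: the paper works on the Einstein-static-universe line $(T,R_y,\bn_y)$ through the bad point $y\in\overline{\mathsf{K}_x}\setminus\mathsf{S}_C$ and picks $T_y<T<\min_{\bn}T'(\bn)$, where $T'(\bn)$ marks where $\partial J^+$ of the boundary point of $\mathsf{S}_C$ on the generator $\bn$ meets that line; you instead take $q\in I^+(p_0)$, $p_0=(C_x(\bn_0),\bn_0)$, inside a neighbourhood disjoint from $\overline{J^+(\mathsf{S}_C)}$, whose existence you get from global hyperbolicity of $\tM$ (closedness of $J^+$ of the compact set $\overline{\mathsf{S}_C}$, compact because $C$ is bounded below) plus the spacelike separation of distinct generators of $\scri$. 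Your route uses slightly heavier general machinery where the paper does a hands-on coordinate construction, but it buys a cleaner treatment of the degenerate configuration $C_x(\bn_0)=C(\bn_0)$: there the paper's interval $(T_y,\min_{\bn}T'(\bn))$ becomes empty (since $z(\bn_y)=y$ forces $T'(\bn_y)=T_y$), whereas your separate perturbation argument, exploiting openness of $\mathsf{S}_C''$ and the nonvanishing differential of $x'\mapsto t'-\bx'\cdot\bn_0$, disposes of that case explicitly.
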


\begin{proof}
Let $\widetilde{\mathsf{V}}_C := \{ x \in M\,:\, \overline{\mathsf{K}_x} \subset \mathsf{S}_C\}$. We have to show that $\widetilde{\mathsf{V}}_C = \mathsf{S}_C''$. 
To begin with, one verifies, using Lemma~\ref{lem:KIm}, that  $x \in \widetilde{\mathsf{V}}_C$ if and only if $C_x(\theta, \varphi) > C(\theta, \varphi)$ for all $(\theta,\varphi) \in [0,\pi]\times[0,2\pi)$, so the continuity of $C$ implies that $\widetilde{\mathsf{V}}_C$ is open. 

Consider now $x \in \widetilde{\mathsf{V}}_C$. If $x \not \in \mathsf{S}''_C$, thanks to the previous observation we may assume, at the cost of replacing $x$ with some sufficiently near point, that $x$ can be connected with some $y \in \mathsf{S}_C'$ by a causal curve $\gamma$. Then an inextensible causal curve $\widetilde \gamma$ extending $\gamma$ will connect $y$ with  some point $z \in \overline{\mathsf{K}_x} \subset \mathsf{S}_C$, which is a contradiction, showing that $\widetilde{\mathsf{V}}_C \subset \mathsf{S}_C''$.

Conversely, consider $x \in \mathsf{S}''_C$ and assume by contradiction that there exists $y= (T_y,R_y,\bn_y) \in \overline{\mathsf{K}_x}$ such that $y \not \in \mathsf{S}_C$, that is, $u_y=\tan \left(\frac{T_y-R_y}2\right) \leq C(\bn_y)$. Since $\scri$ is a null hypersurface, then $y$ is spacelike or lightlike  separated from any other point of $\mathsf{S}_C$. For any point $z(\bn)=(T(\bn),R(\bn),\bn)\in \partial_{\scri} \mathsf{S}_C$ (so that $\tan(\frac{T(\bn)-R(\bn)}2) = C(\bn)$) there exists $z'(\bn)= (T'(\bn), R_y, \bn_y) \in \partial_{\widetilde M}J^+(z(\bn))$ such that $T'(\bn)>T_y$.  Moreover for every $T \in (T_y,T'(\bn))$ the point $(T, R_y,\bn_y)$ is spacelike to $z(\bn)$. Yet, since $T'(\bn)$ depends continuously on $\bn$ by the continuity of $C$, we can find $T$ such that $T_y < T < \min_{\bn\in\mathbb S^2} T'(\bn)$ and the point $z:=(T, R_y,\bn_y)$ belongs to $\mathsf{S}_C'$. At the same time, $z$ lies in the causal future of $y \in \overline{\mathsf{K}}_C$, and so also of $x \in \mathsf{S}_C''$, which is a contradiction.
\end{proof}

\noindent As an immediate consequence, if $x \in \mathsf{V}_C$, then $\mathsf{V}^+_x \subset \mathsf{V}_C$.
 }
\begin{lemma}
The collection $\{\mathsf{K}_x;\; x \in\mathsf{V}_C\}$ forms an open covering of $\mathsf{S}_C$.
\end{lemma}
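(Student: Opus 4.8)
The plan is to show two things: first, that each $\mathsf{K}_x$ with $x \in \mathsf{V}_C$ is an open subset of $\mathsf{S}_C$, and second, that every point of $\mathsf{S}_C$ lies in some $\mathsf{K}_x$ with $x \in \mathsf{V}_C$.

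For the first part, fix $x \in \mathsf{V}_C$. By the previous lemma, $\mathsf{V}_C = \widetilde{\mathsf{V}}_C = \{x \in M : \overline{\mathsf{K}_x} \subset \mathsf{S}_C\}$, so in particular $\mathsf{K}_x \subset \overline{\mathsf{K}_x} \subset \mathsf{S}_C$. Moreover $\mathsf{K}_x = I^+(x) \cap \scri$ is open in $\scri$ since $I^+(x)$ is open in $\widetilde{M}$. Hence each $\mathsf{K}_x$ is a legitimate open piece of the claimed covering.

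For the second part, take a point $y = (u_y, \bn_y) \in \mathsf{S}_C$, so that $u_y > C(\bn_y)$. I would look for a point $x \in M$ such that, on the one hand, $y \in \mathsf{K}_x$, i.e.\ $C_x(\bn_y) < u_y$ with $C_x(\bn) = t - \bx\cdot\bn$ (using Lemma~\ref{lem:KIm}), and on the other hand $\overline{\mathsf{K}_x} \subset \mathsf{S}_C$, i.e.\ $C_x(\bn) > C(\bn)$ for all $\bn \in \mathbb{S}^2$ (using the characterization from the previous lemma). The natural first attempt is to take $x$ with purely time-like position, $\bx = 0$, so that $C_x \equiv t$ is constant; then the two conditions become $C(\bn) < t$ for all $\bn$, together with $t < u_y$. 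Since $u_y > C(\bn_y)$, one is tempted to pick $t$ strictly between $\sup_{\bn} C(\bn)$ and $u_y$, but this only works if $\sup_{\bn} C(\bn) < u_y$, which need not hold for general continuous $C$. The fix is to choose $\bx$ so that the "tilted" function $C_x(\bn) = t - \bx\cdot\bn$ dips low enough near $\bn_y$ to satisfy $C_x(\bn_y) < u_y$ while staying above $C$ elsewhere; since $C$ is continuous on the compact $\mathbb{S}^2$ it is bounded, and $C_x(\bn) = t - \bx\cdot\bn$ can be made to exceed $\max C$ away from a small cap around $\bn_y$ by taking $t$ large, while pointing $\bx$ in the direction $\bn_y$ with $|\bx|$ large makes $C_x(\bn_y) = t - |\bx|$ small. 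Concretely, given $y \in \mathsf{S}_C$, pick $\varepsilon > 0$ with $u_y - \varepsilon > C(\bn_y)$; by continuity there is a neighbourhood $U$ of $\bn_y$ on which $C < u_y - \varepsilon$; outside $U$ we have $C \le \max_{\mathbb{S}^2} C =: M_0$. Then choose $a > 0$ large enough that $1 - \bn_y\cdot\bn$ is bounded below by a positive constant $c_U$ for $\bn \notin U$, set $\bx := a\,\bn_y$ and $t := u_y - a$; then $C_x(\bn_y) = t - a\cdot 1 = u_y - 2a$... — more carefully, one sets $t$ and $a$ so that $C_x(\bn_y) = t - a \in (C(\bn_y), u_y)$ and $C_x(\bn) = t - a\,\bn_y\cdot\bn = (t-a) + a(1 - \bn_y\cdot\bn) \ge (t-a) + a\,c_U > M_0$ for $\bn \notin U$, which holds once $a$ is large; and for $\bn \in U$, $C_x(\bn) \ge t - a > C(\bn_y)$ and we need $C_x(\bn) > C(\bn)$, which follows because $t - a$ can be taken $> u_y - \varepsilon > \sup_U C$. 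This gives $C_x > C$ everywhere and $C_x(\bn_y) < u_y$, i.e.\ $x = (t,\bx) \in \mathsf{V}_C$ and $y \in \mathsf{K}_x$.

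The main obstacle is precisely the bookkeeping in this last step: one must simultaneously enforce the strict inequality $C_x > C$ uniformly on $\mathbb{S}^2$ (to land in $\widetilde{\mathsf{V}}_C = \mathsf{V}_C$) and the strict inequality $C_x(\bn_y) < u_y$ (to have $y \in \mathsf{K}_x$), and the tension between these is resolved only by exploiting the extra freedom in the spatial component $\bx$ together with compactness of $\mathbb{S}^2$ and continuity of $C$. A cleaner way to package the same idea: the two constraints on $x=(t,\bx)$ are $t - \bx\cdot\bn > C(\bn)$ for all $\bn$ and $t - \bx\cdot\bn_y < u_y$; subtracting, a sufficient condition is the existence of $\bx$ with $\bx\cdot(\bn_y - \bn) < u_y - C(\bn)$ for all $\bn$ after fixing $t = u_y + \bx\cdot\bn_y - \eta$ for small $\eta > 0$, and this is an open half-space condition on $\bx$ that is solvable by the argument above. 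Once both inequalities hold, Lemma~\ref{lem:KIm} gives $\mathsf{K}_x = \mathsf{S}_{C_x} \ni y$ and the previous lemma gives $x \in \mathsf{V}_C$, completing the verification that $\{\mathsf{K}_x : x \in \mathsf{V}_C\}$ covers $\mathsf{S}_C$.
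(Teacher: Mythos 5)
Your proof is correct and follows essentially the same strategy as the paper: choose $x=(t,\bx)$ with $\bx$ pointing along $\bn_y$, fix the retarded coordinate $t-|\bx|$ strictly between $C(\bn_y)$ and $u_y$, and push $|\bx|$ (equivalently $v=t+|\bx|$) large, using continuity of $C$ near $\bn_y$ and its boundedness on the compact sphere to get $C_x>C$ everywhere while keeping $C_x(\bn_y)<u_y$. The only difference is cosmetic --- the paper encodes this as boundedness from above of a single function of $\bn$ giving a lower bound on $v$, whereas you split $\mathbb{S}^2$ into a cap around $\bn_y$ and its complement --- and your minor slip ($u_y-\varepsilon>\sup_U C$ need not be strict) is harmless since the pointwise bound $C<u_y-\varepsilon$ on $U$ suffices.
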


\begin{proof}
Given $y \in\mathsf{S}_C$, we write $y = (u_y, \bn_y)$ with $u_y > C(\bn_y)$. It is sufficient to show that there exists $x =(t,\bx) \in M$ such that $C_x(\bn) = t-\bx \cdot \bn > C(\bn)$ for all $\bn \in \mathbb S^2$ and $C_x(\bn_y) < u_y$. Let $v = t+|\bx|$, $u = t-|\bx|$, $\boldsymbol{e} = \bx/|\bx|$ and choose $u$ and $\boldsymbol{e}$ such that $C(\bn_y) < u <u_y$, $\boldsymbol{e}\cdot \bn_y = 1$, so that $C_x(\bn_y) = u < u_y$. {Then, the condition $C_x(\bn)  > C(\bn)$, is equivalent to choose
\[
v > \frac{2 C(\bn)-u(1+\boldsymbol{e}\cdot \bn)}{1-\boldsymbol{e} \cdot \bn}, \quad \forall \bn \in \mathbb{S}^2
\]
which is possible since the right hand side is bounded from above on $\mathbb{S}^2$. We stress that $\boldsymbol{e} \cdot \bn=1$ if $\bn= \bn_y$ and $2 C(\bn)-u(1+\boldsymbol{e}\cdot \bn)<0$ for  $\bn$ close to $ \bn_y$.}
\end{proof}

\begin{proposition}\label{prop:VS} 
Under the same assumptions of Definition \ref{Def: Deformed Cone}, it holds that $N(\mathsf{V}_C)=H(\mathsf{S}_C)$.
\end{proposition}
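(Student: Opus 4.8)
The strategy is to deduce the identity from Proposition~\ref{prop:Vx} (which gives $N(\mathsf{V}_x^+)=H(\mathsf{K}_x)$) by a partition-of-unity localization, using the two technical lemmas just established. I first record two elementary consequences of those lemmas. By the first lemma, if $x\in\mathsf{V}_C$ then $\overline{\mathsf{K}_x}\subset\mathsf{S}_C$, hence $\mathsf{K}_x\subset\mathsf{S}_C$ and so $H(\mathsf{K}_x)\subset H(\mathsf{S}_C)$ by isotony; moreover $\mathsf{V}_x^+\subset\mathsf{V}_C$ (the remark after the first lemma), hence $N(\mathsf{V}_x^+)\subset N(\mathsf{V}_C)$. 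In addition $\{\mathsf{V}_x^+\}_{x\in\mathsf{V}_C}$ is an open cover of $\mathsf{V}_C$: each $\mathsf{V}_x^+=I^+(x)\cap M$ is open and contained in $\mathsf{V}_C$ by the above, while conversely, given $y\in\mathsf{V}_C$, the set $I^-(y)\cap\mathsf{V}_C$ is open and nonempty (since $\mathsf{V}_C$ is open, by continuity of $C$, and $y\in\overline{I^-(y)}$), so any point $x$ of it satisfies $y\in I^+(x)\cap M=\mathsf{V}_x^+$.

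For the inclusion $N(\mathsf{V}_C)\subset H(\mathsf{S}_C)$, I take $f\in\cC^\infty_0(M)$ with $\supp f\subset\mathsf{V}_C$. Since $\supp f$ is compact and $\{\mathsf{V}_x^+\}_{x\in\mathsf{V}_C}$ is an open cover of $\mathsf{V}_C$, I extract a finite subcover $\mathsf{V}_{x_1}^+,\dots,\mathsf{V}_{x_n}^+$ of $\supp f$ together with a subordinate smooth partition of unity $\{\varphi_i\}_{i=1}^n$, so that $f=\sum_i\varphi_i f$ with $\varphi_i f\in\cC^\infty_0(M)$ and $\supp(\varphi_i f)\subset\mathsf{V}_{x_i}^+$. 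By linearity of $\Upsilon_M$ (see Equation~\eqref{Eq: Upsilon}), $\Upsilon_M([f])=\sum_i\Upsilon_M([\varphi_i f])$, and each summand lies in $N(\mathsf{V}_{x_i}^+)=H(\mathsf{K}_{x_i})\subset H(\mathsf{S}_C)$ by Proposition~\ref{prop:Vx} and the observation above. As $H(\mathsf{S}_C)$ is closed and $N(\mathsf{V}_C)$ is by definition the closure of the set of such $\Upsilon_M([f])$, this yields $N(\mathsf{V}_C)\subset H(\mathsf{S}_C)$.

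For the reverse inclusion I argue symmetrically on $\scri$, now invoking the second technical lemma, which asserts that $\{\mathsf{K}_x\}_{x\in\mathsf{V}_C}$ is an open cover of $\mathsf{S}_C$. By the very definition of the net $H$ (Equation~\eqref{def:H}), the functions $\psi\in\cC^\infty_0(\Im^+)$ with $\supp\psi\subset\mathsf{S}_C$ are dense in $H(\mathsf{S}_C)$; for such a $\psi$, compactness of $\supp\psi$ provides a finite subcover $\mathsf{K}_{x_1},\dots,\mathsf{K}_{x_m}$ with $x_i\in\mathsf{V}_C$ and a subordinate smooth partition of unity $\{\rho_i\}$ on $\scri$, so that $\psi=\sum_i\rho_i\psi$ with $\rho_i\psi\in\cC^\infty_0(\Im^+)$ and $\supp(\rho_i\psi)\subset\mathsf{K}_{x_i}$. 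Hence $\rho_i\psi\in H(\mathsf{K}_{x_i})=N(\mathsf{V}_{x_i}^+)\subset N(\mathsf{V}_C)$, so $\psi\in N(\mathsf{V}_C)$; since $N(\mathsf{V}_C)$ is closed, $H(\mathsf{S}_C)\subset N(\mathsf{V}_C)$, completing the argument.

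All the substantive content is carried by the two preceding lemmas and by Proposition~\ref{prop:Vx}; the remaining steps are a routine localization argument, so I do not anticipate a real obstacle. The only point requiring a modicum of care is to check that the families used are genuine \emph{open} covers of the relevant \emph{compact} sets --- this is precisely where the openness of $\mathsf{V}_C$ (continuity of $C$, first lemma) and of the $\mathsf{K}_x$ (Lemma~\ref{lem:KIm}) enter --- so that finite subcovers and subordinate smooth partitions of unity are available on $M\cong\RR^4$ and on $\scri\cong\RR\times\mathbb{S}^2$.
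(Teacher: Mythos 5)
Your proof is correct and follows essentially the same route as the paper: the nontrivial inclusion $H(\mathsf{S}_C)\subset N(\mathsf{V}_C)$ is obtained exactly as in the text, via the covering lemma for $\{\mathsf{K}_x\}_{x\in\mathsf{V}_C}$, a partition of unity on $\supp\psi$, Proposition~\ref{prop:Vx}, and the inclusion $\mathsf{V}_x^+\subset\mathsf{V}_C$. The only (harmless) difference is in the easy inclusion $N(\mathsf{V}_C)\subset H(\mathsf{S}_C)$, which the paper reads off directly from the definition of $N(\mathsf{V}_C)$ (the support of $\Upsilon_M([f])$ lies in $\mathsf{S}_C$ since $\overline{\mathsf{K}_x}\subset\mathsf{S}_C$ for $x\in\mathsf{V}_C$), whereas you reach the same conclusion through an additional bulk partition-of-unity argument.
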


\begin{proof}
The inclusion $N(\mathsf{V}_C) \subset H(\mathsf{S}_C)$ stems from the definition of $N(\mathsf{V}_C)$. To prove the converse inclusion, consider $\psi \in \mathcal{C}_0^\infty(\scri)$ such that $\supp \psi \subset S_C$. By the above lemma, we can find a finite family $x_1, \dots, x_n \in \mathsf{V}_C$ such that the collection of subsets $\{\mathsf{K}_{x_j}\}_{j=1,\dots, n}$ covers $\supp \psi$. Take $\varphi_j \in \mathcal{C}_0^\infty(\scri)$, $j=1, \dots n$, a partition of unity on $\supp \psi$ such that $\supp \varphi_j \subset \mathsf{K}_{x_j}$, $j=1,\dots, n$. Then, again by Proposition~\ref{prop:Vx}, $\varphi_j \psi \in N(\mathsf{V}_{x_j}^+)$, and therefore $\psi = \sum_{j=1}^n \varphi_j \psi \in N(\mathsf{V}_C)$.
\end{proof}

\noindent The following theorem is an immediate consequence of {Corollary~\ref{cor:stripdis},} Equation \eqref{eq:logint}, Remark~\ref{rem:propertiescurrent} and Theorem~\ref{theorem-1-standard}.
 
\begin{theorem}\label{main-thm}
Let $C\in\cC^0(\mathbb{S}^{2})$. Then, given $\mathsf{S}_C$ as per Equation \eqref{def:H}, the generator of the one particle modular group decomposes as follows
\begin{align*}
\log(\Delta_{N(\mathsf{V}_C)})=\log(\Delta_{H(\mathsf{S}_C)})&=\int^{\oplus}_{\mathbb{S}^{2}} \log(\Delta_{H^{(1)}(C(\theta,\varphi),+\infty)})d\mathbb{S}^2\\
&=\int^\oplus_{\bS^2}\left( \log(\Delta_{H^{(1)}(\RR_+)})+ 2\pi C(\theta,\varphi) P \right)\,d\mathbb{S}^2,
    \end{align*}
where $P$ is the generator of the translations {$s \mapsto U^{(1)}(\tau_{\RR_+}(s))$}. \\ {In particular for any $C_1,C_2\in \cC^0(\bS^2)$, such that $C_1<C_2$, then $N(\sV_{C_2})\subset N(\sV_{C_1})$ is an half-sided modular inclusion.}
\end{theorem}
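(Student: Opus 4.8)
The plan is to assemble the statement directly from pieces already in hand. First I would establish the two equalities on the first line. The equality $\log(\Delta_{N(\mathsf{V}_C)}) = \log(\Delta_{H(\mathsf{S}_C)})$ is immediate from Proposition~\ref{prop:VS}, which asserts $N(\mathsf{V}_C) = H(\mathsf{S}_C)$ as standard subspaces; two coinciding standard subspaces have the same Tomita operator, hence the same modular operator and modular Hamiltonian. For the second equality, I would invoke Corollary~\ref{cor:stripdis}, which exhibits $H(\mathsf{S}_C)$ as the direct integral $\int^{\oplus_\RR}_{\bS^2} H^{(1)}(C(\theta,\varphi),+\infty)\,d\bS^2$ over the sphere. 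The content here is that modular data commutes with direct integral decompositions of standard subspaces: the Tomita operator $\cS_{H(\mathsf{S}_C)}$ decomposes as the direct integral of the fibrewise Tomita operators $\cS_{H^{(1)}(C(\theta,\varphi),+\infty)}$, and since polar decomposition is computed fibrewise, so do $J$ and $\Delta^{1/2}$, and therefore $\log\Delta_{H(\mathsf{S}_C)} = \int^\oplus_{\bS^2} \log\Delta_{H^{(1)}(C(\theta,\varphi),+\infty)}\,d\bS^2$. This is the direct integral result for modular Hamiltonians of standard subspaces referenced via Equation~\eqref{eq:logint} and the lemmas on direct integrals in the appendix.

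Next I would compute each fibre. Fix $(\theta,\varphi)$ and write $C = C(\theta,\varphi)$. The interval $(C,+\infty)$ is the image of $\RR_+ = (0,+\infty)$ under the translation $\tau_{\RR_+}(C)$, so by Möbius covariance of the net $H^{(1)}$ (Remark~\ref{rem:propertiescurrent}, property~3), $H^{(1)}(C,+\infty) = U^{(1)}(\tau_{\RR_+}(C)) H^{(1)}(\RR_+)$. Applying the last identity in Theorem~\ref{theorem-1-standard} with $s = C$ — which requires knowing that $(\RR_+, (C,+\infty))$, or rather the corresponding pair obtained by translating, arises from a half-sided modular inclusion, a fact recorded in Remark~\ref{rem:propertiescurrent} for nested half-lines with a common endpoint, together with the Bisognano--Wichmann property identifying $\Delta_{H^{(1)}(\RR_+)}^{-it}$ with the dilation $U^{(1)}(\delta_{\RR_+}(2\pi t))$ — I obtain
\[
\log\Delta_{H^{(1)}(C,+\infty)} = \log\Delta_{U^{(1)}(\tau_{\RR_+}(C))H^{(1)}(\RR_+)} = \log\Delta_{H^{(1)}(\RR_+)} + 2\pi C P,
\]
where $P$ is the positive generator of $s \mapsto U^{(1)}(\tau_{\RR_+}(s))$. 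Substituting this fibrewise identity back into the direct integral gives the second line of the displayed formula.

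For the final assertion, suppose $C_1 < C_2$ in $\cC^0(\bS^2)$. Then $\mathsf{S}_{C_2} \subset \mathsf{S}_{C_1}$ (a larger lower bound gives a smaller half-strip), hence $N(\mathsf{V}_{C_2}) = H(\mathsf{S}_{C_2}) \subset H(\mathsf{S}_{C_1}) = N(\mathsf{V}_{C_1})$ by isotony of the boundary net. To see this is half-sided modular, I must check $\Delta_{N(\mathsf{V}_{C_1})}^{-it} N(\mathsf{V}_{C_2}) \subset N(\mathsf{V}_{C_2})$ for $t \geq 0$. Using the direct integral formula just obtained, $\Delta_{H(\mathsf{S}_{C_1})}^{-it} = \int^\oplus_{\bS^2} \Delta_{H^{(1)}(C_1(\theta,\varphi),+\infty)}^{-it}\,d\bS^2$, so it suffices to verify the inclusion fibrewise: $\Delta_{H^{(1)}(C_1,+\infty)}^{-it} H^{(1)}(C_2,+\infty) \subset H^{(1)}(C_2,+\infty)$ for each $(\theta,\varphi)$. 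But $(C_1,+\infty) \supset (C_2,+\infty)$ is an inclusion of half-lines with the common endpoint $+\infty$, which is a half-sided modular inclusion of standard subspaces of $\cH^{(1)}$ by the last paragraph of Remark~\ref{rem:propertiescurrent}; this is exactly \eqref{Eq: Half Sided Modular Inclusion}, so the fibrewise inclusion holds, and assembling the fibres closes the argument.

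The step requiring the most care is the interchange of modular data with the direct integral — specifically, justifying that $\cS_{H(\mathsf{S}_C)}$, its polar parts, and $\log\Delta$ all decompose as measurable fields over $\bS^2$ with the expected fibres, including measurability of $(\theta,\varphi) \mapsto \log\Delta_{H^{(1)}(C(\theta,\varphi),+\infty)}$, which follows from continuity of $C$ and strong continuity of $U^{(1)}$. Everything else is bookkeeping with results already established: Proposition~\ref{prop:VS}, Corollary~\ref{cor:stripdis}, Remark~\ref{rem:propertiescurrent}, and Theorem~\ref{theorem-1-standard}.
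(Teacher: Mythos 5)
Your argument is correct and follows essentially the same route as the paper, which derives the theorem as an immediate consequence of Proposition~\ref{prop:VS}, Corollary~\ref{cor:stripdis}, the direct-integral formula \eqref{eq:logint} for $\log\Delta$, the Bisognano--Wichmann/half-sided modular inclusion facts of Remark~\ref{rem:propertiescurrent}, and the translation identity $\log(\Delta_{U(\tau(s))H})=\log(\Delta_H)+2\pi sP$ of Theorem~\ref{theorem-1-standard}. Your fibrewise verification of the half-sided modular inclusion and the measurability remarks simply make explicit what the paper leaves implicit.
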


\subsection{Relative entropy, ANEC and QNEC}\label{Subsec: QNEC}

We have recalled that Equation \eqref{eq:U1ent} establishes an expression for the relative entropy between two coherent states of the von Neumann algebra of an half-line of a single $U(1)$-current. We have shown that the  $U(1)$-current is the building block of the one particle theory living on $\scri$. We will follow an argument analogous to that in \cite{MTW} to derive an explicit formula for the relative entropy between coherent states of the deformed cone algebras {$\cA(\mathsf{V}_C)$ }of $M\subset\widetilde M$. It will be expressed in terms of the entropy of a one-particle vector state on $\scri$.

Let $C\in\cC^0(\mathbb S^2)$ identify the regions $\mathsf{S}_C$ and $\mathsf{V}_C$, in $\scri$ and $M$, respectively, see Equation \eqref{def:H} and Definition \ref{Def: Deformed Cone}. Given $f\in \cC_0^\infty(M)$ such that $\supp(f)\subset \mathsf{V}_C$, we can write the {relative entropy between the coherent state $\omega_{\Upsilon_M([f])} = \omega_{\scri} \circ \mathrm{Ad} \,W(\Upsilon_M([f]))$ and the vacuum $\omega_{\scri}$ on the algebra $R(H(\mathsf{S}_C))$}, as
{\begin{align}
S_{R(H(\mathsf{S}_C))}(\omega_{\Upsilon_M([f])}\|\omega_{\scri})
&=\int_{\mathbb S^2}\left(S_{R(H^{(1)}(C(\theta,\varphi),+\infty))}(\omega_{\Upsilon_M([f])(\cdot,\theta,\varphi)} \| \omega)\right)d\mathbb S^2\\
&= \int_{\bS^2} S_{H^{(1)}(C(\theta,\varphi),+\infty)}(\Upsilon_M([f])(\cdot,\theta,\varphi))\,d\bS^2 \\
&=\pi\int_{\mathbb S^2}\int_{C(\theta,\varphi)}^\infty (u-C(\theta,\varphi))(\partial_u\Upsilon_M([f]) (u,\theta,\varphi))^2 du \,d\mathbb S^2,
\end{align}
where the first equality follows from Equation \eqref{eq:HSC}, Lemma \ref{lem:decomposableentropy} and Theorem \ref{main-thm}, the second one from Equation~\eqref{eq:Ent2}, and the last one from Equation~\eqref{eq:U1ent}.}

Moreover, since $N(\mathsf{V}_C)=H(\mathsf{S}_C)$ on account of Proposition~\ref{prop:VS}, then the associated von Neumann algebras, defined as per Equations~{\eqref{eq:AO}, \eqref{eq:AOR}}, \eqref{Eq: VN algebras}, abide by ${\cA(\mathsf{V}_C)=}R(N(\mathsf{V}_C))=R(H(\mathsf{S}_C))$. Given $f\in \mathcal C_0^\infty(M)$ we have {then for the relative entropy between the coherent state $\omega_f = \omega_M \circ\operatorname{Ad}W([f])$ and the vacuum $\omega_M$ on $\cA(\mathsf{V}_C)$}
\begin{equation}\label{eq:entVc}\begin{split}
{S_{\cA(\mathsf{V}_C)}(\omega_f\|\omega_M)}&{=S_{R(H(\mathsf{S}_C))}(\omega_{\Upsilon_M([f])}\|\omega_{\scri})}\\
&=\pi\int_{\mathbb S^2}\int_{C(\theta,\varphi)}^\infty (u-C(\theta,\varphi))(\partial_u\Upsilon_M([f]) (u,\theta,\varphi))^2 du \,d\mathbb S^2.
\end{split}\end{equation} 
Note that this formula depends explicitly on $\Upsilon_M$ and, therefore, on $\widetilde{\mathcal{G}}$, the causal propagator of $\Box_{\tg}-\frac{\widetilde{R}}{6}$. 

Given $A:\bS^2\rightarrow\RR$ a positive continuous function we are interested in studying the convexity of the map
$$\RR\ni t\mapsto S_{{\cA(\mathsf{V}_{C+tA})}}(\rho\|\sigma)$$ 
where $\rho$ and $\sigma$ are two {coherent} states on {$\cA(\mathsf{V}_{C+tA})$} while $C\in\cC^0(\bS^{2})$. Observe that this quantity is strongly related to the quantum null energy condition (QNEC), see \cite[Sect.~5.3]{MTW} for a short survey as well as \cite{Hollands:2025glm}. In our setting, since we considered coherent states, we can check the convexity by studying the second derivative with respect to the deformation parameter $t$. 

\begin{theorem}\label{Thm: QNEC} Given Equation \eqref{eq:entVc} and $A:\bS^2\rightarrow\RR$ a positive continuous function,  it holds that
$$\frac{d^2}{dt^2}S_{{\cA(\mathsf{V}_{C+tA})}}(\omega_f\|\omega_M)\geq0.$$
\end{theorem}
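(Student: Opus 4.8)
The plan is to reduce the statement to an elementary convexity fact for a function of a single real variable, using the explicit formula in Equation~\eqref{eq:entVc} together with the observation that the deformation $C \mapsto C + tA$ enters only through the lower endpoint of the $u$-integration on each fibre of $\scri \simeq \RR \times \bS^2$.

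First I would set $\psi := \Upsilon_M([f])$. By the integral representation of $\Upsilon_M([f])$ recalled before Proposition~\ref{prop:Vx} and the compact support of $f$, the function $\psi$ is smooth and compactly supported on $\scri$; in particular, for each $\bn \in \bS^2$, $u \mapsto \partial_u\psi(u,\bn)$ is smooth with compact support, and jointly smooth in $(u,\bn)$. Applying Equation~\eqref{eq:entVc} with $C$ replaced by $C+tA$ gives
\[
S_{\cA(\mathsf{V}_{C+tA})}(\omega_f\|\omega_M) = \int_{\bS^2} F_{\bn}\big(C(\bn)+tA(\bn)\big)\,d\bS^2(\bn), \qquad F_{\bn}(a) := \pi\int_a^{+\infty}(u-a)\,\big(\partial_u\psi(u,\bn)\big)^2\,du.
\]
Since $t$ enters linearly inside $C(\bn)+tA(\bn)$, the chain rule gives $\frac{d^2}{dt^2}S_{\cA(\mathsf{V}_{C+tA})}(\omega_f\|\omega_M) = \int_{\bS^2} A(\bn)^2\,F_{\bn}''\big(C(\bn)+tA(\bn)\big)\,d\bS^2(\bn)$, so it suffices to prove that each $F_{\bn}$ is convex, the weight $A(\bn)^2$ being nonnegative.

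To see this, I would differentiate $F_{\bn}$ directly. Because $u \mapsto \partial_u\psi(u,\bn)$ has compact support, $a \mapsto F_{\bn}(a)$ is smooth, and the Leibniz rule yields $F_{\bn}'(a) = -\pi\int_a^{+\infty}\big(\partial_u\psi(u,\bn)\big)^2\,du$ and hence $F_{\bn}''(a) = \pi\,\big(\partial_u\psi(a,\bn)\big)^2 \geq 0$, so $F_{\bn}$ is convex. Substituting back gives the closed expression
\[
\frac{d^2}{dt^2}S_{\cA(\mathsf{V}_{C+tA})}(\omega_f\|\omega_M) = \pi\int_{\bS^2} A(\bn)^2\,\big(\partial_u\psi\big(C(\bn)+tA(\bn),\bn\big)\big)^2\,d\bS^2(\bn) \geq 0,
\]
which is the claim. (As a consistency check, $\frac{d}{dt}S_{\cA(\mathsf{V}_{C+tA})}(\omega_f\|\omega_M) = -\pi\int_{\bS^2}A(\bn)\int_{C(\bn)+tA(\bn)}^{+\infty}\big(\partial_u\psi\big)^2\,du\,d\bS^2(\bn) \leq 0$, matching the monotonicity of relative entropy, since raising $t$ shrinks $\mathsf{V}_{C+tA}$.)

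The only genuinely technical point — and the main, if modest, obstacle — is the legitimacy of differentiating twice under the $\bS^2$-integral and of the Leibniz step. Both follow from a standard dominated-convergence argument: for $t$ in any bounded interval the argument $C(\bn)+tA(\bn)$ stays in a fixed bounded subset of $\RR$ uniformly in $\bn$, since $C$ and $A$ are continuous on the compact manifold $\bS^2$; combined with the smoothness and compact support of $\psi$, this makes the integrand $F_{\bn}(C(\bn)+tA(\bn))$ and its first two $t$-derivatives jointly continuous and uniformly bounded in $(t,\bn)$, so the interchanges are valid and in particular the displayed quantities are all finite.
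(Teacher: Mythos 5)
Your proposal is correct and follows essentially the same route as the paper's proof: differentiate the inner $u$-integral via the Leibniz rule (your fibrewise convexity of $F_{\bn}$), then interchange the $t$-derivatives with the $\bS^2$-integral by a dominated/bounded convergence argument resting on the compact support of $\Upsilon_M([f])$ and the continuity of $C,A$ on the compact sphere, arriving at the same closed expression $\pi\int_{\bS^2}A^2\,(\partial_u\Upsilon_M([f])(C+tA,\cdot))^2\,d\bS^2\geq 0$. The only difference is presentational (packaging the fibre integral as a convex function $F_{\bn}$), not mathematical.
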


\begin{proof} Starting from Equation \eqref{eq:entVc}, we observe that, for fixed $(\theta,\varphi)\in \mathbb S^2$, the maps 
\begin{gather*}
{\RR^2 \ni (u,t)} \mapsto (u-C(\theta,\varphi)-tA(\theta,\varphi))(\partial_u\Upsilon_M([f]) (u,\theta,\varphi))^2\\
{\RR^2 \ni (u,t) \mapsto \frac{\partial}{\partial t}\left[(u-C(\theta,\varphi)-tA(\theta,\varphi))(\partial_u\Upsilon_M([f]) (u,\theta,\varphi))^2\right] = -A(\theta,\varphi)(\partial_u\Upsilon_M([f]) (u,\theta,\varphi))^2}
\end{gather*}
{ are} continuous and {vanish for $u$ outside some compact interval since $f$ has compact support in $M$. Then, thanks to the Leibniz integral rule} it holds that
\begin{align*}
\frac{d}{dt} \int_{C(\theta,\varphi)+tA(\theta,\varphi)}^\infty(u-C(\theta,\varphi)&-tA(\theta,\varphi))(\partial_u \Upsilon_M([f]) (u,\theta,\varphi))^2du=&\notag\\ 
= -\int_{C(\theta,\varphi)+tA(\theta,\varphi)}^\infty &A(\theta,\varphi)(\partial_u\Upsilon_M([f]) (u,\theta,\varphi))^2 du.
\notag
\end{align*}
{Moreover, since $f \in \cC^\infty_0(M)$ and $C, A \in \cC^0(\bS^2)$, the right hand side of the above equation is clearly bounded by a constant uniformly in $(\theta,\varphi) \in \bS^2$, so that,} applying bounded convergence
it holds that
{
\[
\frac{d}{dt}S_{\cA(\mathsf{V}_{C+tA})}(\omega_f\|\omega_M) = -\pi \int_{\bS^2}\int_{C(\theta,\varphi)+tA(\theta,\varphi)}^\infty A(\theta,\varphi)(\partial_u\Upsilon_M([f]) (u,\theta,\varphi))^2 du \,d\bS^2.
\]
}
{Using again Leibniz integral rule and bounded convergence then gives}
\begin{align*}
\frac{d^2}{dt^2}S_{\cA(\mathsf{V}_{C+tA})}(\omega_f\|\omega_M)&=-\frac{d}{dt}\pi\int_{\mathbb S^2}\left(\int_{C(\theta,\varphi)+tA(\theta,\varphi)}^\infty A(\theta,\varphi)(\partial_u\Upsilon_M([f]) (u,\theta,\varphi))^2 du\right)\,d\mathbb S^2(\theta,\varphi)\\
&=\pi\int_{\mathbb S^2} A(\theta,\varphi)^2\big(\partial_u\Upsilon_M([f]) (C(\theta,\varphi)+tA(\theta,\varphi),\theta,\varphi)\big)^2 d\mathbb S^2(\theta,\varphi)\geq 0,
\end{align*}
which entails the sought conclusion.
\end{proof}

\begin{remark}
	We observe that, following \cite[Sec.\ 5.3]{MTW}, it is possible to deduce from the assumptions of Theorem \ref{Thm: QNEC} the following version of the {\em averaged null energy condition (ANEC)}:
\[
\langle W(\Upsilon_M([f]))\Omega,d\Gamma(H_A) W(\Upsilon_M([f]))\Omega\rangle	\geq 0,
\]
 where $d\Gamma(H_A)$ is the additive second quantization of $H_A=\int_{\mathbb S^2}^\oplus A(\theta,\varphi)P_{\theta,\varphi}d\mathbb S^2$, with $P_{\theta,\varphi}$ is the translation generator of the one-particle $U(1)$ current {and $A \in \cC^0(\bS^2)$ is a positive function}. Moreover, the saturation of the  {\em strong superadditivity} of the relative entropy for coherent states is obtained. More precisely, let $C_1,C_2\in\cC^0(\bS^2)$ and let $C_\lor(\theta,\varphi)=\min\{C_1(\theta,\varphi),C_2(\theta,\varphi)\}$ and $C_\land(\theta,\varphi)=\max\{C_1(\theta,\varphi),C_2(\theta,\varphi)\}$, respectively. Then, given two coherent states $\omega_1,\omega_2$ on ${\cA}(\mathsf{V}_{C_\lor})$, the  equality
	\begin{align*}
		S_{\cA(\mathsf{V}_{C_\lor})}(\omega_1\|\omega_2) + S_{\cA(\mathsf{V}_{C_\land})}(\omega_1\|\omega_2) 	= S_{\cA(\mathsf{V}_{C_1})}(\omega_1\|\omega_2) + S_{\cA(\mathsf{V}_{C_2})}(\omega_1\|\omega_2)
	\end{align*}
holds.
\end{remark}

{In view of the recalled connection with energy conditions, it is interesting to observe that the relative entropy~\eqref{eq:entVc} can be expressed in terms of the classical traceless stress-energy tensor associated to the solution $\Phi = \mathcal{G}(f)$ of the Klein-Gordon equation~\eqref{eq:wave} on $(M,g)$~\cite[Eq.\ (2.42)]{PT09}:
\[
T_{\mu \nu} = \nabla_\mu \Phi \nabla_\nu \Phi -\frac12 g_{\mu\nu} \nabla_\rho \Phi \nabla^\rho\Phi-\frac16 \left(R_{\mu\nu}-\frac12g_{\mu\nu}R\right)\Phi^2+\frac16[g_{\mu\nu}\Box -\nabla_\mu\nabla_\nu](\Phi^2).
\]

\begin{proposition}
Given $f\in \mathcal C_0^\infty(M)$, with $\Phi := \mathcal{G}(f)$ the associated solution of~\eqref{eq:wave}, it holds
\begin{multline*}
S_{\cA(\mathsf{V}_C)}(\omega_f\|\omega_M)\\
=\lim_{v \to +\infty}\pi\int_{\mathbb S^2}\int_{C(\theta,\varphi)}^\infty (u-C(\theta,\varphi))\left[\Omega^{-2} T_{uu} - \frac13 R_{uu}(\Omega^{-1}\Phi)^2 - \frac16 \partial_u^2(\Omega^{-2}\Phi^2)\right](u,v,\theta,\varphi)\,du \,d\mathbb S^2.
\end{multline*}
\end{proposition}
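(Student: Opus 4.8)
The plan is to reduce the statement, via Equation~\eqref{eq:entVc}, to a pointwise asymptotic identity on $\scri$ and then to interchange the resulting limit with the integration. Recall that Equation~\eqref{eq:entVc} already gives $S_{\cA(\mathsf{V}_C)}(\omega_f\|\omega_M)=\pi\int_{\bS^2}\int_{C(\theta,\varphi)}^{+\infty}(u-C(\theta,\varphi))(\partial_u\Upsilon_M([f])(u,\theta,\varphi))^2\,du\,d\bS^2$, so the whole content is to recognise the integrand as a limit of the bracketed bulk quantity.

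I would first collect the ingredients, writing $\psi:=\Upsilon_M([f])$ and $\widetilde\Phi:=\Omega^{-1}\Phi$. By Proposition~\ref{prop2} and the discussion around it, $\widetilde\Phi$ is the smooth solution on $(\tM,\tg)$ of the conformally invariant wave equation obtained by conformal rescaling, it extends smoothly to $\scri$, and $\widetilde\Phi|_{\scri}=\psi$. By Remark~\ref{Rem: Trick Limit}, evaluation at $\scri$ is the limit $v\to+\infty$, the conformal factor satisfies $\Omega\chi=2\cos V$ there, whence $\partial_u\log\Omega=-\partial_u\log\chi$ (since $\partial_u V=0$), and $\partial_u\psi(u,\bn)=\lim_{v\to+\infty}\partial_u\widetilde\Phi(u,v,\bn)$ locally uniformly by smoothness of $\widetilde\Phi$ up to $\scri$. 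Since $u$ is a null coordinate for $\eta$, one has $g_{uu}=\widetilde g_{uu}=0$ identically on $M$, so the stress tensor collapses to
\[
T_{uu}=(\partial_u\Phi)^2-\tfrac16 R_{uu}\Phi^2-\tfrac16\nabla_u\nabla_u(\Phi^2),
\]
with $\nabla$ and $R$ the Levi-Civita connection and Ricci tensor of $g$. Finally, Huygens' principle forces $\Phi$, hence $\widetilde\Phi$, to be supported on each hypersurface $\{v=\mathrm{const}\}$ in a $u$-interval that stays bounded as $v\to+\infty$. It is then enough to prove the pointwise identity
\[
\big(\partial_u\psi(u,\bn)\big)^2=\lim_{v\to+\infty}\Big[\Omega^{-2}T_{uu}-\tfrac13 R_{uu}(\Omega^{-1}\Phi)^2-\tfrac16\partial_u^2(\Omega^{-2}\Phi^2)\Big](u,v,\bn),
\]
together with enough uniformity to pass the limit through $\int_{\bS^2}\!\int_{C(\theta,\varphi)}^{+\infty}(u-C(\theta,\varphi))(\,\cdot\,)\,du\,d\bS^2$.

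The core step is this pointwise identity. I would substitute $\Phi=\Omega\widetilde\Phi$ everywhere, use $\Omega^{-2}\Phi^2=(\Omega^{-1}\Phi)^2=\widetilde\Phi^2$, and rewrite the resulting terms — in particular the curvature term and the covariant Hessian — by means of the conformal transformation laws for the Ricci tensor and the second covariant derivative, specialised to the null direction $u$ (where $g_{uu}=0$ annihilates every term carrying a $g_{uu}$-factor or the scalar curvature), together with the conformal wave equation obeyed by $\widetilde\Phi$ and the identity $\partial_u\log\Omega=-\partial_u\log\chi$. The assertion to be checked by this (purely algebraic) computation is that the bracketed quantity equals $(\partial_u\widetilde\Phi)^2$ plus a remainder that extends continuously up to $\scri$ and carries an overall factor vanishing as $v\to+\infty$ — the coefficients $\tfrac13$ and $\tfrac16$ of the two explicit correction terms being exactly those needed for the curvature and Hessian contributions to cancel. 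If the algebra instead leaves a residual total $u$-derivative, it is removed by an integration by parts in $u$ against the weight $u-C(\theta,\varphi)$: the boundary term at $u=C(\theta,\varphi)$ vanishes since the weight does, and the one at $u=+\infty$ vanishes since $\widetilde\Phi(\cdot,v,\bn)$ is supported in a bounded $u$-interval uniformly for $v$ near $\scri$.

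It remains to interchange limit and integral. The estimates producing the pointwise identity also show the bracketed integrand extends continuously to $\scri$; combined with the uniform-in-$v$ $u$-compactness of its support, this bounds it by a fixed integrable function for $v$ in a neighbourhood of $\scri$, so dominated convergence gives
\[
\lim_{v\to+\infty}\pi\!\int_{\bS^2}\!\int(u-C)\,[\,\cdot\,]\,du\,d\bS^2=\pi\!\int_{\bS^2}\!\int_{C(\theta,\varphi)}^{+\infty}(u-C(\theta,\varphi))\big(\partial_u\psi(u,\theta,\varphi)\big)^2\,du\,d\bS^2,
\]
which by Equation~\eqref{eq:entVc} is $S_{\cA(\mathsf{V}_C)}(\omega_f\|\omega_M)$. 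The main obstacle is the core asymptotic step: several individual terms (for instance $\Omega^{-2}R_{uu}$) may diverge as $v\to+\infty$ while only the particular combination appearing in the bracket converges, so one must keep track of the conformal rescaling of the curvature and of the covariant Hessian near $\scri$ with enough precision to see the cancellations, and extract from the same estimates the uniform bound that feeds the dominated-convergence argument.
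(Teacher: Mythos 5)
Your overall route is the same as the paper's: start from Equation~\eqref{eq:entVc}, use that $\Omega^{-1}\Phi$ extends smoothly to $\widetilde M$ with compactly supported restriction to $\scri$ (Proposition~\ref{prop2}) so that $\partial_u\Upsilon_M([f])=\lim_{v\to+\infty}\partial_u(\Omega^{-1}\Phi)$ and the limit can be interchanged with the $u$- and $\bS^2$-integrations, and then convert the integrand into the bracketed stress-tensor combination using $g_{uu}=0$, $\partial_u\log\Omega=-\partial_u\log\chi$ in the gauge $\Omega\chi=2\cos V$, and the explicit formula $R_{uu}=\chi^{-2}[4(\partial_u\chi)^2-2\chi\,\partial_u^2\chi]$. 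All of these ingredients appear, correctly identified, in your proposal.

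The gap is that the decisive step is not actually carried out: you reduce everything to a pointwise identity between $(\partial_u(\Omega^{-1}\Phi))^2$ and the bracket, announce that it is "to be checked by a purely algebraic computation", and hedge on its outcome by allowing either a remainder vanishing at $\scri$ or a residual total $u$-derivative. In the paper this is precisely the content of the "direct inspection": with the stated coefficients $\tfrac13$ and $\tfrac16$ the conversion is an identity at finite $v$ in a neighbourhood of $\scri$ (no remainder, no extra limiting argument), obtained by substituting $\Phi=\Omega\,(\Omega^{-1}\Phi)$, using $\Gamma^\lambda_{uu}=2\,\partial_u\log\chi\,\delta^\lambda_u$ so that $\nabla_u\nabla_u(\Phi^2)=\partial_u^2(\Phi^2)-2\partial_u\log\chi\,\partial_u(\Phi^2)$, and the quoted $R_{uu}$; this computation is the mathematical core of the proposition beyond Equation~\eqref{eq:entVc}, and omitting it leaves the proof incomplete. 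Moreover, your fallback mechanism would not automatically close such a gap: if the residual were of the form $\partial_u^2 G$ (as it would be for a mismatch in the $\tfrac16\,\partial_u^2(\Omega^{-2}\Phi^2)$ term), then $\int_{C}^{\infty}(u-C)\,\partial_u^2 G\,du=G(C)$ after two integrations by parts, i.e.\ the weight kills only the first boundary term, and $G(C(\theta,\varphi))$ need not vanish for a general $f\in\cC^\infty_0(M)$ (the proposition does not require $\supp f\subset\mathsf{V}_C$). So the argument stands or falls with the exact algebraic identity, which must be exhibited rather than presumed.
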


\begin{proof}
On account of the definition of $\Upsilon_M$ in Equation ~\eqref{Eq: Upsilon}, and in view of Equation ~\eqref{Eq:forseserveforseno},
\[
\Upsilon_M([f])(u, \theta,\varphi) = \lim_{v \to +\infty} (\Omega^{-1}\Phi)(u,v,\theta,\varphi).
\]
Proposition ~\ref{prop2} entails that $\Omega^{-1}\Phi$ extends to a smooth function on $\widetilde M$, whose restriction to $\scri$ has compact support. Hence we can interchange the limit with the derivative and the integral  in~\eqref{eq:entVc} to obtain
\[
S_{R(N(\mathsf{V}_C))}(\omega_M\circ\operatorname{Ad}f\|\omega_M)
=\lim_{v \to +\infty}\pi\int_{\mathbb S^2}\int_{C(\theta,\varphi)}^\infty (u-C(\theta,\varphi))(\partial_u (\Omega^{-1}\Phi))^2(u,v,\theta,\varphi)\,du \,d\mathbb S^2.
\]
Using that for $g= \chi^2 \eta$ one computes $R_{uu} = \chi^{-2}[4 (\partial_u \chi)^2-2\chi \partial_u^2 \chi]$ and the sought result is obtained by direct inspection.
\end{proof}

Analogous formulae connecting relative entropy of suitable states to the stress-energy tensor appear also in~\cite{CLR20, LM23, LPM25}, due to the connection between modular flow and Lorentz or conformal symmetries. It is not difficult to see that the same holds true for the relative entropy of null cuts computed in~\cite{MTW}. Moreover, in~\cite[Eq.\ (3.26)]{CTT17} the authors propose an expression for half of the modular Hamiltonian of a null cut on the light cone of Minkwoski space in terms of the quantum stress energy tensor of a conformal QFT, which is similar to the one in the proposition above.

{As a last comment we remark that the formula for the first quantization modular Hamiltonian given in Theorem \eqref{main-thm} has a second quantization counterpart in terms of continuous tensor product structure, (cf.~\cite{AW66, Napiorkowski71}) which is not easy to compare with formulas in \cite{CTT17}. On the other hand given $C_1,C_2\in \cC^0(\bS^2)$ such that $C_1<C_2$, by Theorem \ref{main-thm},  we have the second quantization von Neumann algebras  half-sided modular inclusion  $\cA(\sV_{C_2})\subset\cA(\sV_{C_1})$ with respect to the vacuum vector. This is the main ingredient to prove  the  QNEC in a model indepenent setting following \cite{CF18,Hollands:2025glm}. 
}
\section{Outlook}

In this paper we have considered a four-dimensional spacetime $(M,g)\equiv (\bR^4,\chi^2\eta)$ which is conformal to Minkowski spacetime $\bM$, so that its Penrose compactification allows to realize it as an embedded submanifold of the Einstein static Universe. Herein it possesses a non empty boundary and a particularly relevant component is $\Im^+$, future null infinity. On top of $M$ we have considered a massless, conformally coupled scalar field and its associated Weyl algebra $\mathcal{W}(M)$. Using a bulk-to-boundary correspondence, first discussed in \cite{Dappiaggi:2005ci} and here subordinated to the Huygens principle, we have realized $\mathcal{W}(M)$ as a C$^*$-subalgebra of $\mathcal{W}(\Im^+)$ which is the Weyl algebra associated to a symplectic space of kinematic configuration on $\Im^+$, see Theorem \ref{theo1}. The net advantage of this procedure is the possibility of associating a unique quasi-free state $\omega_{\Im^+}$ for $\mathcal{W}(\Im^+)$ which is invariant under the action of the group of asymptotic symmetries of $(M,g)$. In addition such state individuates a counterpart on $\mathcal{W}(M)$ which is quasi-free, Hadamard and invariant under the action of all bulk isometries, if present.

Using the GNS representation we can associate to $(\mathcal{W}(\Im^+),\omega_{\Im^+})$ a von Neumann algebra defined on the Bosonic Fock space built out of the one-particle Hilbert space $L^2(\RR_+\times\bS^2; EdEd\bS^2)$, see Equation \eqref{eq:hilbsp}. In addition, the Huygens principle entails that, if we consider $\mathsf{V}^+_x$, $x\in M$, a future light cone as per Definition \ref{def:regions} as well as $\mathcal{W}(\mathsf{V}^+_x)=\mathcal{W}(M)|_{\mathsf{V}^+_x}$, its counterpart at the boundary is the Weyl subalgebra generated by suitable functions localized in $\mathsf{K}_x$, a positive half strip on $\Im^+$. To each such cone, we associate a standard subspace of the boundary one-particle Hilbert space, see Equation \ref{def:N} and in Proposition \ref{prop:Vx} we have proven that this coincides with the standard subspace associated naturally to $\mathsf{K}_x$. We can extend such correspondence replacing $\mathsf{K}_x$ and $\mathsf{V}^+_x$ with deformed counterparts, respectively $\mathsf{S}_C$, see Equation \eqref{Eq: SC} and $\mathsf{V}_C$, see Definition \ref{Def: Deformed Cone}

Since the one particle Hilbert space at the boundary decomposes as a direct integral on the sphere of $U(1)$-currents defined on the half line, we prove in Theorem \ref{main-thm} that also the generator of the modular group associated to the standard subspace of $\mathsf{V}_C$ decomposes as a suitable direct integral. This result allows us to study the relative entropy between coherent states of the algebras associated to the deformed cones $\mathsf{V}_C$ establishing the quantum null energy condition in Theorem \ref{Thm: QNEC}.

This work can be seen as a starting point for several future investigations. First and foremost the bulk-to-boundary correspondence which lies at the heart of our analysis is valid on a generic globally hyperbolic and asymptotically flat spacetime, in the sense of \cite{Friedrich} as well on Schwarzschild spacetime \cite{Dappiaggi:2009fx}, where it has been used to give a rigorous definition of the Unruh state. Yet the Huygens principle does not hold true in general and, therefore, it is not straightforward how to extend to these cases our results in Section \ref{sect:4}. Staying instead in the context of spacetimes conformal to Minkowski, which includes the FLRW ones with flat spatial sections, one could envisage cosmological applications of our findings worth of further investigations. At last, we have focused our attention to a massless and conformally coupled scalar field, but it would be interesting to investigate other models, for example free electromagnetism written in terms of the Faraday tensor as well as Dirac fields, although, in this case the Huygens principle holds true only in odd spacetime dimensions.

\section*{Acknowledgments}
{C.D. and V.M. thank K.-H. Rehren for suggesting the subject of this work at the 46th LQP Workshop in Erlangen.} C.D. and A.R. are grateful to the Department of Mathematics of the University of Rome Tor Vergata for the kind hospitality during the realization of part of this work, V.M and A.R. for that of the Department of Physics of the University of Pavia. {A.R. acknowledges funding from the European Research Council (ERC) and the Swiss State Secretariat for Education, Research and Innovation (SERI) through the consolidator grant ProbQuant. A.R. is also grateful to the INFN, Sezione di Pavia for the support during the visit in Pavia.} The work of C.D. is partly supported by the GNFM (Indam). {The work of V.M. and G.M. is partly supported by INdAM-GNAMPA, University of Rome Tor Vergata funding \emph{OANGQS} CUP E83C25000580005, and the MIUR Excellence Department Project \emph{MatMod@TOV} awarded to the Department of Mathematics, University of Rome Tor Vergata, CUP E83C23000330006.}

\appendix

\section{Direct integrals of Hilbert spaces}\label{app:directintegral}

Since in our construction we also exploit the properties of direct integrals of standard subspaces, in this appendix we recollect some basic definitions and results, pointing an interested reader to \cite[Appendix B]{MT19} for additional details.

Let $X$ be a $\sigma$-compact, locally compact space endowed with $\nu$, the completion of a Borel measure thereon. In addition we consider a separable {complex (resp.\ real)} Hilbert space $\K$ together with $\{\K_\l  \}$ a family of separable  {complex (resp.\ real)} Hilbert spaces indexed by $\l\in X$. We also require that, to each $\xi\in \K$ there corresponds {linearly} a function $\l\mapsto \xi(\l) \in \K_\l$ such that 
\begin{enumerate}
	\item $\l\mapsto \langle\xi(\l),\eta(\l)\rangle$ is $\nu$-integrable and $\langle{\xi,\eta}\rangle=\int_X \langle\xi(\l),\eta(\l)\rangle d\nu(\l)$ for all $\xi, \eta \in \K$;
	\item if $\phi_\l \in \K_\l$ for all $\l$ and $\l\mapsto \langle{\phi_\l,\eta(\l)}\rangle$ is integrable for all $\eta\in \K$, there exists $\phi\in \K$ such that $\phi(\l)=\phi_\l$ for almost every $\l$.
\end{enumerate}
Under these assumptions $\K$ is the \textit{direct integral} of {the measurable field of Hilbert spaces} $\{ \K_\l\}$ over $(X,\nu)$, and we write
\begin{align*}
	\K=\int_X^{\oplus_{(\mathbb{R})}} \K_\l d\nu(\l),
\end{align*}
where we employ the subscript $\mathbb{R}$ when we consider direct integrals of real spaces. {The direct integral is uniquely determined, up to unitary equivalence, by $\{\K_\lambda\}$ and $\nu$. A $\xi \in \K$ corresponding as above to the function $\lambda \mapsto \xi(\lambda) \in \K_\lambda$, we will be written as $\xi = \int_X^\oplus \xi(\lambda)\,d\nu(\lambda)$.}

As a particular case, let $\mathcal{K}= \int_X^\oplus \mathcal{K}_0 d\nu$ be a direct integral Hilbert space over the constant field $\mathcal{K}_0$, where $\K$  is a separable Hilbert space. Then it holds that \cite[Proposition II.1.8.11, Corollary p. 175]{Dixmier81}
\begin{align}\label{integral=tensor-prod}
	\int_X^\oplus \mathcal{K}_0 d\nu(\l) \simeq L^2(X,\nu)\otimes \mathcal{K}_0.
\end{align}

\begin{lemma}\label{lem:propertiesstandard}\cite[Lemma B.3]{MT19}
	Given a separable Hilbert space $\cK$ which decomposes as $\cK=\int_X^\oplus \cK_\lambda d\nu(\lambda)$ then
	\begin{itemize}
		\item[(a)]
		given  $H=\int_X^{\oplus_\RR} H_\lambda d\nu(\lambda)\subset \cK$ such that  $H_\lambda\subset \cK_\lambda$ for all $\lambda \in X$,
		then $H'=\int_X^{\oplus_\RR} H'_\lambda d\nu(\lambda)$, $H^\prime$ being the symplectic complement;
		
		\item[(b)] let $\{H_k\}_{k\in\NN}$ be a countable family of real subspaces of $\cK$ such that  $H_k=\int_X^{\oplus_\RR} (H_k)_\lambda d\nu(\lambda)$
		and  $(H_k)_\lambda\subset \cK_\lambda$ {is a real subspace} for all $\lambda\in X$, then  $\bigcap_{k\in\NN} H_k=\int_X^{\oplus_\RR} \bigcap_{k\in\NN} (H_k)_\lambda d\nu(\lambda)$;
		
		\item[(c)]  Let $\{H_k\}_{k\in\NN}$ be a countable family of real subspaces of $\cK$ such that  $H_k=\int_X^{\oplus_\RR} (H_k)_\lambda d\nu(\lambda)$ and  $(H_k)_\lambda\subset \cK_\lambda$ {is a real subspace} for all $\lambda\in X$, then  $\overline{\Span_{k\in\NN} H_k}=\int_X^{\oplus_\RR} \overline{\Span_{k\in\NN} (H_k) }_\lambda d\nu(\lambda)$.
	\end{itemize}
\end{lemma}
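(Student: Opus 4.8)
\noindent\emph{Proof idea.} The plan is to reduce all three items to a single \emph{localization} step for direct integrals: a closed real subspace $H\subseteq\cK$ that decomposes as $H=\int_X^{\oplus_\RR}H_\lambda\,d\nu(\lambda)$ is stable under multiplication by the characteristic function $\chi_S$ of any measurable $S\subseteq X$, and an $L^1$ function with vanishing integral over every measurable set vanishes $\nu$-a.e. Throughout I would fix, for each subspace occurring in the statement, a fundamental sequence of measurable sections whose pointwise closed real spans recover the fibers for $\nu$-a.e.\ $\lambda$; this is where the (implicit) measurability hypothesis on the fields $\{H_\lambda\}$, $\{(H_k)_\lambda\}$ enters.

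For (a) I would first rewrite the symplectic complement as a genuine orthogonal complement in the real Hilbert space $(\cK,\Re\langle\cdot,\cdot\rangle)$: from $\Im\langle k,h\rangle=-\Re\langle k,ih\rangle$ one gets $H'=(iH)^{\perp_{\RR}}$. Since multiplication by $i$ is decomposable, $iH=\int_X^{\oplus_\RR}iH_\lambda\,d\nu$, and I would then invoke the standard fact that the real-orthogonal complement of a decomposed closed real subspace is computed fiberwise; this yields at once that $\{H_\lambda'\}$ is a measurable field and that $H'=\int_X^{\oplus_\RR}H_\lambda'\,d\nu$. Alternatively one can argue directly: $\supseteq$ is immediate from $\Im\langle k,h\rangle=\int_X\Im\langle k(\lambda),h(\lambda)\rangle\,d\nu$ when $k(\lambda)\in H_\lambda'$ and $h(\lambda)\in H_\lambda$ a.e.; for $\subseteq$ one applies the localization step to $\Im\langle k,\chi_S h_n\rangle=0$ for each generating section $h_n$ of $H$ and takes the union of the countably many resulting null sets.

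For (b) the argument is immediate: $\xi\in\bigcap_k H_k$ iff $\xi(\lambda)\in(H_k)_\lambda$ for $\nu$-a.e.\ $\lambda$ and every $k$, and since $k$ ranges over a countable set the union of the exceptional null sets is null, so $\xi(\lambda)\in\bigcap_k(H_k)_\lambda$ a.e.; measurability of the field $\{\bigcap_k(H_k)_\lambda\}$ is inherited from that of the $\{(H_k)_\lambda\}$. For (c) I would use two elementary properties of real subspaces — a closed real subspace equals its double symplectic complement, and the symplectic complement sends closed real spans to intersections, $(\overline{\Span_k H_k})'=\bigcap_k H_k'$ — and then chain (a) and (b): $\overline{\Span_k H_k}=(\bigcap_k H_k')'=\big(\int_X^{\oplus_\RR}\bigcap_k(H_k)_\lambda'\,d\nu\big)'=\int_X^{\oplus_\RR}\big(\bigcap_k(H_k)_\lambda'\big)'\,d\nu=\int_X^{\oplus_\RR}\overline{\Span_k(H_k)_\lambda}\,d\nu$, the last equality being the fiberwise instance of the same two properties.

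The hard part is not the set-theoretic identities, which become routine once localization is available, but the measurability bookkeeping: pinning down the precise notion of a measurable field of real subspaces (via a fundamental sequence of measurable sections) and checking that $\{H_\lambda'\}$, $\{\bigcap_k(H_k)_\lambda\}$ and $\{\overline{\Span_k(H_k)_\lambda}\}$ are themselves measurable fields, so that the right-hand sides are legitimate direct integrals and the localization step can be invoked. I expect this to be the only point requiring genuine care.
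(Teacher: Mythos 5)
There is nothing to compare against inside the paper: the lemma is imported verbatim from \cite[Lemma B.3]{MT19} and no proof is given here, so your argument can only be judged on its own merits. On those merits it is essentially correct and follows the standard route. The localization step (stability of a decomposed subspace under multiplication by $\chi_S$, plus the fact that an $L^1$ function whose integral vanishes on every measurable set is $\nu$-a.e.\ zero) is exactly what makes the inclusion $H'\subseteq\int_X^{\oplus_\RR}H'_\lambda\,d\nu$ work, and the reduction of (c) to (a) and (b) via $H''=H$ for closed real subspaces and $\bigl(\overline{\Span_k H_k}\bigr)'=\bigcap_k H_k'$ is sound (note that multiplication by $i$ is a real-orthogonal map, so $H''=(H^{\perp_\RR})^{\perp_\RR}=H$). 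Two caveats. First, the ``standard fact'' that the real-orthogonal complement of a decomposed closed real subspace decomposes fiberwise is logically the same statement as (a) itself (with $iH$ in place of $H$), so invoking it is circular unless you prove it — e.g.\ by observing that the real-orthogonal projection onto $H$ commutes with all multiplications by characteristic functions and is therefore decomposable, with fibers the projections onto $H_\lambda$; your direct alternative argument via $\Im\langle k,\chi_S h_n\rangle=0$ covers this, and is the part to keep. Second, that argument needs the existence of a countable family $\{h_n\}\subset H$ whose fiberwise closed real spans equal $H_\lambda$ for $\nu$-a.e.\ $\lambda$; this does follow from separability (take $\{h_n\}$ dense in $H$ and compare the measurable field generated by the $h_n(\lambda)$ with $\{H_\lambda\}$ through their projections), but it is a step you should spell out rather than treat as purely implicit, since it is precisely the measurability bookkeeping you correctly identify as the only delicate point.
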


A (possibily unbounded) operator $T$ on $\K = \int^{\oplus}_X \K_\lambda d\nu(\lambda)$ is said to be \textit{decomposable} if {there exist} (possibly unbounded) operators $T_\l$ on $\K_\lambda$ {, $\lambda \in X$, such that} for each $\xi \in D(T)$, one has $\xi(\lambda) \in D(T_\lambda)$ and $T_\l \xi(\l)=(T\xi)(\l)$ for $\nu$-almost every $\l$. {This is denoted by $T = \int_X^\oplus T_\l d\nu(\l)$.} In addition, if there exists $f\in L^\infty(X,\nu)$ such that $T_\l =f(\l)1$, we say that $T$ is \textit{diagonalizable}.

Let $G$ be a locally compact group and let $\pi$ be a strongly continuous unitary representation of $G$ on $\cK=\int^\oplus_X \cK_\lambda d\nu(\lambda)$. Suppose that, for each $g\in G$, $\pi(g)=\int^\oplus \pi_\lambda (g) d\mu(\lambda)$, then we say that $\pi$ is the \textit{direct integral} of $\pi_\lambda $, $\lambda\in X$: 
\begin{align*}
	\pi=\int^\oplus_X \pi_\lambda d\nu(\lambda).
\end{align*}
Equivalently, $\pi$ is a direct integral if each $\pi(g), \ g\in G$, is decomposable. Furthermore, if $H=\int_X^{\oplus_\RR} H_\lambda d\nu(\lambda)$ {is a standard subspace of $\K = \int_X^\oplus \K_\lambda d\nu(\lambda)$, with $H_\lambda$ standard in $\K_\lambda$, $\lambda \in X$}, we can apply these constructions to the modular data introduced in Section \ref{Sec: Standard Subspaces}:
$$\Delta^{it}_H=\int_X^\oplus\Delta^{it}_{H_\lambda}d\nu(\lambda),\quad J_H=\int_X^\oplus J_{H_\lambda}d\nu(\lambda),$$
where $( J_{H_\lambda},\Delta^{it}_{H_\lambda})$ are respectively the modular conjugation and the modular group associated to $H_\lambda $ as a standard subspace of $\cH_\lambda$. 
Given a Borel function $f:\RR_+\rightarrow\RR$, it holds that
	$$f(\Delta_H)=\int_{X}^\oplus f(\Delta_{H_\lambda})d\nu(\lambda),$$  
which entails in particular
	\begin{equation}\label{eq:logint}
		\log(\Delta_H)=\int_{X}^\oplus \log(\Delta_{H_\lambda})d\nu(\lambda),
		\end{equation}  
	see \cite[Thm.~A.1]{MTW}.

{In this context, it is worth observing the following decomposition result for the relative entropy of coherent states on a second quantization von Neumann algebra (cf.\ App.~\ref{app:secondquant}), whose proof is given in \cite[Lem.\ 5.1]{MTW}.

\begin{lemma}\label{lem:decomposableentropy}
Consider a standard subspace $H=\int_X^{\oplus_\RR}H_\lambda d\nu(\lambda)\subset \int_X^{\oplus} \K_\lambda d\nu(\lambda)=\K$, where $H_\lambda$ is standard in $\K_\lambda$, $\lambda \in X$. Given $\psi=\int_X^\oplus \psi(\lambda)d\nu(\lambda) \in \K$, the relative entropy between the coherent state $\omega_\psi$ and the Fock vacuum $\omega$ on the second quantization von Neumann algebra $R(H)$ can be written as
 $$S_{R(H)}(\omega_\psi\|\omega)=\int_X S_{R(H_\lambda)}(\omega_{\psi(\lambda)} \|\omega(\lambda))d\nu(\lambda)\,,$$ 
 where $\omega(\lambda)$ denotes the vacuum on the Fock space associated to $\K_\lambda$.
\end{lemma}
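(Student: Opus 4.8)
The plan is to reduce the statement to a first-quantization identity and then to decompose the relevant self-adjoint operator fiberwise. By Equation~\eqref{eq:Ent2}, applied once on $\cH = \K$ and once on each fiber $\K_\lambda$, we have $S_{R(H)}(\omega_\psi\|\omega) = S_H(\psi)$ and $S_{R(H_\lambda)}(\omega_{\psi(\lambda)}\|\omega(\lambda)) = S_{H_\lambda}(\psi(\lambda))$ for almost every $\lambda$. Hence it suffices to establish the one-particle decomposition
\begin{equation}\label{eq:entdecomp}
S_H(\psi) = \int_X S_{H_\lambda}(\psi(\lambda))\,d\nu(\lambda).
\end{equation}
Since the formula~\eqref{Sf} for the entropy of a vector requires factoriality, I would first observe, using Lemma~\ref{lem:propertiesstandard}(a),(b), that $H' = \int_X^{\oplus_\RR} H'_\lambda\,d\nu(\lambda)$ and $H\cap H' = \int_X^{\oplus_\RR}(H_\lambda\cap H'_\lambda)\,d\nu(\lambda)$, so that $H$ is factorial as soon as almost every $H_\lambda$ is; in the application of interest the fibers are the $H^{(1)}(C(\theta,\varphi),+\infty)$ attached to half-lines, which are factorial by Haag duality (Remark~\ref{rem:propertiescurrent}). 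The general standard case is handled identically via the extension of the entropy in~\cite{CLRR22}.

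Next I would recall that $S_H$ is the quadratic form of the positive, real-linear self-adjoint operator $T_H := P_H\, i\log\Delta_H$, and likewise $S_{H_\lambda}$ is the form of $T_{H_\lambda} := P_{H_\lambda}\,i\log\Delta_{H_\lambda}$. The heart of the argument is to show that $T_H$ is decomposable with
\begin{equation}\label{eq:opdecomp}
T_H = \int_X^{\oplus} T_{H_\lambda}\,d\nu(\lambda).
\end{equation}
For the modular Hamiltonian this is precisely Equation~\eqref{eq:logint}, namely $\log\Delta_H = \int_X^\oplus \log\Delta_{H_\lambda}\,d\nu(\lambda)$. For the cutting projection I would argue that $P_H = \int_X^\oplus P_{H_\lambda}\,d\nu(\lambda)$: the right-hand side is well defined since $\lambda\mapsto P_{H_\lambda}$ is a measurable field, built from the measurable fields $\lambda\mapsto H_\lambda$ and $\lambda\mapsto H_\lambda'$; and it coincides with the cutting projection of $H$ because, by the direct-integral decompositions of $H$ and $H'$, it acts as the identity on $H = \int^\oplus H_\lambda$ and annihilates $H' = \int^\oplus H_\lambda'$, whence uniqueness of the cutting projection attached to the pair $(H,H')$ forces equality. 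Since each $P_{H_\lambda}$ commutes with $\Delta_{H_\lambda}^{is}$, the fiber product $P_{H_\lambda}\,i\log\Delta_{H_\lambda}$ is again real self-adjoint and positive, and assembling the two decomposable factors yields~\eqref{eq:opdecomp}.

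Finally, once~\eqref{eq:opdecomp} holds with positive fibers $T_{H_\lambda}\geq 0$, Equation~\eqref{eq:entdecomp} follows from the standard direct-integral description of the quadratic form of a positive decomposable operator: a vector $\psi = \int^\oplus \psi(\lambda)\,d\nu(\lambda)$ lies in the form domain of $T_H$ if and only if $\psi(\lambda)$ lies in the form domain of $T_{H_\lambda}$ for almost every $\lambda$ and $\lambda\mapsto \langle\psi(\lambda), T_{H_\lambda}\psi(\lambda)\rangle$ is integrable, in which case the forms add up as integrals; otherwise both sides of~\eqref{eq:entdecomp} equal $+\infty$, consistently with the convention for $S_H$. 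Taking imaginary parts as in~\eqref{Sf} then gives~\eqref{eq:entdecomp}. I expect the main obstacle to be exactly the rigorous justification of~\eqref{eq:opdecomp} as an identity of unbounded operators: one must control the domain of the product $P_H\, i\log\Delta_H$ of two generally unbounded decomposable operators, verify it agrees fiberwise up to a null set, and match the two form domains so that the $+\infty$ convention is respected. The measurability of $\lambda\mapsto P_{H_\lambda}$ and the interchange of imaginary part with the integral are the delicate points; the rest is bookkeeping with the decompositions already supplied by Lemma~\ref{lem:propertiesstandard} and Equation~\eqref{eq:logint}.
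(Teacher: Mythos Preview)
The paper does not give its own proof here but defers to \cite[Lem.~5.1]{MTW}; your outline---reduce to the first-quantization entropy $S_H(\psi)$ via~\eqref{eq:Ent2}, decompose $\log\Delta_H$ by~\eqref{eq:logint} and the cutting projection $P_H$ fiberwise, and then integrate the quadratic form~\eqref{Sf} over $X$---is exactly the route taken in that reference. You have also correctly isolated the only genuinely delicate steps (decomposability of the closed, unbounded, real-linear $P_H$ and domain matching for the product of two unbounded decomposable operators); in the cited literature these are handled using the explicit closed formula for $P_H$ in terms of $\Delta_H$ and $J_H$ from~\cite{CLR20}, which is cleaner than the abstract uniqueness-on-$H+H'$ argument you sketch, but the strategy is the same.
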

}

\section{Second quantization}\label{app:secondquant}
In this short appendix, we consider a complex Hilbert space $\cH$ and we recall some basic facts concerning the construction of the Weyl algebra associated to it. Our starting point is the Bosonic Fock space is $\mathcal{F}(\cH)=\bigoplus_{n=1}^\infty\cH^{\otimes^n_s}\oplus \bC$ where $\otimes^n_s$ is the $n$-times symmetrized tensor product. On top of it, one can define the Weyl unitary operators
$$\cH\ni h\mapsto W(h)\in \cF(\cH),$$
satisfying
$$W(h)W(k)=e^{-\frac12\Im(h,k)}W(h+k),\quad \omega(W(h))=(\Omega,W(h)\Omega)=e^{-\frac14\|h\|^2},$$
where $\Omega=(1,0,\dots)\in\mathcal{F}(\cH)$ is referred to as vacuum vector. Given a real subspace $H\subset \cH$, we denote by 
$R(H)$ the von Neumann algebra generated by the Weyl operators as 
\begin{equation}\label{Eq: VN algebras}
	R(H)=\{W(h):h\in H\}''\subset\cB(\cF(\cH)).
\end{equation}

\noindent Note that, by the strong continuity of Weyl operators, 
\[
R(H) = R(\overline{H})\ .
\]
Moreover $\Omega$ is cyclic (resp.\ separating) for $R(H)$ if and only if $\overline{H}$ is cyclic (resp.\ separating). If $H$ is standard, see Section \ref{Sec: Standard Subspaces}, we denote by $S_{R(H),\Omega}$, $J_{R(H),\Omega}$, $\Delta_{R(H),\Omega}$ the modular operators associated with $(R(H),\Omega)$, and by
$\mathit{\Gamma}(T)$  {the operator on $\cF(\cH)$ obtained by the multiplicative second quantization} of a one-particle operator $T$ on $\cH$. {Then one has the following properties, whose proof can be found, e.g., in~\cite{Lo08}.}

\begin{proposition}\label{prop:secquant}
	Let $H$ and $\{H_a\}_{a\in I}$ be {standard} subspaces of $\cH$. It holds that
	\begin{itemize}
		\item[$(a)$] $S_{R(H),\Omega} = \mathit{\Gamma}(S_H)$, \ $J_{R(H),\Omega} = \mathit{\Gamma}(J_H)$, \ $\Delta_{R(H),\Omega}= \mathit{\Gamma}(\Delta_H)$, 
		\item[$(b)$] $R(H)' = R(H')$, where $H^\prime$ is the symplectic complement,
		\item[$(c)$] $R(\sum_a H_a) = \bigvee_a R(H_a)$,
		\item[$(d)$] $R(\cap_a H_a) = \bigcap_a R(H_a)$,
	\end{itemize}
	where $\bigvee$ denotes the generated von Neumann algebra.
\end{proposition}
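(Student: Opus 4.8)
The plan is to establish $(a)$ first and in a self-contained way, then deduce $(b)$ from it, prove $(c)$ by a direct density argument, and finally obtain $(d)$ from $(b)$ and $(c)$ by a commutation-theorem computation.

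For $(a)$, since $H$ is standard $\Omega$ is cyclic and separating for $R(H)$, so the modular data of $(R(H),\Omega)$ are well defined. The candidate objects are $\mathit{\Gamma}(\Delta_H)$ and $\mathit{\Gamma}(J_H)$. By functoriality of the multiplicative second quantization one has $\mathit{\Gamma}(\Delta_H^{it})\Omega=\Omega$ and $\mathit{\Gamma}(\Delta_H^{it})W(h)\mathit{\Gamma}(\Delta_H^{it})^{*}=W(\Delta_H^{it}h)$; since $\Delta_H^{it}H=H$, the group $t\mapsto\mathit{\Gamma}(\Delta_H^{it})$ fixes $\Omega$ and normalises $R(H)$. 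I would then check that the vacuum state $\omega=\langle\Omega,\,\cdot\,\Omega\rangle$ satisfies the KMS condition for $\operatorname{Ad}\mathit{\Gamma}(\Delta_H^{it})$ on the $\sigma$-strongly dense $*$-algebra spanned by the Weyl unitaries $W(h)$, $h\in H$; because $\omega(W(h))=e^{-\|h\|^2/4}$ is Gaussian this reduces to the one-particle analyticity of $t\mapsto\langle h,\Delta_H^{it}k\rangle$ for $h,k\in H$, which follows from the one-particle relation $\Delta_H^{1/2}h=J_Hh$. Uniqueness of the modular group then gives $\Delta_{R(H),\Omega}=\mathit{\Gamma}(\Delta_H)$. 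To identify the conjugation I compute the Tomita operator on the core of Weyl vectors: writing $W(h)\Omega$ as an unnormalised exponential vector, antilinearity of $\mathit{\Gamma}(S_H)$ together with $S_Hh=h$ ($h\in H$) gives $\mathit{\Gamma}(S_H)W(h)\Omega=W(-h)\Omega=W(h)^{*}\Omega=S_{R(H),\Omega}W(h)\Omega$. As $\mathit{\Gamma}(S_H)=\mathit{\Gamma}(J_H)\mathit{\Gamma}(\Delta_H)^{1/2}$ is a closed antilinear operator whose modulus is the already-identified $\Delta_{R(H),\Omega}^{1/2}$, both $S_{R(H),\Omega}$ and $\mathit{\Gamma}(S_H)$ have domain $\operatorname{dom}\mathit{\Gamma}(\Delta_H)^{1/2}$; comparing polar decompositions on this common domain yields $J_{R(H),\Omega}=\mathit{\Gamma}(J_H)$ and hence $S_{R(H),\Omega}=\mathit{\Gamma}(S_H)$.

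Granting $(a)$, part $(b)$ is immediate: by Tomita--Takesaki $R(H)'=J_{R(H),\Omega}R(H)J_{R(H),\Omega}=\mathit{\Gamma}(J_H)R(H)\mathit{\Gamma}(J_H)$, and conjugating the generators by the antiunitary $\mathit{\Gamma}(J_H)$ gives $\{W(J_Hh):h\in H\}''=R(J_HH)=R(H')$, using $J_HH=H'$. For $(c)$, the inclusion $\bigvee_aR(H_a)\subseteq R(\sum_aH_a)$ is isotony of $H\mapsto R(H)$. Conversely, any $h$ in the closed real span $\sum_aH_a$ is a limit of finite sums $\sum_jh_{a_j}$ with $h_{a_j}\in H_{a_j}$; by the Weyl relations $W(\sum_jh_{a_j})$ equals, up to a phase, a product $\prod_jW(h_{a_j})\in\bigvee_aR(H_a)$, and by strong continuity of $h\mapsto W(h)$ together with $R(\overline H)=R(H)$ we conclude $W(h)\in\bigvee_aR(H_a)$ for all $h\in\sum_aH_a$, whence $R(\sum_aH_a)\subseteq\bigvee_aR(H_a)$.

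Finally, $(d)$ is obtained by duality. The inclusion $R(\bigcap_aH_a)\subseteq\bigcap_aR(H_a)$ is trivial by isotony, so the content is the reverse one. Using the commutation theorem and $(b)$,
\[
\bigcap_aR(H_a)=\Big(\bigvee_aR(H_a)'\Big)'=\Big(\bigvee_aR(H_a')\Big)',
\]
and then $(c)$ and $(b)$ again give $\bigl(\bigvee_aR(H_a')\bigr)'=R\bigl(\textstyle\sum_aH_a'\bigr)'=R\bigl((\sum_aH_a')'\bigr)$. It remains to identify $(\sum_aH_a')'=\bigcap_aH_a$: the identity $(\sum_aK_a)'=\bigcap_aK_a'$ for the symplectic complement is immediate from continuity of $\Im\langle\cdot,\cdot\rangle$, and $(H_a')'=H_a$ since the double symplectic complement of a closed real subspace equals itself. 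Combining these gives $\bigcap_aR(H_a)=R(\bigcap_aH_a)$.

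The main obstacle is in $(a)$: matching the unbounded closed operators $S_{R(H),\Omega}$ and $\mathit{\Gamma}(S_H)$ exactly, rather than merely on a dense subspace, which requires the one-particle KMS analyticity and the verification that the Weyl vectors form a core for $S_{R(H),\Omega}$. A secondary point is that in $(d)$ one applies $(b)$ to the subspace $\sum_aH_a'$, which need not be standard, so one should invoke the form of the duality $R(K)'=R(K')$ valid for an arbitrary closed real subspace $K$.
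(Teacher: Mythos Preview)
The paper does not give its own proof of this proposition: it simply refers the reader to \cite{Lo08} (Longo's lecture notes). Your outline follows essentially the standard route found there and in \cite{LRT78,EO73,Ara63}: identify the modular group via a KMS/one-particle analyticity argument (or, equivalently, by computing the Tomita operator on exponential/Weyl vectors), deduce duality from Tomita--Takesaki, get $(c)$ from the Weyl relations and strong continuity, and then obtain $(d)$ by passing to commutants. So there is nothing to compare at the level of strategy.

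Your self-diagnosed caveats are the right ones. For $(a)$, the delicate step is the core property for $S_{R(H),\Omega}$; in the references this is handled either by the KMS argument you sketch or, more directly, by working on the linear span of exponential vectors $e^{h}$, $h\in H+iH$, which is a common core for both $\mathit{\Gamma}(\Delta_H)^{1/2}$ and $S_{R(H),\Omega}$. For $(d)$, you correctly note that you need $R(K)'=R(K')$ for a general closed real subspace $K$ (and likewise that the proof of $(c)$ does not use standardness). This more general duality is exactly Araki's theorem \cite{Ara63} (see also \cite[Ch.~2]{Lo08}), so the argument closes once you invoke it; the identity $K''=K$ for closed $K$, which you use, follows from $K'=iK^{\perp_\RR}$.
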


\section{Penrose Compactification of Minkowski space}\label{app:a}
{
In this appendix, we review succinctly the Penrose compactification of a four-dimensional, globally hyperbolic, spacetime $(M,g)$ such that $g=\chi^2\eta$ with $\chi\in \cC^\infty(\mathbb{R}^4,(0,\infty))$, hence providing an explicit example of the construction of $\scri$ sketched in Section \ref{sec:geometry}. We endow $M$ with the standard Cartesian coordinates $(t,\bx)\equiv (t,x_1,x_2,x_3)$ so that the line element reads
$$ds^2=\chi^2\left(-dt^2+dx_1^2+dx_2^2+dx_3^2.\right),$$ 
where we omit for simplicity of the notation the dependence of $\chi$ on the underlying coordinates, that is $\chi\equiv\chi(t,x_1,x_2,x_3)$. Switching to the standard spherical coordinates, namely 
\begin{equation}
\left\{ \begin{aligned} 
  x_1 &= r\cos\theta\sin\varphi\\
  x_2 &= r\sin\theta\sin\varphi\\
  x_3 &= r\cos\varphi
  \end{aligned} \right.
\end{equation}
where $r^2=\|\bx\|^2$ while $(\theta,\varphi)\in \mathbb{S}^2$, we obtain 
$$ds^2=\chi^2\left(-dt^2 + dr^2 + r^2(d\theta^2 + \sin^2\theta d \varphi^2)\right)$$ 
\noindent We introduce the advanced and retarded null coordinates 
\begin{equation}\label{eq:nullcoord}
\left\{ \begin{aligned} 
  u &= t-r\\
  v &= t+r
  \end{aligned} \right.
\end{equation}

\noindent In the new coordinates $(u,v,\theta,\varphi)$ the line element becomes 
$$ds^2 = \chi^2\left(-dudv + \frac{1}{4}(v-u)^2(d\theta^2 + \sin^2\theta d\varphi^2)\right)\,.$$ 
We rescale these new coordinates introducing $U,V \in (-\frac{\pi}{2},\frac{\pi}{2})$ such that 
\begin{equation}\label{Eq: U and V}
\left\{ \begin{aligned} 
  u &= \tan(U)\\
  v &= \tan(V)
  \end{aligned} \right. .
\end{equation}

\noindent With this choice we have $du=\frac{1}{\cos^2(U)}dU$, and analogously $dv=\frac{1}{\cos^2(V)}dV$; therefore
$$ds^2=\frac{\chi^2}{\cos^2(V)\cos^2(U)}\left(-dUdV + \frac{1}{4}\sin^2(V-U)(d\theta^2 + \sin^2\theta d\varphi^2) \right)\,.$$

\noindent In order to choose a conformal compactification of $M$, we set $\Omega(U,V,\theta,\varphi):=\chi^{-1}\cos(U)\cos(V)$ so that the line element associated to $\widetilde{g}:=\Omega^2 g$ reads
$$d\widetilde{s}^2=-dUdV + \frac{1}{4}\sin^2(V-U)(d\theta^2 + \sin^2\theta d\varphi^2)\,,$$
or, equivalently, introducing \begin{equation}
\left\{ \begin{aligned} 
  T&= V+U\\
  R &= V-U
  \end{aligned} \right. ,
\end{equation}
\begin{equation}\label{eq:einsteinstatic}
d\widetilde{s}^2=-dT^2+dR^2+\sin^2 R(d\theta^2 + \sin^2\theta d\varphi^2)\,.
\end{equation}
This allows to identify $(M,g)$ as a (conformal) subset of the Einstein static Universe. Observe that $R>0$ since $V-U=\arctan(v)-\arctan(u)$ and $v-u=2r>0$.}\\

\begin{lemma}\label{lem:limitcomponents}
Let $x=(t,\bx) \in M$ with $\bx=(x_1,x_2,x_3)$. Consider a unit vector $\bn\in\mathbb{S}^2$ and let $\xi_{\lambda,\bn}=(\lambda,\lambda\bn)$, $\lambda\in\bR$. Denote with $(u,\theta,\varphi)$ the coordinates on $\Im^+$. Then $$\Im^+ \ni \lim\limits_{\lambda\to +\infty} x+\xi_{\lambda,\bn}=(t-\bx\cdot\bn, \bn)\,.$$ 
\end{lemma}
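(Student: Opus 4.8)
The plan is to compute the limit directly in the compactifying coordinates of Appendix~\ref{app:a}. In Cartesian coordinates $x + \xi_{\lambda,\bn} = (t+\lambda,\bx+\lambda\bn)$, so in the spherical coordinates $(t,r,\theta,\varphi)$ its radial coordinate is $r_\lambda := |\bx+\lambda\bn| = \sqrt{\lambda^2 + 2\lambda\,\bx\cdot\bn + |\bx|^2}$ and its angular part is the direction of $\bx+\lambda\bn$. First I would expand, for $\lambda \to +\infty$, $r_\lambda = \lambda\sqrt{1 + 2\bx\cdot\bn/\lambda + |\bx|^2/\lambda^2} = \lambda + \bx\cdot\bn + O(\lambda^{-1})$.

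Passing to the null coordinates \eqref{eq:nullcoord}, the retarded coordinate of $x+\xi_{\lambda,\bn}$ is then $u_\lambda = (t+\lambda) - r_\lambda \to t - \bx\cdot\bn$, while the advanced one is $v_\lambda = (t+\lambda)+r_\lambda = 2\lambda + t + \bx\cdot\bn + O(\lambda^{-1}) \to +\infty$. At the same time $(\bx + \lambda\bn)/|\bx+\lambda\bn| = (\bn + \bx/\lambda)/|\bn + \bx/\lambda| \to \bn$, so the angular coordinates $(\theta_\lambda,\varphi_\lambda)$ converge to those of $\bn = \bn(\theta,\varphi)$. In the rescaled coordinates \eqref{Eq: U and V} this means $U_\lambda = \arctan(u_\lambda) \to \arctan(t-\bx\cdot\bn) \in (-\tfrac\pi2,\tfrac\pi2)$ and $V_\lambda = \arctan(v_\lambda) \to \tfrac\pi2$, so all of $(U_\lambda,V_\lambda,\theta_\lambda,\varphi_\lambda)$ converge and hence $x + \xi_{\lambda,\bn}$ converges in $\widetilde M$.

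To conclude, I would observe that, in the coordinates of Appendix~\ref{app:a}, $\scri$ is precisely the locus $\{V = \tfrac\pi2\}$ with $U \in (-\tfrac\pi2,\tfrac\pi2)$ finite (equivalently $\Omega = \chi^{-1}\cos U\cos V = 0$, $d\Omega \neq 0$), carrying the Bondi coordinates $(u,\theta,\varphi)$ with $u = \tan U = t - r$ read off in the limit $v \to +\infty$, cf.\ Remark~\ref{Rem: Trick Limit}. Since $U_\lambda$ converges to the finite value $\arctan(t-\bx\cdot\bn)$ and $V_\lambda \to \tfrac\pi2$, the limit point lies on $\scri$ and has Bondi coordinates $(t-\bx\cdot\bn,\theta,\varphi)$, i.e.\ $(t-\bx\cdot\bn,\bn)$, as claimed.

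The computation is elementary; the only point requiring care is the identification of the bulk null coordinate $u = t-r$ with the affine Bondi parameter along $\scri$, which is exactly the content of Remark~\ref{Rem: Trick Limit}. (At the angular poles $\bn = \pm(0,0,1)$ one should replace $(\theta,\varphi)$ by another chart on $\bS^2$, but convergence in $\widetilde M$ is independent of this choice.)
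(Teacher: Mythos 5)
Your proof is correct and follows essentially the same route as the paper's: a direct computation showing that the retarded null coordinate $u_\lambda=(t+\lambda)-|\bx+\lambda\bn|$ tends to $t-\bx\cdot\bn$ while the angular direction tends to $\bn$ (you expand the square root asymptotically where the paper rationalizes with the conjugate, an inessential difference). Your additional explicit check in the compactified coordinates $(U,V)$ that $V_\lambda\to\tfrac{\pi}{2}$ with $U_\lambda$ finite, so the limit indeed lies on $\scri$, merely spells out what the paper leaves implicit.
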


\begin{proof}
Denote $u_{x,\bn}(\lambda)$ the first coordinate of $x+\xi_{\lambda,\bn}$ in the null coordinates $(u,v)$, see Equation $\eqref{eq:nullcoord}$. Then, 
\begin{eqnarray}
\nonumber
u_{x,\bn}(\lambda)&:=&(t+\lambda) - \|\bx-\lambda\bn\|=(t+\lambda) - \sqrt{\sum_{i=1}^{3}(x_i + \lambda n_i)^2}\\
\nonumber
& = & (t+\lambda)-\sqrt{\|\bx\|^2 + \|\lambda\bn\|^2 + 2\lambda\,\bx\cdot \bn}\\
\nonumber
&=& \left((t+\lambda)^2-\|\bx\|^2 - \|\lambda\bn\|^2 - 2\lambda\,\bx\cdot \bn\right)\cdot \left((t+\lambda)+\sqrt{\|\bx\|^2 + \|\lambda\bn\|^2 + 2\lambda\,\bx\cdot \bn}\right)^{-1}\\
\nonumber
&=& \left(\lambda^2+2\lambda t + t^2 - \|\bx\|^2 - \lambda^2 - 2\lambda \bx\cdot \bn\right)\cdot \lambda^{-1}\cdot \left(\frac{t}{\lambda} +1 + \sqrt{\frac{\|\bx\|^2}{\lambda^2} + 1 + \frac{2}{\lambda}\bx\cdot \bn}\right)^{-1}\\
&=&\frac{2(t-\bx\cdot\bn) +\frac{t^2-\|\bx\|^2}{\lambda}}{\frac{t}{\lambda} +1 + \sqrt{\frac{\|\bx\|^2}{\lambda^2} + 1 + \frac{2}{\lambda}\bx\cdot \bn}}
\end{eqnarray}

\noindent Therefore 

\begin{equation}
\nonumber
\lim\limits_{\lambda\to+\infty}u_{x,\bn}(\lambda)=t - \bx\cdot\bn\,.
\end{equation}

\noindent Similar manipulations yield that also the angular components of $x+\xi_{\lambda,\bn}$ converge to those of $\bn$, completing the proof. 
\end{proof}


\begin{thebibliography}{99999999}

\bibitem[Ar63]{Ara63} H. Araki, 
{\it ``A lattice of von Neumann algebras associated with the quantum theory of a free Bose field'',} 
J. Math. Phys. {\bf 4} (1963) 1343--1362.

\bibitem[Ar76]{Ar} 
{ H. Araki},
\emph{Relative entropy of states of von Neumann algebras},
Publ. RIMS Kyoto Univ. 11, 809--833, (1976).

\bibitem[AW66]{AW66}
H.~Araki and E.~J. Woods,
\newblock Complete {B}oolean algebras of type {I} factors.
\newblock {\em Publ. Res. Inst. Math. Sci. Ser. A}, 2:157--242, 1966.

\bibitem[AS81]{AS} A.~Ashtekar, M.~Streubel,
{\it ``Symplectic geometry of radiative modes and conserved quantities at null infinity''},
Proc. R. Lond. A  {\bf 376} (1981) 585--607.

\bibitem[BGP07]{BGP}
C.~Baer, N.~Ginoux and F.~Pfaeffle,
Wave Equations on Lorentzian Manifolds and Quantization, (2007), European Mathematical Society, Z\"urich.
  

\bibitem[BDH13]{Benini:2013fia}
 M.~Benini, C.~Dappiaggi and T.~P.~Hack,
 {\it ``Quantum Field Theory on Curved Backgrounds -- A Primer,''}
 Int. J. Mod. Phys. A \textbf{28} (2013), 1330023.

\bibitem[BR79]{BR}
O.~Bratteli, D.~W.~Robinson,
Operator Algebras And Quantum Statistical Mechanics. Vol. 1: C* And W* Algebras,
Symmetry Groups, Decomposition Of States, (1979),
Springer-Verlag, New York.


\bibitem[BR66]{BR2} O.~Bratteli, D.~W.~Robinson,
Operator algebras and quantum statistical mechanics. Vol. 2: Equilibrium
states. Models in quantum statistical mechanics, 
(1966) Springer Berlin, Germany.


\bibitem[BDFY15]{Brunetti:2015} 
R.~Brunetti, C. Dappiaggi, K. Fredenhagen, J. Yngvason (eds.),
Advances in algebraic quantum field theory, (2015) Springer, 453p.

\bibitem[BDL07]{BDL07}
D.~Buchholz, C.~D'Antoni, R.~Longo,
{\em ``Nuclearity and thermal states in conformal field theory''},
Commun. Math. Phys. \textbf{270} (2007), 267--293.

\bibitem[Cas08]{Cas08}
H.~Casini,
\newblock {\em ``Relative entropy and the Bekenstein bound''}, 
\newblock Class. Quantum Gravity \textbf{25}, 205021 (2008). 

\bibitem[CTT17]{CTT17}
Horacio Casini, Eduardo Test\'{e}, and Gonzalo Torroba.
\newblock Markov property of the conformal field theory vacuum and the {$a$}
  theorem.
\newblock {\em Phys. Rev. Lett.}, 118(26):261602, 5, 2017.

\bibitem[CF18]{CF18}
Fikret Ceyhan and Thomas Faulkner.
\newblock Recovering the {QNEC} from the {ANEC}.
\newblock {\em Comm. Math. Phys.}, 377(2):999--1045, 2020.

\bibitem[CLR20]{CLR20}
F.~Ciolli, R.~Longo and G.~Ruzzi,
{\it``The information in a wave,''}
Commun. Math. Phys. \textbf{379} (2019) no.3, 979--1000.

\bibitem[CLRR22]{CLRR22} 
F.~Ciolli, R.~Longo, A.~Ranallo and G.~Ruzzi,
{\it``Relative entropy and curved spacetimes,''}
J. Geom. Phys. \textbf{172} (2022), 104416.

\bibitem[DMP05]{Dappiaggi:2005ci}
C.~Dappiaggi, V.~Moretti and N.~Pinamonti,
{\em ``Rigorous steps towards holography in asymptotically flat spacetimes,''}
Rev. Math. Phys. \textbf{18} (2006), 349--416.
 
\bibitem[DMP11]{Dappiaggi:2009fx}
C.~Dappiaggi, V.~Moretti and N.~Pinamonti,
{\em ``Rigorous construction and Hadamard property of the Unruh state in Schwarzschild spacetime,''}
Adv. Theor. Math. Phys. \textbf{15} (2011) no.2, 355--447.

\bibitem[DMP17]{Dappiaggi:2017kka}
C.~Dappiaggi, V.~Moretti and N.~Pinamonti,
{\em ``Hadamard States from Light-like Hypersurfaces''}
Springer Briefs in Mathematical Physics, {\bf 25}, (2017), Springer, 106p.

\bibitem[Di81]{Dixmier81}
J.~Dixmier,
von {N}eumann algebras, volume {\bf 27} of {\em North-Holland
  Mathematical Library} (1981) , North-Holland Publishing Co., Amsterdam.


\bibitem[EO73]{EO73} J.P. Eckmann, K. Osterwalder, 
{\it ``An application of Tomita’s theory of modular Hilbert algebras: duality for free Bose fields,''}
J. Funct. Anal. {\bf 13} (1973), 1--12.

\bibitem[FLPW16]{FLPW16}
T.~Faulkner, T, R.~G.~Leigh, O.~Parrikar, H.~Wang,
\newblock {\em ``Modular Hamiltonians for deformed half-spaces and the averaged null energy condition''},
\newblock JHEP \textbf{09}, 038 (2016).

\bibitem[Fr86]{Friedrich}
H.~Friedrich,
{\it ``On Purely Radiative Space-times,''}
Commun.\ Math.\ Phys.\  {\bf 103} (1986) 35--65.

\bibitem[Fr75]{Friedlander} F.G.~Friedlander, 
The wave equation on a curved space-time, (1975)
Cambridge University Press, Cambridge.

\bibitem[Ge77]{Geroch} R. Geroch, in: P. Esposito, L. Witten (Eds.) \emph{``Asymptotic Structure of Spacetime''}, (1977), Plenum, New York.

\bibitem[GLW98]{GLW98}
{D.~Guido, R.~Longo, H.-W.~Wiesbrock,
\newblock {\em ``Extensions of conformal nets and superselection structures''},
\newblock Commun.\ Math.\ Phys.\ \textbf{192} (1998), 217--244}.

\bibitem[HL25]{Hollands:2025glm}
S.~Hollands and R.~Longo,
{\em``A New Proof of the QNEC,''}
[arXiv:2503.04651 [hep-th]].

\bibitem[Ho00]{Ho00}
S.~Hollands, PhD thesis (2000), University of York.

\bibitem[KLLS18]{KLLS18}
Jason Koeller, Stefan Leichenauer, Adam Levine, and Arvin Shahbazi-Moghaddam.
\newblock Local modular {H}amiltonians from the quantum null energy condition.
\newblock {\em Phys. Rev. D}, 97(6):065011, 6, 2018.

\bibitem[LPM25]{LPM25}
F.~La~Piana, G.~Morsella
\newblock {\em ``The fermionic massless modular Hamiltonian''},
\newblock Commun. Math. Phys., \textbf{406}: n. 81, 2025.

\bibitem[LRT78]{LRT78} P. Leyland,  J. Roberts, D. Testard, 
{\it ``Duality for quantum free fields'',}
Unpublished manuscript, Marseille (1978).

\bibitem[Lo08]{Lo08} R. Longo, {\it Real Hilbert subspaces, modular theory, 
SL(2, R) and CFT} 
in ``Von Neumann Algebras in Sibiu'',  Theta Ser. Adv. Math. {\bf 10} (2008), 33--91, Theta, Bucharest.

\bibitem[Lo19]{Llocst} 
R.~Longo,
{\it ``Entropy distribution of localised states,''}
Commun. Math. Phys. \textbf{373} (2019) no.2, 473--505.

\bibitem[LM23]{LM23}
R.~Longo and G.~Morsella,
{\em ``The Massless Modular Hamiltonian,''}
Commun. Math. Phys. \textbf{400} (2023) no.2, 1181--1201.

\bibitem[LM24]{LM24} R.~Longo and V.~Morinelli,
{\it ``An entropy bound due to symmetries,''}
Rev.~in Math.~Phys. (2024)

\bibitem[LMR16]{LMR16} 
R.~Longo, V.~Morinelli and K.~H.~Rehren,
{\it``Where Infinite Spin Particles Are Localizable,''}
Commun. Math. Phys. \textbf{345} (2016) no.2, 587--614.

\bibitem[LX18]{LX18} Longo, R., and X.~Feng,  {\it Relative entropy in
    CFT}, Adv. Math. {\bf 337} (2018), 139--170.

\bibitem[Mo06]{Moretti:2006ks}
V.~Moretti,
{\em ``Quantum ground states holographically induced by asymptotic flatness: Invariance under spacetime symmetries, energy positivity and Hadamard property,''}
Commun. Math. Phys. \textbf{279} (2008), 31--75.

\bibitem[MT19]{MT19} 
V.~Morinelli and Y.~Tanimoto,
{\it ``Scale and M\"obius Covariance in Two-Dimensional Haag\textendash{}Kastler Net,''}
Commun. Math. Phys. \textbf{371} (2019) no.2, 619--650.
    
\bibitem[MTW22]{MTW}
V.~Morinelli, Y.~Tanimoto and B.~Wegener,
{\it ``Modular Operator for Null Plane Algebras in Free Fields,''}
Commun. Math. Phys. \textbf{395} (2022) no.1, 331--363.

{\bibitem[Nap71]{Napiorkowski71}
K.~Napi\'{o}rkowski,
\newblock Continuous tensor products of {H}ilbert spaces and product operators.
\newblock {\em Studia Math.}, \textbf{39}, 307--327. (errata insert), 1971.}

\bibitem[PT09]{PT09}
L.~E.~Parker and D.~J.~Toms,
\newblock Quantum Field Theory in Curved Spacetime
\newblock (2009),
\newblock Cambridge University Press,
\newblock Cambride, UK.


\bibitem[Pe63]{Penrose} R. Penrose, 
\emph{``Asymptotic Properties of Space and Time''} 
Phys. Rev. Lett. {\bf 10} (1963) 66--68.

\bibitem[Pe74]{Penrose2} R. Penrose, in: A.O. Barut (Ed.), \emph{``Group Theory in Non-Linear Problems''}, (1974) Reidel, Dordrecht.

\bibitem[Pi09]{Pinamonti:2009zqj}
N.~Pinamonti,
{\em``Conformal generally covariant quantum field theory: The Scalar field and its Wick products,''}
Commun. Math. Phys. \textbf{288} (2009), 1117--1135.

\bibitem[Rad96a]{Radzikowski:1996pa}
M.~J.~Radzikowski,
{\it ``Micro-local approach to the Hadamard condition in quantum field theory on curved space-time,''}
Commun. Math. Phys. \textbf{179}, 529--553 (1996).

\bibitem[Rad96b]{Radzikowski:1996ei}
M.~J.~Radzikowski,
{\it``A Local to global singularity theorem for quantum field theory on curved space-time,''}
Commun. Math. Phys. \textbf{180}, 1--22 (1996).

\bibitem[Sa62]{sachsa}
R. Sachs, \emph{``Asymptotic symmetries in gravitational 
theory''}, 
Phys. Rev. {\bf 128} (1962) 2851--2864.


\bibitem[Wa84]{Wald} 
R.~M.~Wald, 
General Relativity, (1984),
Chicago University Press, Chicago,


\bibitem[Wal12]{Wal12}
A.~C.~Wall, 
\newblock {\em ``Proof of the generalized second law for rapidly changing fields and arbitrary horizon slices''},
\newblock Phys. Rev. D \textbf{85}, 104049 (2012). 



\bibitem[Wi18]{Wit18} Witten, E., APS Medal for Exceptional Achievement in Research: Invited article on entanglement properties of quantum field theory, Reviews of Modern Physics, 90, (4) (2018),


\end{thebibliography}
\end{document}